 \theoremstyle{plain}
  \newtheorem{thm}{Theorem}
  \newtheorem{lemma}[thm]{Lemma}
  \newtheorem{cor}[thm]{Corollary}
  \newtheorem*{thm*}{Theorem}
  \newtheorem*{prop*}{Proposition}
  \newtheorem*{lemma*}{Lemma}
  \newtheorem*{cor*}{Corollary}
  \newtheorem*{remark*}{Remark}
\theoremstyle{definition}
\newtheorem{defn}[thm]{Definition}
 \newtheorem*{conj*}{Conjecture}
\newenvironment{customlemma}[1]
  {\innercustomlem}
  {\endinnercustomlem}
\newenvironment{customthm}[1]
  {\innercustomthm}
  {\endinnercustomthm}
\newcommand{\Tr}{\mathrm{Tr}}
\newcommand{\sket}[1]{{\ensuremath{\lvert#1\rangle}}}
\newcommand{\lket}[1]{{\ensuremath{\left\lvert#1\right\rangle}}}
\newcommand{\ket}[1]{\mathchoice{\lket{#1}}{\sket{#1}}{\sket{#1}}{\sket{#1}}}
\newcommand{\sbra}[1]{{\ensuremath{\langle#1\rvert}}}
\newcommand{\lbra}[1]{{\ensuremath{\left\langle#1\right\rvert}}}
\newcommand{\bra}[1]{\mathchoice{\lbra{#1}}{\sbra{#1}}{\sbra{#1}}{\sbra{#1}}}
\newcommand{\sketbra}[2]{{\ensuremath{\lvert #1\rangle\langle #2\rvert}}}
\newcommand{\lketbra}[2]{{\ensuremath{\left\lvert #1\middle\rangle\middle\langle #2\right\rvert}}}
\newcommand{\ketbra}[2]{\mathchoice{\lketbra{#1}{#2}}{\sketbra{#1}{#2}}{\sketbra{#1}{#2}}{\sketbra{#1}{#2}}}
\begin{document}

\title{Device-independent Certification of\\
One-shot Distillable Entanglement}

\author[1]{Rotem Arnon-Friedman}
\author[2]{Jean-Daniel Bancal}
\affil[1]{Institute for Theoretical Physics, ETH-Z\"urich, CH-8093, Z\"urich, Switzerland}
\affil[2]{Department of Physics, University of Basel, CH-4056, Basel, Switzerland}

\date{\empty}

\maketitle

\begin{abstract}
	Entanglement sources that produce many entangled states act as a main component in applications exploiting quantum physics such as quantum communication and cryptography. 
	Realistic sources are inherently noisy, cannot run for an infinitely long time, and do not necessarily behave in an independent and identically distributed manner. 
	An important question then arises -- how can one test, or certify, that a realistic source produces high amounts of entanglement? 
	Crucially, a meaningful and operational solution should allow us to certify the entanglement which is available for further applications after performing the test itself (in contrast to assuming the availability of an additional source which can produce more entangled states, identical to those which were tested).
	To answer the above question and lower bound the amount of entanglement produced by an uncharacterised source, we present a protocol that can be run by  interacting classically with uncharacterised (but not entangled to one another) measurement devices used to measure the states produced by the source. A successful run of the protocol implies that the remaining quantum state has high amounts of one-shot distillable entanglement. That is, one can distill many maximally entangled states out of the single remaining state. Importantly, our protocol can tolerate noise and, thus, certify entanglement produced by realistic sources.
	With the above properties, the protocol acts as the first ``operational device-independent entanglement certification protocol'' and allows one to test and benchmark uncharacterised entanglement sources which may  be otherwise incomparable. 
\end{abstract}

\section{Introduction}\label{sec:intro}

	Entanglement is one of the most fundamental concepts of quantum physics, distinguishing it from classical physics~\cite{horodecki2009quantum}. Furthermore, it plays a crucial role in the advantages gained by considering applications of quantum physics such as quantum computation~\cite{williams2010explorations}, communication~\cite{de2002quantum}, and cryptography~\cite{broadbent2016quantum}.
	
	For most applications utilising entanglement, a single entangled pair of particles, e.g., a maximally entangled state $\ket{\Phi^+} = \frac{1}{\sqrt{2}}\left(\ket{00}+\ket{11}\right)$, is not sufficient. Instead, one must use many copies of entangled states or, to put differently, a highly entangled state, such as the maximally entangled state  $\ket{\Phi^L}= \frac{1}{\sqrt{L}}\sum_{i=1}^L \ket{i}\ket{i}$ of rank~$L$. Sources that produce high amounts of entanglement are, thus, a prerequisite for gaining from the computational and cryptographic advantages that quantum physics and quantum information unveil. 
	
	Three interesting questions then arise. Firstly, how should one \emph{quantify} the amount of entanglement produced by a source? Secondly, how can one \emph{compare}, or \emph{benchmark}, different  types of entanglement sources? Thirdly, how can one test, or \emph{certify}, that high amounts of entanglement are indeed being produced by the source? 
	
	Importantly, we would like to answer these questions in an operational way. That is: (a)~the suggested answer should be relevant for realistic sources and experimental settings. (b)~When considering possible certification procedures, the statement should apply to the entanglement which is still present after the test rather than the entanglement which was already used and, hence, destroyed by the performed certification procedure itself. 

	The current work is concerned with certifying entanglement in such an operational way. 
	Realistic sources are inherently noisy: they produce at best noisy entangled states. Moreover, a source creating  many entangled pairs might produce systems that are correlated with one another due to, e.g., drifting of the source with time or the presence of a memory inside the source. In other words, the overall state produced by the source (after using it many times) is not an independent and identically distributed (IID) state. We call such sources \emph{noisy non-IID sources}.

	Clearly, in order to make any estimation of the quality of the source one must collect some data regarding the produced entanglement by performing certain measurements on the quantum states produced by the source. Similarly to the considerations regarding the source, the measurement devices might also behave in a noisy and non-IID manner. 

	\subsection{Device-independent entanglement certification (DIEC)}

		We fix as a target to answer our questions while accommodating the noisy non-IID nature of the problem in the so called \emph{device-independent} (DI) framework~\cite{scarani2012device}. In the DI approach one treats the quantum apparatuses as black boxes with which we can only interact classically. That is, we assume no prior knowledge regarding the internal behaviour of the source and the measurement devices. Assumptions are made, however, on the relation that certain devices can have with each other, and on the communication allowed between them. Concretely, we only interact with those devices by ``pushing buttons'' and collecting the classical data output by them. For example, we may push a button on the source apparatus to produce a state and then another button on the measurement device to ask it to perform a measurement with ``input'' $0$ or $1$. The state produced by the source is uncharacterised and we do not know which measurements are actually being performed when we use the inputs $0$ and $1$.

		As widely known, the only way to demonstrate that the actions of physical devices cannot be explained by classical physics in a DI manner is to perform some Bell tests using the devices and observe a violation of the considered Bell inequality~\cite{bell1964,brunner2014bell}. A Bell violation acts as a ``witness'' attesting to the quantum (in fact, non-local) nature of the devices used to violate the inequality. This can then be used to derive conclusions regarding, e.g., the structure of the underlying quantum states~\cite{summers1987bell,popescu1992states,coladangelo2016all} or the randomness of the measurements' outcomes~\cite{colbeck2009quantum,vazirani2012certifiable}.
		
		We remark that by using the DI method we do not only treat imperfections in the quantum apparatuses which are known or can be characterised in advance. The DI approach allows one to derive conclusions without making assumptions regarding the types of imperfections. Even more drastically, the devices can be assumed to be ``malicious''\footnote{``Malicious devices'' should be understood here as devices that ``try to convince'' a verifier that they produce highly entangled states while this is not the case. This is relevant, e.g., in the context of benchmarking, where one may argue, intentionally or not, that a given source is better than another even though this is not true.}; as long as a violation of a Bell inequality is observed, the observer, or verifier, can be sure of the quantum nature of the systems without placing significant trust in the manufacture of the devices.

		Most previous works, both theoretical and experimental, that can be seen as DI entanglement certification (DIEC) procedures work only under the IID assumption and thus fail to be operational in the sense defined above.\footnote{This statement also holds for works in the semi-DI setting; see, e.g.,~\cite{cavalcanti2016quantum,mccutcheon2016experimental}.} This includes tests concerned with the demonstration of entanglement via an entanglement witness (i.e. answering a yes-no question)~\cite{guehne2009entanglement,barreiro2013device,bancal2014device}, as well as more quantitative analyses of entanglement measures such as  the negativity  and dimension witness~\cite{moroder2013device}. In all of these works, the focus is on relating properties of a single multipartite state to some asymptotic statistics. An application of these works in an experimental setting is then most straightforwardly obtained under the assumption that the experiment consists in a repetition of identical rounds in which the same state is produced independently each time, i.e., under the IID assumption. The same assumption is inherent to claims regarding the amount of remaining entanglement which was not consumed by the estimation procedure: such claims rely on the assumption that more states, identical to those which were used for the testing phase, can be created by the source.

		Another line of recent works deals with self-testing of high dimensional entangled states~\cite{mckague2016self,chao2016test,coladangelo2017parallel,coudron2016parallel,natarajan2017quantum,coladangelo2017robust}. The goal of such works is more ambitious than entanglement certification; they aim to quantify the distance of the state used to perform the relevant tests from some specific target state, e.g., $\ket{\Phi^L}$ (up to local isometries, which cannot be excluded in the DI setting). 
		Of course, once such a bound is derived it also implies lower bounds on various (continuous) entanglement measures. 
		
		A crucial disadvantage of these works is that they are not noise-tolerant in any realistic sense and therefore are not adequate when dealing with noisy sources of entanglement. For example, IID noisy sources, which produce many IID copies of a noisy entangled state, e.g., the two-qubit Werner state $\sigma = (1-\xi) \ket{\phi^+}\bra{\phi^+} + \xi\mathbb{I}/4$ for some constant (i.e., independent of the number of copies being created) noise value $\xi\in[0,1]$, pass the considered self-tests only with negligible probability. Thus, no conclusion regarding the amount of entanglement produced by such IID noisy sources can be derived.\footnote{Note that this issue is inherent to the distance measures used in all self-testing works (i.e., their objective) rather than some non-optimal properties of the specific tests considered in the mentioned works.} Furthermore, here as well, most self-testing results can only be used to describe the entanglement which was already consumed in the self-test. As far as we are aware, the only self-testing works where this is not the case are~\cite{chao2016test,natarajan2017quantum}.

	\subsection{Distillable entanglement}

		There are many different ways of quantifying entanglement, some of which were already mentioned above (for surveys see~\cite{horodecki2009quantum,plenio2014introduction}). One of the most basic and meaningful measures of entanglement is the \emph{distillable entanglement}. 
		Roughly speaking, given a multipartite state, its distillable entanglement describes the number of maximally entangled states that can be ``extracted'' out of it by employing only local operations and classical communication (LOCC). Since LOCC cannot be used to create entanglement between the parties, such a process, termed  entanglement distillation, indeed quantifies the entanglement of the initial state itself. 
		
		The task of entanglement distillation was first considered in~\cite{bennett1996purification,bennett1996concentrating,bennett1996mixed}. There, two parties, Alice and Bob, share $n$ copies of a bipartite mixed states $\sigma$. Their goal is to apply some LOCC to create $r<n$ copies of, say, $\ket{\Phi^+}$.
		This motivates the following, widely used, definition of distillable entanglement:
		\begin{defn}[Asymptotic IID distillable entanglement]
			The asymptotic IID distillable entanglement of a bipartite state $\sigma\in\mathcal{H}_A \otimes \mathcal{H}_B$ is given by
			\begin{equation}\label{eq:asym_iid_E_D}
				E_D^{\infty}(\sigma) = \sup \left\{ r \big| \lim_{n\rightarrow\infty} \left(  \sup_{\Gamma} F \left( \Gamma(\sigma^{\otimes n}), \Phi^{2^{r n}} \right) =1 \right) \right\} \;,
			\end{equation}
			where $\Gamma$ is an LOCC map (with respect to the bipartition of $\sigma$) and $F$ is the fidelity.
		\end{defn}

		Equation~\eqref{eq:asym_iid_E_D} describes a scenario in which one starts with $n$ independent copies of $\sigma$ and requires that, as $n$ goes to infinity, the error of the distillation protocol goes to zero.
		This explains why we call it here the \emph{asymptotic IID} distillable entanglement and not simply the distillable entanglement as it is usually called in the literature.
		However, as we claimed above, the sources that we consider do not necessarily produce IID states $\sigma^{\otimes n}$ (and they, definitely, do not emit $n\rightarrow\infty$ entangled states). Hence, $E_D^{\infty}(\sigma)$ does not truly quantify the entanglement produced in our scenario. 
		
		For our purpose, a more suitable entanglement measure is the \emph{one-shot distillable entanglement}~\cite{buscemi2010distilling}.  
		In the one-shot scenario Alice and Bob share a single copy of a bipartite state $\rho$ and their goal is to convert it to the maximally entangled state $\ket{\Phi^L}$, for the maximal  possible value $L$, using only LOCC. 
		We say that the distillation protocol is successful when the resulting state is $\varepsilon$-close to $\ket{\Phi^L}$ for some fixed~$\varepsilon$. 
		
		To state the definition of the one-shot distillable entanglement in a way comparable to its asymptotic IID counterpart $E_D^{\infty}$, we consider $\rho\in \mathcal{H}_A^{\otimes n} \otimes \mathcal{H}_B^{\otimes n}$ for a fixed $n$ and some Hilbert spaces $\mathcal{H}_A$ and $\mathcal{H}_B$ (while $\rho$ itself does not necessarily have the IID form $\sigma^{\otimes n}$). We then identify the one-shot distillation rate as $r=\log(L)/ n$.\footnote{For a given $\rho$ there are different ways of choosing $n$, $\mathcal{H}_A$ and $\mathcal{H}_B$ such that $\rho\in \mathcal{H}_A^{\otimes n} \otimes \mathcal{H}_B^{\otimes n}$. Thus, different choices can lead to different distillation \emph{rates} $r=\log(L)/ n$ (while $n \; E_D^{n,\varepsilon}(\rho)$ is independent of these choices). Later on there will be no ambiguity regarding the value of $n$ and hence the used definition will be the one most relevant for us.} 
		We can now use the following definition:
		\begin{defn}[One-shot distillable entanglement]
			Let $n\in \mathbb{N_+} $ and $\varepsilon\in [0,1]$. The one-shot distillable entanglement of a bipartite state $\rho\in \mathcal{H}_A^{\otimes n} \otimes \mathcal{H}_B^{\otimes n}$ is given by
			\begin{equation}\label{eq:one_shot_E_D}
				E_D^{n,\varepsilon}(\rho) = \sup \left\{ \log(L)/ n \big| \sup_{\Gamma} F\left(   \Gamma(\rho),  \Phi^{L} \right) =1-\varepsilon  \right\} \;,
			\end{equation}
			where $\Gamma$ is an LOCC map (with respect to the bipartition of $\rho$) and $F$ is the fidelity.
		\end{defn}

		$E_D^{n,\varepsilon}(\rho)$ describes the number of maximally entangled states which can be extracted, using LOCC, from a \emph{single} copy of an arbitrary bipartite state $\rho$ while allowing for some error $\varepsilon$. Hence, it captures the amount of entanglement available in $\rho$ when using it as a resource in quantum information processing tasks.

		\paragraph*{Structure of the paper.} 
		The following sections are arranged as follows. 
		In Section~\ref{sec:contribution} we present our contribution: our definition of an operational DIEC, the considered setting, and our results (the protocol and achieved rates). 
		One can find all the necessary preliminary information and notation in Section~\ref{sec:prelim}.
		Section~\ref{sec:main_proofs} is devoted to presenting the main steps of the proof and Section~\ref{sec:open_quest} includes several remaining open questions.
		All the technical details of the proofs can be found in the appendix.

\section{Our contribution: operational DIEC}\label{sec:contribution}
	
	After setting the stage in the previous section, we are now ready to state the objective of the current work -- DI certification of one-shot distillable entanglement -- and our results. 
	We start in Section~\ref{sec:goal} by introducing and motivating our definition of an operational DIEC protocol. We then explain in Section~\ref{sec:setting} the exact setting of the source and measurement devices considered in our work. In Section~\ref{sec:results} we present our results. 
	
	\subsection{The goal}\label{sec:goal}
			
		Let us start by defining explicitly  what we mean by a DIEC protocol and, by this, set the goal of this work.
		Given an uncharacterised source of entanglement producing $n$ bipartite systems globally described by the  state $\phi\in\mathcal{H}_{\tilde{A}}^{\otimes n} \otimes \mathcal{H}_{\tilde{B}}^{\otimes n}$ for some (unknown) Hilbert spaces $\mathcal{H}_A$ and $\mathcal{H}_B$ and (at least) two measurement devices not entangled to one another but otherwise uncharacterised, our goal is to find a DIEC protocol, employing only LOCC, that certifies that $\phi$ is highly entangled in a meaningful operational way. The certification protocol is going to act on $\phi$, using the measurement devices, and we would  like to claim (roughly) that, if the protocol does not abort, then the \emph{final state} has high amount of one-shot distillable entanglement. 
		
		More precisely, we define a DIEC as follows.
		\begin{defn}[DIEC protocol]\label{def:DIEC}
			For any $n\in \mathbb{N_+} $ let $\varepsilon_{\mathrm{dist}},\varepsilon_{\mathrm{snd}},\varepsilon_{\mathrm{cmp}}\in [0,1]$ be the desired error constants, and $r\in [0,1]$ the desired threshold distillation rate. 
			Furthermore, let $\mathcal{S}^{\mathrm{honest}}$ be a set of states $\phi^{\mathrm{honest}}$ produced by a desired ``honest source'' and $\mathcal{D}^{\mathrm{honest}}$ the set of the desired ``honest measurement devices''.\footnote{Mathematically, this set can be defined as, e.g., a set of projectors applied by the measurement devices for the different possible inputs.}
			Let $\mathrm{P}$ be a protocol \emph{employing only} LOCC that given a state $\phi\in\mathcal{H}_{\tilde{A}}^{\otimes n} \otimes \mathcal{H}_{\tilde{B}}^{\otimes n}$ creates a state $\rho\in\mathcal{H}_{A}^{\otimes n} \otimes \mathcal{H}_{B}^{\otimes n}$. We denote the final state conditioned on not aborting the protocol by $\rho_{|\Omega}$.

			The protocol $\mathrm{P}$ is said to be a DIEC protocol if the following two properties hold:
			\begin{enumerate}
				\item Noise-tolerance (completeness): The probability that $\mathrm{P}$ aborts when applied on any $\phi^{\mathrm{honest}}\in\mathcal{S}^{\mathrm{honest}}$ using honest measurement devices from $\mathcal{D}^{\mathrm{honest}}$  is at most~$\varepsilon_{\mathrm{cmp}}$.
				\item Entanglement certification (soundness): For any source and measurement devices either $\mathrm{P}$ aborts with probability greater than $1-\varepsilon_{\mathrm{snd}}$ when applied on $\phi$ or $E_D^{n,\varepsilon_{\mathrm{dist}}}(\rho_{|\Omega}) \geq r$.
			\end{enumerate}
		\end{defn}	
		\hspace{1pt}
		
		There are several important remarks to make regarding the above definition. 
		\begin{enumerate}
			\item One possible example for an honest source is a source producing $n$ independent copies of the Werner state $\sigma = (1-\xi) \ket{\phi^+}\bra{\phi^+} + \xi\mathbb{I}/4$ for some maximal value of $\xi>0$. 
			We can then define  
			\[
				\mathcal{S}^{\mathrm{honest}} = \left\{ \left( (1-\tilde{\xi}) \ket{\phi^+}\bra{\phi^+} + \tilde{\xi}\mathbb{I}/4\right)^{\otimes n} \; | \; \tilde{\xi} \leq \xi \; \right\} \;.
			\]
			$\mathcal{D}^{\mathrm{honest}}$ can be defined to include, e.g., the measurement devices that apply the optimal measurements performed in the CHSH game. 
			The noise-tolerance property, also termed completeness, states that the protocol should not abort, with high probability, for any $\phi^{\mathrm{honest}}\in\mathcal{S}^{\mathrm{honest}}$ even though some noise $\xi>0$ is present. 
			This implies (in combination with the soundness property) that $\mathrm{P}$ is able to certify the entanglement produced by the honest source. 
			Of course, the honest sets $\mathcal{S}^{\mathrm{honest}}$ and $\mathcal{D}^{\mathrm{honest}}$ can be chosen in any way one wishes and depending on the experimental setting one has in mind. For instance, it could include states with a different noise model than the Werner state.
			In most cases the manufacture of the entanglement source (the experimentalist) has some ``guess'' for a realistic description of the source and measurement devices. In most applications, these define the sets that should be chosen as the honest sets.
			
			\item The entanglement certification, or soundness, property means that for $any$ state produced by the unknown source and $any$ measurement devices either the protocol identifies that the apparatus is not sufficiently good, and therefore the protocol aborts, or the post-protocol state $\rho_{|\Omega}$ is highly entangled, in the sense that there exists a way to distill (close to) $nr$ maximally entangled states out of it.
			The protocol therefore certifies that $\rho_{|\Omega}$ is indeed useful for subsequent applications.\footnote{Note that below we only claim that a distillation protocol \emph{exists} but we do not present it explicitly.} 
			
			\item To be able to quantify the entanglement produced by the \emph{source itself}, rather than the entanglement that could be present inside the measurement devices, we need to assume that the measurement devices were not entangled before the start of the protocol. This situation is somehow similar to the requirement for measurement independence in a Bell test: if the measurement settings used in a Bell test are chosen by a device correlated with the source, then quantum statistics can be reproduced by local models~\cite{hall2011relaxed,barrett2011how}. In practice, measurement independence is guaranteed by assuming that the devices used to produce the measurement settings behave independently from the rest of the setup. Similarly, we could assume that the measurement device of each party is independent from all other devices involved in the protocol. However, this hypothesis is stronger than necessary. We thus make the lighter assumption that the measurement devices share no entanglement with each other at the beginning of the protocol.

			\item Only LOCC protocols should be considered as DIEC protocols since we would like to certify the distillable entanglement produced by the source and not the entanglement that may be produced by a protocol employing operations that cannot be explained via LOCC.\footnote{Another option may be to consider protocols that employ only separability preserving operations; see~\cite{rains1997entanglement,cirac2001entangling,brandao2011one}.}
			
			\item In a way, the definition of a DIEC can be seen as an extension of the so called SWAP technique~\cite{bancal2015physical}, used in self-testing works, to the non-IID setting. Roughly speaking, by constructing the SWAP operators one can claim that a state close to, e.g., the maximally entangled state, up to local isometries can be ``extracted'' out of the uncharacterised devices. The construction of a DIEC protocol implies that one can extract a state close to \emph{many} maximally entangled states using LOCC.
			
			\item The above definition allows us to use DIEC protocols as a way to \emph{compare} different entanglement sources which are otherwise incomparable. For example, one can consider two sources developed by different experimental groups. One source produces, say, many identical copies of the Werner state while the other creates many identical copies of perfect partially entangled states. 
			Each group is free to choose a DIEC protocol, as in Definition~\ref{def:DIEC}, that will result in the highest lower bound on the distillable entanglement produced by its source. 
			If one group wishes to claim that its source is ``better'', then the certified distillation rate of their source (which can be verified by any user) must be higher than that of the competing source. 
			The protocols allow us to \emph{benchmark} one source against the other in a meaningful way by allowing any user to verify a claimed lower bound on the produced entanglement available for further applications. 
		\end{enumerate}

	\subsection{The setting: source and measurement devices}\label{sec:setting}
	
		Here we describe the theoretical setting considered in our work to which our results apply, i.e., under which we will prove the soundness of our DIEC protocol.
		Different variants of this setting can be chosen depending on one's interest; we choose the presented one as we believe it is both realistic and simple to discuss. 
		This setting is compatible with the standard assumptions used in the DI setting (e.g., when testing the CHSH Bell inequality in DI quantum key distribution or randomness generation protocols).
		In particular, we employ the following standard assumptions: the measurement devices are separated in space and are the verifier can restrict their communication in different stages of the protocol and the verifier holds a trusted random number generator and is able to make basic classical calculations required to run the protocol. 
		In addition, we assume that quantum physics is correct.


		\subsubsection{The theoretical setting: source, measurement devices, and quantum registers}\label{sec:setting_theo}	
		
			In order to be able to talk about the entanglement available \emph{after} running the DIEC protocol we must be able to have a well defined state at hand, whose entanglement we are quantifying. For this, we consider a theoretical setting which fulfils the following  two conditions. 

			First, the production of any entangled states (if such are being created) can be attributed only to the source. In other words, we assume that the measurement devices \emph{neither produce nor hold} additional entanglement. This implies a distinction between the source and the measurement devices. 

			Second, the state produced by the source, or the ``post-protocol state'', can be kept in some registers, i.e., quantum memory. This is necessary at the theoretical level since we wish to discuss the remaining entanglement in a meaningful way. The registers are \emph{trusted}, in the sense that they cannot be manipulated by the devices after they are accessed during the protocol. Indeed, if the devices are allowed to, e.g., measure the registers in which the final state is being kept then, clearly, one cannot say anything regarding the entanglement left in the system.\footnote{This also fits the distinction we made above between the source and the measurement devices. If the entanglement is produced by the measurement devices and is kept ``inside of them'' then they can also destroy whatever is left in the end of the protocol.}

			Note that by considering the one-shot distillable entanglement we are already hinting that one should be able to apply an entanglement distillation protocol on the certified state (at least in theory). To apply such a protocol, the state must be available somewhere so it can be manipulated. Defining some quantum registers in which the state is being kept is therefore necessary in our context. 
			The quantum registers are merely a theoretical tool, they are not needed in an experimental implementation of our DIEC protocol.

			The reminder of the current section is devoted to explaining precisely the setting, fulfilling the above conditions that we consider in the current work. The setting is illustrated in Figure~\ref{fig:source_mod}.

			\begin{figure}
				\centering
				\includegraphics[width=\textwidth]{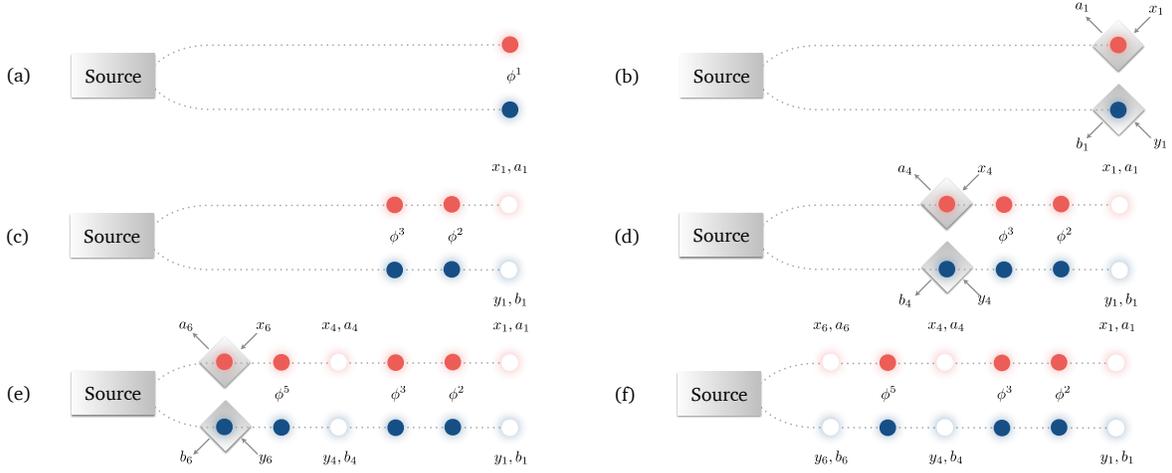}
				\caption{Illustration of the considered setting. (a)~The source emits the first state in the sequence $\phi^1$. (b)~The verifier chooses to measure the first state. The two measurement devices, denoted by the grey squares, measure $\phi^1$ according to the inputs $x_1,y_1$ given by the verifier. They output $a_1,b_1$. (c)~The next states $\phi^2,\phi^3$ are produced by the source one after the other. They can depend on the values $x_1,y_1,a_1,b_1$. (d)~The source emits $\phi^4$ and the verifier chooses to measure it using his measurement devices. The actions of the measurement devices can depend on the classical information of the previous rounds: the values $x_1,y_1,a_1,b_1$ as well as the knowledge that $\phi^2,\phi^3$ were not measured. (e)~The next states are produced by the source. The verifier chooses to measure only $\phi^6$. (f)~The final state which is kept in the memory. 
				}\label{fig:source_mod}
			\end{figure}

			The entity wanting to certify the entanglement produced by the source is called the \emph{verifier}. 
			To run the DIEC protocol the verifier holds two separated (space-like or otherwise shielded) measurement devices. As we consider two measurement devices we can also treat the verifier as two parties, Alice and Bob (both the ``verifier'' and ``Alice and Bob'' are used below, depending on the context).
		
			We consider an uncharacterised source emitting a sequence of entangled quantum systems . (For example, one can imagine a source that emits pairs of entangled photons one after the other).
			We denote by $n\in \mathbb{N}_+$ the number of systems  produced by the source, i.e., the number of times the source is being used during the DIEC protocol. 
			For every $i\in[n]$, the source produces some unknown state~$\phi^i$ that is being kept in the memory.
			The verifier then chooses whether to measure the state using his measurement devices or not. 
			If he measures the state then the classical inputs given to the devices and the outputs produced by them are kept in a classical  memory. If $\phi^i$ is not measured then it is kept as is in a quantum memory.

			The next state produced by the source,  $\phi^{i+1}$, can  depend on all of the previous classical information, i.e., which of the previous states were measured and what were the inputs and outputs in those rounds.
			Note that while $\phi^{i+1}$ can depend on whether a test was made in a previous round $j<i+1$, the source and the measurement devices have no further access to the quantum states $\phi^j$ which \emph{were not measured}. That is, we assume that each $\phi^j$ which was not measured is not affected by the devices after the $j$'th round.
			Again, this is necessary as otherwise the remaining entanglement can be destroyed by the devices. We emphasise that this \emph{does not imply} that the systems of the different rounds cannot be entangled with one another.

			We use the above model in our protocol and its analysis. However, we remark that the protocol and analysis can be adapted to capture other models with a sequential structure.


	\subsection{Results}\label{sec:results}
	
		\subsubsection{Our DIEC protocol and the achieved rates}\label{sec:diec_result}

			Our DIEC protocol is presented as Protocol~\ref{pro:ent_test}. It is based on the CHSH inequality (see Section~\ref{sec:pre_chsh} for the necessary basic information). Similar protocols can also be considered for other Bell inequalities. 
	
			\begin{algorithm}[t]
			\caption{DIEC protocol (based on the CHSH inequality)}
			\label{pro:ent_test}
			\begin{algorithmic}[1]
				\STATEx \textbf{Arguments:} 
					\STATEx\hspace{\algorithmicindent} $D$ -- untrusted measurement device of two components with inputs and outputs set $\{0,1\}$
					\STATEx\hspace{\algorithmicindent} $n \in \mathbb{N}_+$ -- number of rounds
					\STATEx\hspace{\algorithmicindent} $\gamma$ -- the probability of a test
					\STATEx\hspace{\algorithmicindent} $\omega_{\mathrm{exp}}$ -- expected winning probability in the CHSH game  
					\STATEx\hspace{\algorithmicindent} $\delta_{\mathrm{est}} \in (0,1)$ -- width of the statistical confidence interval for the estimation test
					
				\STATEx
				
				\STATE For every round $i\in[n]$ do Steps~\ref{prostep:first}-\ref{prostep:last}:
					\STATE\hspace{\algorithmicindent}Let $\phi^i$ denote the bipartite state produced by the source in this round. \label{prostep:first}
					\STATE\hspace{\algorithmicindent}Set $A_i,B_i,X_i,Y_i,W_i=\perp$.
					\STATE\hspace{\algorithmicindent}Choose $T_i=1$ with probability $\gamma$ and $T_i=0$ otherwise.
					\STATE\hspace{\algorithmicindent}If $T_i=1$: 
						\STATE\hspace{\algorithmicindent}\hspace{\algorithmicindent}Choose the inputs $X_i,Y_i\in\{0,1\}$ uniformly at random.
						\STATE\hspace{\algorithmicindent}\hspace{\algorithmicindent}Measure $\phi^i$ using $D$ with the inputs $X_i,Y_i$ and record the outputs as $A_i,B_i\in\{0,1\}$. \label{prostep:measurement} 
						\STATE\hspace{\algorithmicindent}\hspace{\algorithmicindent}Set $W_i = 1$ if the CHSH game is won and $W_i=0$ otherwise.
					\STATE\hspace{\algorithmicindent}If $T_i=0$:
						\STATE\hspace{\algorithmicindent}\hspace{\algorithmicindent}Keep $\phi^i$ in the registers $\hat{A}_i\hat{B}_i$. \label{prostep:last}
					\STATE Abort if $\sum_i \chi(T_i=1)\,W_i < (\omega_{\mathrm{exp}}\gamma - \delta_{\mathrm{est}}) n \;$. \label{prostep:abort_ET_1}
			\end{algorithmic}
			\end{algorithm}
	
			As mentioned in the previous section, we consider a source which produces a sequence of bipartite states, denoted by $\phi^i$ for $i\in[n]$. 
			The verifier chooses whether to measure $\phi^i$ or keep it as is in the memory. The register $T_i$ describes whether a test was performed or not. 
			If a test is performed, the registers $X_iY_iA_iB_i$ hold the classical inputs and outputs. The register $W_i$ is set to $1$ when the CHSH game is won in the $i$'th round and $0$ otherwise. 
			When a test is not being performed, the state $\phi^i$ is kept in the quantum registers $\hat{A}_i\hat{B}_i$.
			We allow the source to ``know'' all the classical information of the previous rounds, i.e., $(TABXYW)_{1,\dots,i-1}$ and hence $\phi^i$ can also depend on this information.
			
			We denote the state in the end of Protocol~\ref{pro:ent_test}, before Step~\ref{prostep:abort_ET_1}, by 
			\begin{equation*}
				\rho = \rho_{\hat{A}\hat{B}ABXYTW} \;.
			\end{equation*}
			
			Denote by $\chi$ the indicator function, i.e. $\chi=1$ if $T_i=1$, and $\chi=0$ otherwise.
			Let $\Omega$ be the event $\sum_i \chi(T_i=1)\,W_i \geq (\omega_{\mathrm{exp}}\gamma - \delta_{\mathrm{est}}) n \;$, i.e., the event of not aborting the protocol. 
			We denote the state after the end of the protocol, conditioned on not aborting, by
			\begin{equation*}
				\rho_{|\Omega} = \rho_{\hat{A}\hat{B}ABXYTW|\Omega} \;.
			\end{equation*}
	
			We define the bipartition of $\rho$ and $\rho_{|\Omega}$ as $Q^A=\hat{A}$ vs.\@ $Q^B= \hat{B}ABXYTW$. 
			It is easy to see that for this bipartition Protocol~\ref{pro:ent_test} employs only LOCC for \emph{any} measurement devices in our setting (i.e., initially non-entangled devices that do not communicate during the time of the measurement).
	
			Our main result states that Protocol~\ref{pro:ent_test} is indeed an operational DIEC protocol. That is, it fulfils the requirements of Definition~\ref{def:DIEC}.

			As the honest source and devices we choose to consider a source that produces identical and independent copies of a state $\phi^i=\sigma$ and  measurement devices that apply the same measurements in each round when they are used. The state $\sigma$ and the measurements are such that the winning probability achieved in the CHSH game is at least $\omega_{\mathrm{exp}}$. 
			For example, one can choose $\mathcal{D}^{\mathrm{honest}}$ to include the measurement devices that apply the optimal measurements performed in the CHSH game and the
			 set $\mathcal{S}^{\mathrm{honest}}$ to include all states $\phi^{\mathrm{honest}}=\sigma^{\otimes n}$ for~$\sigma$ any noisy maximally entangled state that will result in winning probability $\geq \omega_{\mathrm{exp}}$.

			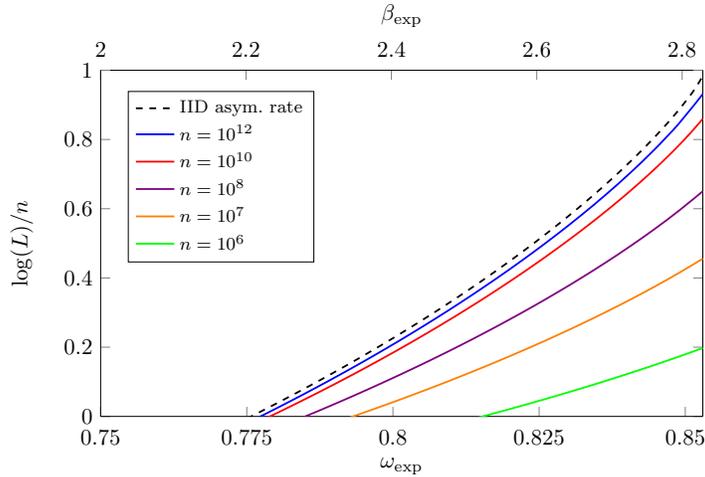
\begin{figure}
			\centering
			\begin{tikzpicture}[scale=0.85]
				\begin{axis}[
					height=7cm,
					width=11cm,
					xlabel=$\beta_{\mathrm{exp}}$,
					xmin=2,
					xmax=2.82842,
				    	xtick={2,2.2,2.4,2.6,2.8},
					x tick label style={yshift={2.5pt}},
					axis x line*=top,
					xlabel near ticks,
					ymax=1,
					ymin=0,
					hide y axis
				]
				\end{axis}
				\begin{axis}[
					height=7cm,
					width=11cm,
					xlabel=$\omega_{\mathrm{exp}}$,
					ylabel=$\log(L)/n$,
					xmin=0.75,
					xmax=0.853,
					ymax=1,
					ymin=0,
				    	xtick={0.75,0.775,0.80,0.825,0.85},
					x tick label style={yshift={-2.5pt},  
						/pgf/number format/.cd, precision=3, /tikz/.cd},
			         	ytick={0,0.2,0.4,0.6,0.8,1},
					axis x line*=bottom,
					legend style={at={(0.2,0.94)},anchor=north,legend cell align=left,font=\footnotesize} 
				]
				
			
				\addplot[black,thick,smooth,dashed] coordinates {
				(0.775013, -0.00628557) (0.776684, 0.00797869) (0.778355, 0.0224093) (0.780026, 0.0370094) (0.781697, 0.0517826) (0.783369, 0.0667322) (0.78504, 0.0818622) (0.786711, 0.0971763) (0.788382, 0.112679) (0.790053, 0.128374) (0.791724, 0.144266) (0.793395, 0.16036) (0.795066, 0.176661) (0.796737, 0.193175) (0.798408, 0.209907) (0.800079, 0.226862) (0.80175, 0.244049) (0.803421, 0.261473) (0.805092, 0.279141) (0.806763, 0.297062) (0.808435, 0.315244) (0.810106, 0.333696) (0.811777, 0.352427) (0.813448, 0.371448) (0.815119, 0.39077) (0.81679, 0.410406) (0.818461, 0.430369) (0.820132, 0.450674) (0.821803, 0.471337) (0.823474, 0.492376) (0.825145, 0.51381) (0.826816, 0.535663) (0.828487, 0.557959) (0.830158, 0.580727) (0.83183, 0.603998) (0.833501, 0.627812) (0.835172, 0.652211) (0.836843, 0.677247) (0.838514, 0.702982) (0.840185, 0.729494) (0.841856, 0.756876) (0.843527, 0.785253) (0.845198, 0.814787) (0.846869, 0.845711) (0.84854, 0.878374) (0.850211, 0.91337) (0.851882, 0.95195) (0.853553, 1)	
				};
				\addlegendentry{IID asym.\@ rate}
				
				\addplot[blue,thick,smooth] coordinates {
				(0.77, -0.060422) (0.77415, -0.0266105) (0.7783, 0.00817149) (0.78245, 0.0439699) (0.7866, 0.0808357) (0.79075, 0.118826) (0.7949, 0.158007) (0.79905, 0.19845) (0.8032, 0.240243) (0.80735, 0.283482) (0.8115, 0.328285) (0.81565, 0.37479) (0.8198, 0.423165) (0.82395, 0.473617) (0.8281, 0.526409) (0.83225, 0.581888) (0.8364, 0.640525) (0.84055, 0.703009) (0.8447, 0.770431) (0.84885, 0.844809) (0.853, 0.931351)
				};
				\addlegendentry{$n=10^{12}$}
			
				\addplot[red,thick,smooth] coordinates {
				(0.7783, -0.00527627) (0.780375, 0.0115888) (0.78245, 0.028699) (0.784525, 0.0460603) (0.7866, 0.0636792) (0.788675, 0.0815624) (0.79075, 0.0997171) (0.792825, 0.118151) (0.7949, 0.136872) (0.796975, 0.155889) (0.79905, 0.175211) (0.801125, 0.194848) (0.8032, 0.214812) (0.805275, 0.235113) (0.80735, 0.255764) (0.809425, 0.276778) (0.8115, 0.298171) (0.813575, 0.319958) (0.81565, 0.342158) (0.817725, 0.364789) (0.8198, 0.387874) (0.821875, 0.411436) (0.82395, 0.435502) (0.826025, 0.460102) (0.8281, 0.485272) (0.830175, 0.51105) (0.83225, 0.537482) (0.834325, 0.564622) (0.8364, 0.592533) (0.838475, 0.621291) (0.84055, 0.650991) (0.842625, 0.68175) (0.8447, 0.713722) (0.846775, 0.747115) (0.84885, 0.782224) (0.850925, 0.819504) (0.853, 0.859758)
				};
				\addlegendentry{$n=10^{10}$}
				
				\addplot[violet,thick,smooth] coordinates {
				(0.78245, -0.0177506) (0.7866, 0.01127870) (0.79075, 0.041160627) (0.7949, 0.07193876) (0.79905, 0.10366215) (0.8032, 0.13638638) (0.80735, 0.170174940) (0.8115, 0.205101022) (0.81565, 0.24124991) (0.8198, 0.278722214) (0.82395, 0.31763837) (0.8281, 0.35814516) (0.832249999, 0.400425285) (0.8364, 0.44471246) (0.84055, 0.49131627) (0.8447, 0.54066632) (0.84885, 0.5933995) (0.853, 0.65056124)
				};
				\addlegendentry{$n=10^8$}
			
				\addplot[orange,thick,smooth] coordinates {
				(0.79075, -0.0140405) (0.7949, 0.0102783) (0.79905, 0.0353165) (0.8032, 0.0611116) (0.80735, 0.0877063) (0.8115, 0.115149) (0.81565, 0.143494) (0.8198, 0.172806) (0.82395, 0.203161) (0.8281, 0.234646) (0.83225, 0.267368) (0.8364, 0.301459) (0.84055, 0.337083) (0.8447, 0.374454) (0.84885, 0.413858) (0.853, 0.4557)
				};
				\addlegendentry{$n=10^7$}
				
				\addplot[green,thick,smooth] coordinates {
				(0.8115, -0.0170334) (0.81565, 0.00141505) (0.8198, 0.0204221) (0.82395, 0.0400194) (0.8281, 0.0602427) (0.83225, 0.081133) (0.8364, 0.102738) (0.84055, 0.125113) (0.8447, 0.148325) (0.84885, 0.172452) (0.853, 0.197594)
				};
				\addlegendentry{$n=10^6$}

				\end{axis}  
			\end{tikzpicture}
			
			\caption{The one-shot distillable entanglement rate $r=\log L/n$ as a function of the expected winning probability in the CHSH game $\omega_{\mathrm{exp}}$ (or $\beta_{\mathrm{exp}}$ for different values of $n$ (the analytical form is given in Equation~\eqref{eq:log_l_intro}). The error parameters were chosen to be $\varepsilon_{\mathrm{dist}}=\varepsilon_{\mathrm{snd}}=10^{-5}$ and the completeness parameter was set to $\varepsilon_{\mathrm{cmp}}  =10^{-2}$. All other values were chosen so that the distillable entanglement rate is maximised. The dashed curve describes the distillable entanglement rate which can be certified in the IID asymptotic case.}\label{fig:dist_ent_rate}
			\end{figure}

			The following theorem states our main result.
			\begin{thm}[Main theorem]\label{thm:main_thm}
				For any $n\in \mathbb{N_+} $, $\varepsilon_{\mathrm{dist}},\varepsilon_{\mathrm{snd}}\in [0,1]$, and $\varepsilon_{\mathrm{smo}}\in[0,\sqrt{\varepsilon_{\mathrm{dist}}})$, Protocol~\ref{pro:ent_test} is a DIEC protocol with:
				\begin{enumerate}
					\item Noise-tolerance (completeness): The probability that $\mathrm{P}$ aborts when applied on any $\phi^{\mathrm{honest}}\in\mathcal{S}^{\mathrm{honest}}$ using honest measurement devices from $\mathcal{D}^{\mathrm{honest}}$  is at most $\varepsilon_{\mathrm{cmp}} \leq \exp(- 2 n\delta_{\mathrm{est}}^2 )$.
					\item Entanglement certification (soundness): For \emph{any} source and measurement devices in the considered setting, either Protocol~\ref{pro:ent_test} aborts with probability greater than $1-\varepsilon_{\mathrm{snd}}$ when applied on $\phi$ or 
					$E_D^{n,\varepsilon_{\mathrm{dist}}}(\rho_{|\Omega}) \geq r$ for $r=\log(L)/ n$ and
					\begin{equation}\label{eq:log_l_intro}
						\log L= - n\cdot \eta_{\mathrm{opt}}(\varepsilon_{\mathrm{smo}},\varepsilon_{\mathrm{snd}}) -4\log\left(  \frac{1}{\sqrt{\varepsilon_{\mathrm{dist}}}-\varepsilon_{\mathrm{smo}}}\right)  \in \Omega(n) 
					\end{equation}
					where  $\eta_{\mathrm{opt}}$ is defined in Equation~\eqref{eq:eta_opt}. 
				\end{enumerate}
			\end{thm}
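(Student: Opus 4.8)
The plan is to establish the two properties of Definition~\ref{def:DIEC} separately. \textbf{Completeness} is a routine concentration argument: for an honest IID source and honest devices winning each CHSH game with probability at least $\omega_{\mathrm{exp}}$, the test records $\chi(T_i=1)\,W_i$ are i.i.d.\ Bernoulli variables with mean at least $\omega_{\mathrm{exp}}\gamma$, so Hoeffding's inequality bounds the probability that their sum drops below $(\omega_{\mathrm{exp}}\gamma-\delta_{\mathrm{est}})n$ by $\exp(-2n\delta_{\mathrm{est}}^2)$, giving $\varepsilon_{\mathrm{cmp}}\le\exp(-2n\delta_{\mathrm{est}}^2)$. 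The substance of the theorem is in \textbf{soundness}.

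For soundness I would split the argument into a one-shot distillation bound and an entropy-accumulation bound. Let $E$ purify $\rho_{|\Omega}$ and recall the bipartition $Q^A=\hat A$ versus $Q^B=\hat B ABXYTW$. The first ingredient expresses one-shot distillable entanglement through a single smooth conditional min-entropy, of the form
\begin{equation*}
\log L \;\geq\; H_{\min}^{\varepsilon_{\mathrm{smo}}}\!\left(Q^A \mid E\right)_{\rho_{|\Omega}} \;-\; 4\log\frac{1}{\sqrt{\varepsilon_{\mathrm{dist}}}-\varepsilon_{\mathrm{smo}}},
\end{equation*}
which already reproduces the correction term of Equation~\eqref{eq:log_l_intro}. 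This uses the duality $H_{\min}^{\varepsilon}(Q^A\mid E)=-H_{\max}^{\varepsilon}(Q^A\mid Q^B)$ together with a decoupling/state-merging distillation protocol whose rate is set by the coherent information; conceptually, entanglement between Alice and Bob is equivalent to Alice's stored systems being maximally uncertain relative to the purifying environment, so certifying the latter certifies the former. The $\varepsilon_{\mathrm{smo}}\in[0,\sqrt{\varepsilon_{\mathrm{dist}}})$ smoothing budget and the fidelity-to-trace-distance conversion (sending $1-\varepsilon_{\mathrm{dist}}$ to $\sqrt{\varepsilon_{\mathrm{dist}}}$) are exactly what produce the term $\sqrt{\varepsilon_{\mathrm{dist}}}-\varepsilon_{\mathrm{smo}}$.

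The second, harder ingredient is to lower bound $H_{\min}^{\varepsilon_{\mathrm{smo}}}(Q^A\mid E)_{\rho_{|\Omega}}$ by $-n\,\eta_{\mathrm{opt}}$ via entropy accumulation. I would model the protocol as a sequence of single-round channels, each either emitting a classical test record (when $T_i=1$) or the stored quantum system $\hat A_i$ (when $T_i=0$), and verify the Markov-chain conditions the entropy accumulation theorem requires; here the non-signalling/LOCC structure of the devices and the assumption that previously unmeasured systems are never again touched are precisely what guarantee these conditions. The crux is then the single-round analysis: I must build a min-tradeoff function that, from the CHSH winning frequency certified on the rare test rounds, lower bounds the per-round conditional von Neumann entropy $H(\hat A_i\mid E_i)$ contributed by the frequent storage rounds. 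This propagation from tested to untested states is legitimate because the source must fix each state before learning whether it will be tested. I expect this step to be the \emph{main obstacle}: it demands an optimised, tight CHSH bound relating a winning probability $\omega>3/4$ to the unavoidable entanglement of a single (sub-normalised) state, and it must balance the asymmetry between the probability-$\gamma$ test rounds and the probability-$(1-\gamma)$ storage rounds through the choice of $\gamma$. The quantity $\eta_{\mathrm{opt}}$ of Equation~\eqref{eq:eta_opt} is exactly the result of this optimisation together with the finite-$n$ second-order EAT corrections, tuned as a function of $\varepsilon_{\mathrm{smo}}$ and $\varepsilon_{\mathrm{snd}}$.

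Finally I would combine the pieces. In the non-trivial case $\Pr[\Omega]\geq\varepsilon_{\mathrm{snd}}$, conditioning on the non-abort event $\Omega$ forces the empirical CHSH score into the high-winning region, so the min-tradeoff function evaluates to a positive entropy rate $-\eta_{\mathrm{opt}}>0$; entropy accumulation then yields $H_{\min}^{\varepsilon_{\mathrm{smo}}}(Q^A\mid E)_{\rho_{|\Omega}}\geq -n\,\eta_{\mathrm{opt}}(\varepsilon_{\mathrm{smo}},\varepsilon_{\mathrm{snd}})$, and substituting into the one-shot bound gives precisely Equation~\eqref{eq:log_l_intro} with $\log L\in\Omega(n)$ whenever the certified winning probability exceeds the classical value. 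The dichotomy in the statement, namely that either the protocol aborts with probability greater than $1-\varepsilon_{\mathrm{snd}}$ or $E_D^{n,\varepsilon_{\mathrm{dist}}}(\rho_{|\Omega})\geq r$, is exactly the trichotomy handled here: the entropy-accumulation bound on the conditional state $\rho_{|\Omega}$ is only meaningful when $\Pr[\Omega]\geq\varepsilon_{\mathrm{snd}}$, and otherwise the first disjunct holds by definition.
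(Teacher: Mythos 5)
Your completeness argument matches the paper's (Hoeffding on the IID honest implementation), and your soundness skeleton --- a one-shot distillation bound in terms of a smooth entropy combined with entropy accumulation and a tight single-round CHSH-to-entropy bound --- is also the paper's strategy; phrasing the first ingredient via $H_{\min}^{\varepsilon_{\mathrm{smo}}}(Q^A|E)$ and duality rather than directly via $H_{\max}^{\varepsilon_{\mathrm{smo}}}(Q^A|Q^B)$ as in Lemma~\ref{lem:ent_dist} is an equivalent reformulation. The genuine gap is your claim that the Markov-chain conditions required by the EAT are ``precisely guaranteed'' by the non-signalling structure and the fact that stored systems are never touched again. They are not. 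The paper's own discussion gives the counterexample: the source may emit $\phi^1$ and $\phi^2$ entangled \emph{with one another}; no device ever touches the stored registers again, yet $\hat{A}_1$ remains correlated with $\hat{B}_2$ conditioned on all earlier side information, so the condition $\hat{A}_{1}^{i-1} \leftrightarrow \hat{B}_{1}^{i-1}A_{1}^{i-1}B_{1}^{i-1}\cdots E \leftrightarrow \hat{B}_iA_iB_i\cdots$ fails. This is exactly why the paper does not run the EAT on Protocol~\ref{pro:ent_test} at all, but on the modified Protocol~\ref{pro:mod_ent_test}, which in every storage round (i)~projects onto a two-qubit Jordan block and (ii)~twirls by a random local Pauli. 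The twirl decouples each stored pair from all other registers while preserving its entanglement (Lemma~\ref{lem:symm_dec}), and it is this enforced decoupling --- not the sequential structure --- that yields the Markov chains (Lemma~\ref{lem:markov_cond}). The qubit projection is likewise essential and absent from your plan: the EAT correction term contains $\log(1+2d_{O_i})$, so $\hat{A}_i$ must have known finite dimension, and the tight single-round bound that you correctly identify as the crux is proved in the paper precisely for Bell-diagonal two-qubit states (Lemmas~\ref{lem:single_ent_diagonal_Bell}--\ref{lem:final_single_round_ent}), a form that only the projection plus twirl produce.

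Because these modifications are defined through the unknown measurement operators, the verifier cannot implement them, so after bounding the entropy of the modified state $\tau_{|\Omega}$ one must transfer the conclusion back to the real state $\rho_{|\Omega}$. The paper does this by observing that the added steps are LOCC, act only on storage rounds, and commute past the abort decision, so that any protocol distilling $\ket{\Phi^L}$ from $\tau_{|\Omega}$ composes with them into one distilling $\ket{\Phi^L}$ from $\rho_{|\Omega}$ with the same error (Lemmas~\ref{lem:mod_same_stat} and~\ref{lem:locc_tau_to_rho}). Your proposal, which applies the EAT to the real protocol's state directly, has no place for this step, and as explained the EAT hypotheses genuinely fail there; the argument cannot be completed without introducing something equivalent to the modified protocol.
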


			The exact bound on the certified distillation rate $r$ is not very informative, so we postpone discussing its explicit form to later. 
			Instead, we plot the distillation rate in Figure~\ref{fig:dist_ent_rate} for some choices of parameters. 
			Clearly, the optimal distillation rate is upper bounded by $1$ and hence $\log L \in \Omega(n)$, as achieved in our work, is optimal in terms of the asymptotic dependency on $n$. 
			
			To discuss the tightness of the derived rates more concretely we need to first say few words about our proof technique. 
			To prove Theorem~\ref{thm:main_thm} we use two independent results.
			The first is a lower bound on the one-shot distillable entanglement in terms of the negative 
			of the \emph{conditional smooth max-entropy} $H_{\max}^{\varepsilon_{\mathrm{smo}}}(Q^A|Q^B)_{\rho_{|\Omega}}$~\cite{wilde2016converse}. This allows us to reduce the task of proving Theorem~\ref{thm:main_thm} to that of (upper) bounding $H_{\max}^{\varepsilon_{\mathrm{smo}}}(Q^A|Q^B)_{\rho_{|\Omega}}$. 
			The main tool used to derive a bound on the smooth max-entropy is the entropy accumulation theorem~\cite{dupuis2016entropy}.
			
			Our bound on  $H_{\max}^{\varepsilon_{\mathrm{smo}}}(Q^A|Q^B)_{\rho_{|\Omega}}$ is tight to first order in $n$ and hence cannot be significantly improved. 
			In particular, this means that the regime of Bell violations $\omega_{\mathrm{exp}}\lessapprox 0.775$ in which our distillation rate is zero, although the verifier observes a violation, cannot be improved by better bounding $H_{\max}^{\varepsilon_{\mathrm{smo}}}(Q^A|Q^B)_{\rho_{|\Omega}}$.
			
			Qualitatively, it is not surprising that such a regime exists. Indeed, it was already shown that, in general, Bell non-locality is fundamentally different from distillable entanglement (this stands in contrast to the so called Peres conjecture). That is, there are \emph{bound entangled states} (i.e., entangled state which cannot be distilled)  that can be used to violate some Bell inequalities~\cite{vertesi2014disproving}.
			For the CHSH inequality, however, this is not the case~\cite{masanes2006asymptotic}: bound entangled states cannot be used to violate the CHSH inequality. Hence, \emph{asymptotically}, one should be able to certify distillable entanglement for \emph{any} violation $\omega_{\mathrm{exp}}>0.75$. 
			
			This implies that, although our bound on  $H_{\max}^{\varepsilon_{\mathrm{smo}}}(Q^A|Q^B)_{\rho_{|\Omega}}$ is tight, the relation between the smooth max-entropy and the one-shot distillable entanglement is not; it only gives a \emph{lower bound} on the amount of distillable entanglement but it should be possible to distill more (see Section~\ref{sec:open_quest} for more details).

		\subsubsection{von Neumann entropy as a function of the CHSH violation}\label{sec:von_neumann_intro}
		
			As a step in our proof of Theorem~\ref{thm:main_thm} we derive an upper bound on the conditional von Neumann entropy $H(\hat{A}_i|\hat{B}_i)_{\sigma}$ for any Bell-diagonal quantum state $\sigma_{\hat{A}_i\hat{B}_i}$ that can be used to win the CHSH game with winning probability $\omega\in\left[\frac{3}{4},\frac{2+\sqrt{2}}{4}\right]$. The conditional von Neumann entropy is negative only when evaluated over entangled states. Thus, an upper bound on it can be seen, by itself, as a quantitive certificate of entanglement (though not in our operational sense). Such a bound might be of independent interest in other contexts. 
			
			Inspired by~\cite{pironio2009device}, in which a ``dual quantity''\footnote{In~\cite{pironio2009device} a bound on $H(A|E)$ was derived, where $A$ is the measurement outcome on Alice's side (when measuring~$\sigma_{\hat{A}_i}$ in our notation) and $E$ describes a system used to purify $\sigma_{\hat{A}_i\hat{B}_i}$.} was bounded, we derive a bound for Bell diagonal states. We prove the following:

			\begin{lemma}\label{lem:Bell_diag_entropy_intro}
				For any Bell diagonal state $\sigma_{\hat{A}_i\hat{B}_i}$ that can be used to violate the CHSH inequality with violation $\omega\in\left[\frac{3}{4},\frac{2+\sqrt{2}}{4}\right]$,
				\begin{equation}\label{eq:single_round_intro}
					H(\hat{A}_i|\hat{B}_i)_{\sigma} \leq 2h\left(\frac{1}{2}-\frac{2\omega -1}{\sqrt{2}}\right) - 1 \;,
				\end{equation}
				where $h$ is the binary entropy function.
			\end{lemma}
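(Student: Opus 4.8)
The plan is to turn the statement into an explicit finite-dimensional optimisation over the Bell-diagonal spectrum and to solve it in closed form. First I would reduce the conditional entropy to a Shannon entropy of eigenvalues. Writing $\sigma_{\hat A_i\hat B_i}=\sum_k \lambda_k \proj{\psi_k}$ in the Bell basis $\{\ket{\Phi^+},\ket{\Phi^-},\ket{\Psi^+},\ket{\Psi^-}\}$, both single-party marginals of a Bell-diagonal state are maximally mixed, so $H(\hat B_i)_\sigma = 1$; and since $\sigma$ is diagonal in an orthonormal basis, $H(\hat A_i\hat B_i)_\sigma$ equals the Shannon entropy $H(\vec\lambda)$ of the eigenvalue vector. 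Hence $H(\hat A_i|\hat B_i)_\sigma = H(\vec\lambda)-1$, and it remains to prove the purely classical bound $H(\vec\lambda)\le 2h\!\left(\tfrac12-\tfrac{2\omega-1}{\sqrt2}\right)$.

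Next I would introduce the correlation coordinates. Write $\sigma = \tfrac14\big(\mathbb{I}+\sum_{j=1}^3 c_j\,\varsigma_j\otimes\varsigma_j\big)$ with $\varsigma_j$ the Pauli matrices, so that the eigenvalues are $\lambda_k=\tfrac14(1\pm c_1\pm c_2\pm c_3)$ and positivity confines $(c_1,c_2,c_3)$ to the Bell tetrahedron. By the Horodecki criterion the largest CHSH value attainable on $\sigma$ is $2\sqrt{c_{(1)}^2+c_{(2)}^2}$, where $c_{(1)},c_{(2)}$ are the two largest coefficients in absolute value; converting the winning probability via $\langle\mathcal B\rangle = 4(2\omega-1)$, the hypothesis that $\sigma$ admits measurements achieving winning probability $\omega$ becomes $c_{(1)}^2+c_{(2)}^2\ge 4(2\omega-1)^2=:s^2$. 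Since permuting the three axes permutes the $\lambda_k$ and hence leaves $H(\vec\lambda)$ invariant, I may assume $c_3$ is the smallest coefficient, so the constraint reads $c_1^2+c_2^2\ge s^2$. The task is thus to maximise $H(\vec\lambda)$ over the tetrahedron subject to this inequality.

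I would solve the optimisation in three moves. (i) For fixed $c_1,c_2$ the vector $\vec\lambda$ is affine in $c_3$ and $H$ is concave, so the maximiser is the interior stationary point, which a short computation places at $c_3=-c_1c_2$; this point is always feasible, and there $\vec\lambda$ factorises as the product of two binary distributions with biases $\tfrac{1\pm c_1}{2}$ and $\tfrac{1\pm c_2}{2}$, giving $H(\vec\lambda)=h\!\left(\tfrac{1+c_1}{2}\right)+h\!\left(\tfrac{1+c_2}{2}\right)$. (ii) Each summand is even and decreasing in $|c_i|$, so the maximum over $c_1^2+c_2^2\ge s^2$ is attained on the boundary $c_1^2+c_2^2=s^2$. (iii) Setting $t_i=c_i^2$ and $\phi(t)=h\!\left(\tfrac{1+\sqrt t}{2}\right)$ turns the objective into $\phi(t_1)+\phi(t_2)$ with $t_1+t_2=s^2$; proving that $\phi$ is concave on $[0,1]$ and invoking Jensen yields the bound $2\phi(s^2/2)=2h\!\left(\tfrac{1+s/\sqrt2}{2}\right)$, with equality at $c_1=c_2=s/\sqrt2$. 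Because $s=2(2\omega-1)$ and $h(p)=h(1-p)$, this is exactly $2h\!\left(\tfrac12-\tfrac{2\omega-1}{\sqrt2}\right)$.

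The main obstacle I anticipate is step (iii): establishing the concavity of $\phi(t)=h\big(\tfrac{1+\sqrt t}{2}\big)$ on $[0,1]$ is a genuine calculus computation (its second derivative mixes the $1/\sqrt t$ singularity with the logarithmic derivative of $h$), and one must also check throughout that the candidate optima $c_3=-c_1c_2$ and $c_1=c_2=s/\sqrt2$ stay inside the Bell tetrahedron for the whole range $\omega\in[\tfrac34,\tfrac{2+\sqrt2}{4}]$, so that the reasoning which temporarily ignores positivity remains legitimate.
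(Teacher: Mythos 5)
Your proposal is correct, and it reaches the bound by a genuinely different route than the paper's. The paper stays in the eigenvalue picture throughout: it imports the Horodecki-type constraint $\beta=\max\bigl\{2\sqrt{2}\sqrt{(\lambda_{\Phi^+}-\lambda_{\Psi^+})^2+(\lambda_{\Phi^-}-\lambda_{\Psi^-})^2},\dots\bigr\}$, uses permutation symmetry to keep a single branch, eliminates one eigenvalue, and then maximises $H(\vec\lambda)$ over the two remaining ones --- but that maximisation is performed \emph{numerically}, with only local optimality (a vanishing gradient along the symmetric line) verified analytically and global optimality read off the plot in Figure~\ref{fig:H_lambda}. You instead pass to correlation coordinates $(c_1,c_2,c_3)$, where the same criterion reads $c_{(1)}^2+c_{(2)}^2\ge 4(2\omega-1)^2$ (these really are the same constraints: for instance $2\sqrt{2}\sqrt{(\lambda_{\Phi^+}-\lambda_{\Psi^+})^2+(\lambda_{\Phi^-}-\lambda_{\Psi^-})^2}=2\sqrt{c_2^2+c_3^2}$), and run a fully analytic optimisation. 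All three of your steps check out: your symmetry reduction is legitimate because the feasible set is the union over index pairs of $\{c_i^2+c_j^2\ge s^2\}$ and $H$ is invariant under permuting the $c_j$; the stationary point in $c_3$ is indeed $c_3=-c_1c_2$, it lies in the tetrahedron since there the spectrum factors as $\frac{1}{4}(1\pm c_1)(1\pm c_2)\ge 0$, giving $H=h\bigl(\frac{1+c_1}{2}\bigr)+h\bigl(\frac{1+c_2}{2}\bigr)$; radial monotonicity pushes the optimum to the circle; and the concavity of $\phi(t)=h\bigl(\frac{1+\sqrt{t}}{2}\bigr)$ that you flag as the main obstacle does hold, since $\phi''\le 0$ reduces to $\ln\frac{1+u}{1-u}\le\frac{2u}{1-u^2}$ for $u=\sqrt{t}\in[0,1)$, which follows term by term from $\ln\frac{1+u}{1-u}=2\sum_{k\ge 0}\frac{u^{2k+1}}{2k+1}$ versus $\frac{2u}{1-u^2}=2\sum_{k\ge 0}u^{2k+1}$. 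Your Jensen optimum $c_1=c_2=\sqrt{2}(2\omega-1)$, $c_3=-c_1c_2$ reproduces exactly the paper's extremal spectrum in Equation~\eqref{eq:opt_sol_main_txt}, and both candidate points remain feasible on the whole range $\omega\in\bigl[\frac{3}{4},\frac{2+\sqrt{2}}{4}\bigr]$ since then $\sqrt{2}(2\omega-1)\le 1$. What your route buys is a complete, numerics-free proof of global optimality --- precisely the step the paper delegates to a plot; what the paper's route buys is that it works directly in the variables in which the entropy is computed and exhibits the optimal eigenvalues explicitly, which it then reuses to assert tightness of the bound.
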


			The bound given in Equation~\eqref{eq:single_round_intro} is plotted in Figure~\ref{fig:single_round_entropy}. The bound is tight, i.e., there exist states that saturate it. 
			As mentioned above, the ``interesting'' regime of the bound is that in which the conditional entropy is negative, i.e. $\omega\gtrapprox 0.775$ ($\beta\gtrapprox 2.2$).
			A similar bound, but for the conditional max-entropy $H_{\max}(\hat{A}_i|\hat{B}_i)_{\sigma}$ was derived in~\cite[Supplementary Note~2]{pfister2016universal}. As $H_{\max}(\hat{A}_i|\hat{B}_i)_{\sigma} \geq H(\hat{A}_i|\hat{B}_i)_{\sigma} $, their bound can also be used to upper bound the von Neumann entropy. However, using our bound directly leads to better quantitive results. In particular, the bound derived in~\cite{pfister2016universal} only leads to a negative upper bound for $\beta\gtrapprox 2.5$.
			
			The observation that there exists a regime in which the conditional entropy is positive although the CHSH inequality is violated is not new. Indeed, it was already known that  that some states, e.g., the Werner state, can be used to violate the CHSH inequality while presenting positive conditional entropy~\cite{friis2017geometry}. 
			
			The fact that the bound~\eqref{eq:single_round_intro} on the conditional entropy is negative as soon as the winning probability is larger than $\omega\gtrapprox 0.775$ allows our scheme to certify entanglement for honest implementations based only on their CHSH winning probability. In particular, the honest implementation might be different than the Werner state we considered above as an example. At the same time, since the minimum winning probability required to obtain a useful bound on the conditional entropy is larger than 0.75, a significant Bell violation is required. This has implications on the critical detection efficiency of the scheme: whereas a violation of the CHSH Bell inequality is possible as soon as the detection efficiency is larger than $66.7\%$~\cite{eberhard93background}, the minimal detection efficiency required to reach a CHSH value of $2.2$ is~$85.3\%$.
			
			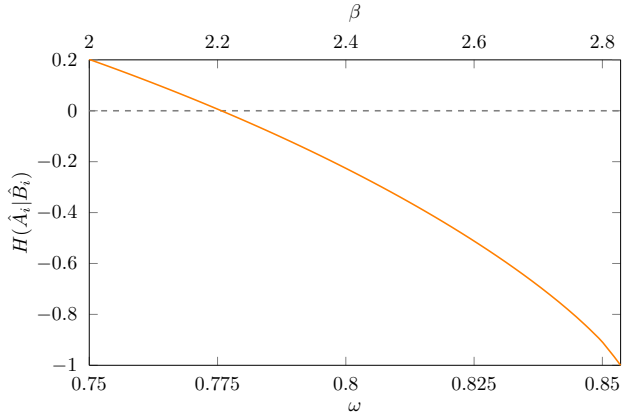
\begin{figure}
			\centering
			\begin{tikzpicture}[scale=0.75]
				\begin{axis}[
					height=7cm,
					width=11cm,
					xlabel=$\beta$,
					xmin=2,
					xmax=2.82842,
				    	xtick={2,2.2,2.4,2.6,2.8},
					x tick label style={yshift={2.5pt}},
					axis x line*=top,
					xlabel near ticks,
					ymax=0.2,
					ymin=-1,
					hide y axis
				]
				\end{axis}
				\begin{axis}[
					height=7cm,
					width=11cm,
					xlabel=$\omega$,
					ylabel=$H(\hat{A}_i|\hat{B}_i)$,
					xmin=0.75,
					xmax=0.853553,
					ymax=0.2,
					ymin=-1,
				    	xtick={0.75,0.775,0.80,0.825,0.85},
					x tick label style={yshift={-2.5pt},  
						/pgf/number format/.cd, precision=3, /tikz/.cd},
			         	ytick={0.2,0,-0.2,-0.4,-0.6,-0.8,-1},
					y tick label style={xshift={-2.5pt}},
					axis x line*=bottom
				]
				
				
				\addplot[gray,thick,dashed] coordinates {
				(0.75,0) (0.853553, 0)
				};
							
				\addplot[orange,thick,smooth] coordinates {
				(0.75, 0.201752) (0.753452, 0.176646) (0.756904, 0.150974) (0.760355, 0.124719) (0.763807, 0.0978629) (0.767259, 0.0703864) (0.770711, 0.042268) (0.774162, 0.0134847) (0.777614, -0.0159887) (0.781066, -0.0461796) (0.784518, -0.0771181) (0.78797, -0.108837) (0.791421, -0.141374) (0.794873, -0.174769) (0.798325, -0.209068) (0.801777, -0.244322) (0.805228, -0.280591) (0.80868, -0.31794) (0.812132, -0.356447) (0.815584, -0.396202) (0.819036, -0.437312) (0.822487, -0.479905) (0.825939, -0.524139) (0.829391, -0.570209) (0.832843, -0.61837) (0.836294, -0.668959) (0.839746, -0.722456) (0.843198, -0.779581) (0.84665, -0.841563) (0.850102, -0.910984) (0.853553, -1)
				};

				\end{axis}  
			\end{tikzpicture}
			
			\caption{The upper bound on $H(\hat{A}_i|\hat{B}_i)$ as a function of the CHSH  winning probability $\omega$ or the violation $\beta$.}\label{fig:single_round_entropy}
			\end{figure}


\section{Preliminaries}\label{sec:prelim}

	\subsection{General notation}
		
		The set $\{1,2,\dotsc,n\}$ is denoted by $[n]$. 
		
		All logarithms are in base 2. 
		Random variables (RV) are denoted by capital letters while specific values are denoted by small letters.
		Sets are denoted with calligraphic fonts.
		For example, we use $X_i$ to denote a RV over $\mathcal{X}$ and $x_i$ to denote a certain value $x_i\in\mathcal{X}$.

		Most RV will refer to a specific $i\in[n]$ denoted in their subscript, as in $X_i$ above. RV describing a range between $i,j\in[n]$ for $i<j$ are written as $X_{i,\cdots,j}=X_i,\cdots,X_j$. When no subscript appears then the range is from $1$ to $n$, i.e., $X=X_1,\cdots,X_n$.
		
		The fidelity of two quantum states is given by $F(\rho,\sigma)=\| \sqrt{\rho} \sqrt{\sigma}\|_1^2$.

	\subsection{The CHSH  inequality and game}\label{sec:pre_chsh}
		
		In a bipartite scenario where each party, Alice and Bob, can perform one of two possible measurements, indexed by $X_i,Y_i\in\{0,1\}$, and with outcomes denoted by $A_i,B_i\in\{0,1\}$, a Bell inequality is a linear constraint on the conditional probability distributions $P(A_i,B_i|X_i,Y_i)$ which is satisfied by all local hidden variable models~\cite{brunner2014bell}. One Bell inequality of special interest is the so-called Clauser-Horne-Shimony-Holt (CHSH) inequality~\cite{clauser1969proposed}, which takes the form
		\begin{equation*}
			\beta = \sum_{a_i,b_i,x_i,y_i} (-1)^{a_i+b_i+x_i y_i} P(a_i,b_i|x_i,y_i) \leq 2 \;.
		\end{equation*}
		
		This inequality admits a maximum quantum violation of $\beta=2\sqrt{2}$. To achieve this maximal violation Alice and Bob can share the maximally entangled state $\ket{\Phi^+} = \left(\ket{00}+\ket{11}\right)/\sqrt{2}$ and measure it with the following measurements: Alice's measurements $X_i=0$ and $X_i=1$ correspond to the Pauli operators~$\sigma_z$ and~$\sigma_x$ respectively and Bob's measurements $Y_i=0$ and $Y_i=1$ to $\left(\sigma_z + \sigma_x\right)/\sqrt{2}$ and $\left(\sigma_z - \sigma_x\right)/\sqrt{2}$ respectively.
		We can therefore restrict our attention to $\beta\in[2,2\sqrt{2}]$. 
		
		The CHSH inequality can be equivalently expressed in terms of a game. The game is defined via the  winning condition 
		$w_{\text{CHSH}}=1$ iff $A_i\oplus B_i = X_i \cdot Y_i $. 
		The optimal quantum strategy described above achieves winning probability $\omega = \frac{2+\sqrt{2}}{4}\approx 0.85$, while the optimal classical strategy achieves a winning probability of $0.75$.
		The relation between the winning probability in the  game and the violation of CHSH inequality is given by $\omega=1/2+\beta/8$ and we have $\omega \in \left[ \frac{3}{4}, \frac{2+\sqrt{2}}{4}\right]$.

	\subsection{Quantum entropies and Markov chains}
	
		\paragraph*{Shannon entropy.}
		The Shannon entropy of a probability distribution $\{p_i\}$ is given by $H(\{p_i\})=-\sum_{p_i}p_i\log(p_i)$.
		For a Bernoulli process with probability $p$ the we use the binary entropy function $h(p)=-p\log (p) -(1-p)\log (1-p)$.

		\paragraph*{von Neumann entropy.}
		The von Neumann entropy $H(\rho)$ of a quantum state $\rho$ is given by $H(\rho)=-\Tr (\rho \log \rho)$. 
		Given a bipartite state $\rho_{AB}\in\mathcal{H}_A\otimes\mathcal{H}_B$ the conditional von Neumann entropy is 
		\[
			H(A|B)_{\rho_{AB}} = H(\rho_{AB}) -H(\rho_{B})\;.
		\]
		When the state on which the entropy is evaluated is clear from the context we drop the subscript and write $H(A|B)$.
	
		\paragraph*{Max-entropy.}
		The explicit definition of the quantum max-entropy will not be of use in the current work. Yet, we state it here for completeness. 
		The conditional smooth max-entropy of a bipartite quantum state $\rho_{AB}$ is given by 
		\[
			H^{\varepsilon}_{\max}(A|B)_{\rho_{AB}} = \log \inf_{\sigma_{AB} \in \mathcal{B}^\varepsilon(\rho_{AB})} \sup_{\tau_B} \| \sigma_{AB}^{\frac{1}{2}}\tau_B^{-\frac{1}{2}}\|_1^2 \;,
		\]
		where  $\mathcal{B}^\varepsilon(\rho_{AB})$ the set of sub-normalised states $\sigma_{AB}$ with $P(\rho_{AB},\sigma_{AB}) \leq \varepsilon$ and $P$ is the purified distance~\cite{tomamichel2010entropyduality}.

		\paragraph*{Markov chains.}
		A tripartite quantum state $\rho_{ABC}$ is said to fulfil the Markov chain condition $A\leftrightarrow B \leftrightarrow C$ if $I(A:C|B) = 0$, where $I(A:C|B)= H(AB) + H(BC) - H(B) - H(ABC)$ is the conditional mutual information. Note that $I(A:C|B) = 0$ if and only if given $B$, $A$ and $C$ are independent.
		
	\subsection{The entropy accumulation theorem}\label{sec:eat_def_app}
	
		The entropy accumulation theorem (EAT)~\cite[Theorem 4.4]{dupuis2016entropy} gives us a way of bounding the amount of smooth min- or max-entropy accumulated during a sequential process fulfilling certain conditions. In contrast to previous works where a bound on the smooth min-entropy was derived using the EAT, the current work uses the EAT to bound the smooth max-entropy. 
		We state here the necessary details in the context of our work. 

		To apply the EAT one needs to define ``EAT channels'' which describe the sequential process under consideration (in our case, for example, the channels are defined via the actions of our DIEC protocol). EAT channels are defined as follows.
		\begin{defn}[EAT channels]\label{def:eat_channels}
			EAT channels $\mathcal{N}_i:R_{i-1}\rightarrow R_i O_i S_i W_i$, for $i\in[n]$, are CPTP maps such that for all $i\in [n]$:
			\begin{enumerate}
				\item $O_i$ are finite dimensional quantum systems of dimension $d_{O_i}$ and $W_i$ are finite-dimensional classical systems (RV).  
					$S_i$ and $R_i$ are arbitrary quantum systems. 
				\item For any $i\in[n]$ and any input state $\sigma_{R_{i-1}}$, the output state $\sigma_{R_i O_i S_iW_i} =  \mathcal{N}_i  \left(\sigma_{R_{i-1}} \right)$ has the property that the classical value $W_i$ can be measured from the marginal $\sigma_{O_iS_i}$ without changing the state. 
						That is, for the map $\mathcal{T}_i :O_iS_i\rightarrow O_iS_iW_i$ describing the process of deriving $W_i$ from $O_i$ and $S_i$, it holds that $\Tr_{W_i}\circ\mathcal{T}_i \left(\sigma_{O_iS_i}\right) = \sigma_{O_iS_i} $.
				\item For any initial state 
				$\tau_{R_0}^0$, the final state $\tau_{OSW} = \left( \Tr_{R_n}\circ \mathcal{N}_n \circ \dots \circ \mathcal{N}_1\right) \; \tau_{R_0}^0$ 
				fulfils the Markov chain condition $O_{1\dotsc i-1} \leftrightarrow S_{1\dotsc i-1} \leftrightarrow S_i$ for each $i\in[n]$. 
			\end{enumerate}
		\end{defn}

		We will use below the following notation. Given a value $w=w_1, \dotsc, w_n \in \mathcal{W}^n$, where $\mathcal{W}$ is a finite alphabet, we denote by $\mathrm{freq}_w$ the probability distribution over $\mathcal{W}$ defined by $\mathrm{freq}_w(\tilde{w}) = \frac{| \left\{ i | w_i = \tilde{w} \right\} |}{n}$ for $\tilde{w}\in\mathcal{W}$.
		If $\tau$ is a state classical on $W$ we write $\Pr\left[w\right]_{\tau}$ to denote the probability that $\tau$ assigns to $w$. 
		
		\begin{defn}[Max-tradeoff functions]\label{def:max_tradeoff_func}
			Let $\mathcal{N}_1,\ldots,\mathcal{N}_n$ be a family of EAT channels. Let $\mathcal{W}$ denote the common alphabet of $W_1,\ldots,W_n$. 	
			A concave\footnote{Let  $\hat{\Omega}$ be a set of frequencies defined via $\mathrm{freq}_w(\tilde{w})\in\hat{\Omega}$ if and only if $\tilde{w}\in\Omega$.
			We can consider concave functions, in contrast to affine ones~\cite{dupuis2016entropy}, since the event $\Omega$ defined in the current work results in a \emph{convex set} $\hat{\Omega}$.}  function $f_{\max}$ from the set of probability distributions~$p$ over $\mathcal{W}$ to the real numbers is called a \emph{max-tradeoff function} for $\{\mathcal{N}_i\}$ if it satisfies
			\[
				f_{\max}(p) \geq \sup_{\sigma_{R_{i-1}}:\mathcal{N}_i(\sigma)_{W_i}=p} H\left( O_i | S_i  \right)_{\mathcal{N}_i(\sigma)} \;
			\]
			for all $i\in [n]$, where the supremum is taken over all input states of $\mathcal{N}_i$ for which the marginal on $W_i$ of the output state is the probability distribution $p$.  
		\end{defn}

		The statement of the EAT, relevant for the smooth max-entropy, is given below.
		\begin{thm}[EAT]\label{thm:eat}
			Let $\mathcal{N}_i:R_{i-1}\rightarrow R_i O_i S_i W_i$ for $i\in [n]$ be EAT channels as in Definition~\ref{def:eat_channels},  $\tau_{OSW} = \left( \Tr_{R_n}\circ \mathcal{N}_n \circ \dots \circ \mathcal{N}_1\right) \tau_{R_0}$ be the final state, 
			$\Omega$ an event  defined over $\mathcal{W}^n$, $\Pr[\Omega]_{\tau}$ the probability of $\Omega$ in $\tau$, 
			and $\tau_{|\Omega}$ the final state conditioned on $\Omega$. Let $\varepsilon_{\mathrm{smo}} \in (0,1)$.
			
			For $f_{\max}$ a max-tradeoff function for  $\{\mathcal{N}_i\}$ as in Definition~\ref{def:max_tradeoff_func} and any $t\in \mathbb{R}$ such that $f_{\max}\left(\mathrm{freq}_w \right) \leq t$ for any $w\in\mathcal{W}^n$ for which $\Pr\left[w\right]_{\tau_{|\Omega}}> 0$,
			\[
				H_{\max}^{\varepsilon_{\mathrm{smo}}} \left( O|S \right)_{\tau_{|\Omega}} \leq n t + v\sqrt{n} \;,
			\]
			where $v = 2\left(\log(1+2 d_{O_i} ) + \lceil \|  \triangledown f_{\max} \|_\infty \rceil \right)\sqrt{1-2\log (\varepsilon_{\mathrm{smo}} \cdot \Pr[\Omega]_{\tau})}$ and $d_{O_i}$ denotes the dimension of~$O_i$.
		\end{thm}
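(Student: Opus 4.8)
The plan is to follow the Rényi-entropy interpolation strategy underlying \cite{dupuis2016entropy}: bound the conditional smooth max-entropy $H_{\max}^{\varepsilon_{\mathrm{smo}}}(O|S)_{\tau_{|\Omega}}$ by a conditional sandwiched Rényi entropy $H_\alpha^\uparrow(O|S)$ at a parameter $\alpha<1$ chosen close to $1$; decompose this Rényi entropy into a sum of single-round contributions by a chain rule valid under the Markov conditions; bound each single-round term by its von Neumann value (controlled by the max-tradeoff function) plus a correction of order $1-\alpha$; and finally optimise over $\alpha$ to produce the $\sqrt n$ second-order term. Throughout, the fact that $f_{\max}$ is a \emph{max}-tradeoff function, bounding $H(O_i|S_i)$ from above rather than below, fixes the direction of every inequality, so I would work with $\alpha\in(\tfrac12,1)$, the regime in which $H_{\max}=H_{1/2}^\uparrow$ sits at the boundary, $\lim_{\alpha\to1}H_\alpha^\uparrow=H$, and $H_\alpha^\uparrow$ is monotone between them.

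First I would pass from the smooth max-entropy of the conditional state to an unconditioned Rényi quantity. Using the standard smoothing bound, for $\alpha\in(\tfrac12,1)$ one has $H_{\max}^{\varepsilon_{\mathrm{smo}}}(O|S)_{\tau_{|\Omega}}\le H_\alpha^\uparrow(O|S)_{\tau_{|\Omega}}+\frac{g(\varepsilon_{\mathrm{smo}})}{1-\alpha}$ for an explicit $g$. Conditioning on the event $\Omega$ then costs a term of order $\log(1/\Pr[\Omega]_\tau)$, which is the origin of the $\Pr[\Omega]_\tau$ appearing inside $v$; after this step the whole analysis takes place on the unconditioned state $\tau$, where the channel structure $\mathcal{N}_i$ and the Markov conditions are available.

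The technical heart is the chain rule for $H_\alpha^\uparrow$ under the Markov condition $O_{1\dots i-1}\leftrightarrow S_{1\dots i-1}\leftrightarrow S_i$. The Markov property is exactly what allows the new side information $S_i$ to be ``inserted'' at step $i$ without uncontrollably inflating the global Rényi entropy, yielding a bound of the form $H_\alpha^\uparrow(O|S)_\tau\le\sum_{i=1}^n\sup H_\alpha^\uparrow(O_i|S_i)$, each supremum being over the admissible inputs of $\mathcal{N}_i$. Each single-round term is then handled by a Taylor expansion of $\alpha\mapsto H_\alpha^\uparrow(O_i|S_i)$ around $\alpha=1$: the zeroth order is $H(O_i|S_i)$ and the first-order correction is controlled by a variance-type quantity, bounded by the square of a fluctuation range of order $\log(1+2d_{O_i})$. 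To enforce the \emph{global} frequency constraint ``$f_{\max}(\mathrm{freq}_w)\le t$'' at the \emph{local}, per-round level, I would replace $f_{\max}$ by the affine function tangent to it, so that the per-round von Neumann bound reads $t$ plus a fluctuation whose size is governed by $\lceil\|\triangledown f_{\max}\|_\infty\rceil$; this is where the gradient enters $v$.

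Collecting the estimates gives
\[
H_{\max}^{\varepsilon_{\mathrm{smo}}}(O|S)_{\tau_{|\Omega}}\le nt+n(1-\alpha)\big(\log(1+2d_{O_i})+\lceil\|\triangledown f_{\max}\|_\infty\rceil\big)^2+\frac{c}{1-\alpha},
\]
with $c$ of order $1-2\log(\varepsilon_{\mathrm{smo}}\,\Pr[\Omega]_\tau)$. Optimising over $1-\alpha\sim1/\sqrt n$ balances the linear-in-$n$ Rényi correction against the smoothing penalty and yields precisely $nt+v\sqrt n$ with $v=2(\log(1+2d_{O_i})+\lceil\|\triangledown f_{\max}\|_\infty\rceil)\sqrt{1-2\log(\varepsilon_{\mathrm{smo}}\,\Pr[\Omega]_\tau)}$. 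I expect the main obstacle to be the Rényi chain rule under the Markov condition: proving that the global $H_\alpha^\uparrow$ splits additively over rounds with only the single-round suprema requires the full strength of the sandwiched-Rényi chain rule together with the Markov structure, and it is there that one must be most careful about the admissible direction of the inequalities. An alternative is to derive the max-version from the min-version via the duality $H_{\max}(A|B)=-H_{\min}(A|C)$ on a purification, but translating the Markov condition and the max-tradeoff function through the duality is delicate, so a direct argument is cleaner.
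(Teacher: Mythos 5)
First, note what the paper's own ``proof'' of Theorem~\ref{thm:eat} consists of: nothing is proved here at all. The theorem is imported from \cite{dupuis2016entropy} (Theorem~4.4 there), and the only argument the paper supplies concerns two cosmetic modifications of the statement --- that the system $R'$ appearing inside the original definition of the max-tradeoff function may be dropped (Remark~4.2 of that reference: the supremum can be restricted to product inputs), and that the extension system $E$ conditioned on in the original statement may be dropped (the theorem holds for every extension of $\tau_{R_0}$, in particular the product one, for which the conditional smooth max-entropy is largest). Your proposal addresses neither of these reductions; instead it attempts to reconstruct the internal proof of the EAT itself, i.e., it re-derives the cited black box rather than justifying the statement in the form the paper actually uses.

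Taken on its own terms, your skeleton is a broadly faithful outline of the proof in \cite{dupuis2016entropy}: sandwiched R\'enyi entropies at $\alpha<1$ (with $H_{\max}=H_{1/2}^{\uparrow}$), a $\log(1/\Pr[\Omega]_{\tau})$ penalty for conditioning on the event, a chain rule valid under the Markov conditions of Definition~\ref{def:eat_channels}, a per-round expansion around $\alpha=1$ with correction of order $(1-\alpha)\bigl(\log(1+2d_{O_i})+\cdots\bigr)^2$, and the choice $1-\alpha\sim 1/\sqrt{n}$; the final arithmetic is even consistent with the stated $v$. But two steps are assumptions rather than proofs, and they are the entire technical content of the theorem. (i) The chain rule $H_\alpha^{\uparrow}(O|S)_{\tau}\le\sum_i\sup H_\alpha^{\uparrow}(O_i|S_i)$ is the central result of \cite{dupuis2016entropy}; you correctly flag it as the main obstacle but give no route to it. (ii) Your handling of the frequency constraint does not close: the per-round suprema produced by the chain rule retain no memory of the observed global frequency $\mathrm{freq}_w$, so replacing the single-round bound by $t$ ``via the affine tangent'' is not a step one can take --- the supremum runs over all admissible inputs, including ones whose statistics are incompatible with $\Omega$. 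In \cite{dupuis2016entropy} this is resolved by an actual construction: the channels are modified so as to append auxiliary registers whose weights encode (an affine reduction of) the tradeoff function, thereby absorbing the constraint $f_{\max}(\mathrm{freq}_w)\le t$ into the entropy of a modified state; the term $\lceil\|\triangledown f_{\max}\|_\infty\rceil$ in $v$ is precisely the logarithm of the dimension of those auxiliary registers. Without this construction the gradient term has no derivation in your argument, and the claimed inequality does not follow from the pieces you assembled.
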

		
		We used above a slightly and trivially modified statement of the EAT, compared to that of~\cite{dupuis2016entropy}. 
		The definition of the max-tradeoff function used in~\cite{dupuis2016entropy} consideres $H\left( O_i | S_i  R' \right)$ where $R'$ is isomorphic to to $R_{i-1}$. For the calculation of the supremum one can always assume that the system on $R'$ is in product with the rest of the system and hence drop it here (see Remark~4.2 in~\cite{dupuis2016entropy}). 
		Furthermore, in~\cite{dupuis2016entropy} the EAT is stated with $H_{\max}^{\varepsilon_{\mathrm{smo}}}(O|SE)$ where $E$ denotes a system extending the initial state $\tau_{R_0}$, i.e., we have $\Tr_E(\tau_{R_0E}) = \tau_{R_0}$. 
			The $E$, in fact, can be dropped as the theorem must hold for any $\tau_{R_0E}$ and, hence, in particular for $\tau_{R_0E}=\tau_{R_0}\otimes\tau_E$ as in Theorem~\ref{thm:eat}, for which the conditional smooth max-entropy is maximal.
			In our context, an alternative way of thinking about this is to note that our goal is to bound $H_{\max}^{\varepsilon_{\mathrm{smo}}} \left( O|S \right)_{\tau_{|\Omega}}$. As it clearly depends only on the registers~$O$ and~$S$, $E$ does not take part in the proof and we can, w.l.o.g.\@, consider an initial state of the form $\tau_{R_0}\otimes\tau_E$. The final state then also has a tensor product form.


\section{Main parts of the proof}\label{sec:main_proofs}

	In this section we present the main steps and ideas used in the proof of Theorem~\ref{thm:main_thm}. The full details are given in the Appendix. 

	\subsection{Modified protocol}\label{sec:modified protocol}
	
		As explained above, our goal is to show that there exists an entanglement distillation protocol that can distill the entanglement present in  $\rho_{|\Omega}$. 
		Instead of considering an entanglement distillation protocol that acts on the state $\rho_{|\Omega}$ directly, we consider a slightly modified scenario. The modified scenario will result in a state $\tau_{|\Omega}$, closely related to $\rho_{|\Omega}$, from which at least same amount of entanglement can be distilled. 
		
		Concretely, we modify the real DIEC, Protocol~\ref{pro:ent_test}, to define the modified protocol stated as Protocol~\ref{pro:mod_ent_test}. The only difference between this protocol and Protocol~\ref{pro:ent_test} is in Steps~\ref{prostep:state_after_proj} and~\ref{prostep:symm_maps}, which we explain below.
		
		We remark that Protocol~\ref{pro:mod_ent_test} is being used only as a step in the proof of our theorem. We do not claim at any point that this modified protocol can be implemented by the verifier given the uncharacterised devices (in fact, it cannot). 
		It will become clear later on why the modified protocol is, nevertheless, needed in our proof.
		
		\begin{algorithm}
			\caption{Modified DIEC (based on the CHSH inequality)}
			\label{pro:mod_ent_test}
			\begin{algorithmic}[1]
				\STATEx \textbf{Arguments:} 
					\STATEx\hspace{\algorithmicindent} $D$ -- untrusted measurement device of two components with inputs and outputs set $\{0,1\}$
					\STATEx\hspace{\algorithmicindent} $n \in \mathbb{N}_+$ -- number of rounds
					\STATEx\hspace{\algorithmicindent} $\omega_{\mathrm{exp}}$ -- expected winning probability in the CHSH game  
					\STATEx\hspace{\algorithmicindent} $\delta_{\mathrm{est}} \in (0,1)$ -- width of the statistical confidence interval for the estimation test
					
				\STATEx
				
				\STATE For every round $i\in[n]$ do Steps~\ref{prostep:first}-\ref{prostep:symm_maps}:
					\STATE\hspace{\algorithmicindent}Let $\phi^i$ denote the bipartite state produced by the source in this round. \label{prostep:first}
					
					\STATE\hspace{\algorithmicindent}Set $A_i,B_i,C_i,D_i,X_i,Y_i,W_i=\perp$.
					\STATE\hspace{\algorithmicindent}Choose $T_i=1$ with probability $\gamma$ and $T_i=0$ otherwise.
					\STATE\hspace{\algorithmicindent}If $T_i=1$: 
						\STATE\hspace{\algorithmicindent}\hspace{\algorithmicindent}Choose the inputs $X_i,Y_i\in\{0,1\}$ uniformly at random. 
						\STATE\hspace{\algorithmicindent}\hspace{\algorithmicindent}Measure $\phi^i$ using $D$ with the inputs $X_i,Y_i$. Record the outputs as $A_i,B_i\in\{0,1\}$. 
						\STATE\hspace{\algorithmicindent}\hspace{\algorithmicindent}Set $W_i = 1$ if the CHSH game is won and $W_i=0$ otherwise.
					\STATE\hspace{\algorithmicindent}If $T_i=0$:
						
						\STATE\hspace{\algorithmicindent}\hspace{\algorithmicindent}Keep $\phi^i$ in the registers $\hat{A}_i\hat{B}_i$. 
						
						\STATE\hspace{\algorithmicindent}\hspace{\algorithmicindent}Apply the projection as in Equation~\eqref{eq:state_after_proj} to create $\bar{\rho}^i_{\hat{A}_i\hat{B}_iC_iD_i}$. \label{prostep:state_after_proj}
						\STATE\hspace{\algorithmicindent}\hspace{\algorithmicindent}\parbox[t]{\dimexpr\linewidth-\algorithmicindent}{Choose a random unitary $U$ uniformly within the set $\{\mathbb{I},\sigma_x,\sigma_y,\sigma_z\}$ and apply it to create $\tilde{\rho}^i_{\hat{A}_i\hat{B}_iC_iD_i}$ as in Equation~\eqref{eq:symm_maps}.\strut} \label{prostep:symm_maps}

					\STATE Abort if $\sum_i \chi(T_i=1)\,W_i < (\omega_{\mathrm{exp}}\gamma - \delta_{\mathrm{est}}) n \;$. \label{prostep:abort_ET}
			\end{algorithmic}
			\end{algorithm}

			\subsubsection{First modification: reduction to qubits}
			
				When a test is not being performed by the verifier the state produced by the source is being kept in the memory. In Protocol~\ref{pro:ent_test} the state is being kept as is. In contrast, in Protocol~\ref{pro:mod_ent_test} we first project the state such that the resulting state is a two qubit state. The projection, described by two local projections, can obviously only decrease the amount of entanglement of the kept state. 
				
				As one can guess, we define the specific projection via the measurements performed by the measurement devices.\footnote{These are unknown to the verifier but, as mentioned above, Protocol~\ref{pro:mod_ent_test} does not need to be implemented in practice by the verifier. The only thing that matters is that such projectors \emph{exist}.}
				Formally, due to Jordan's lemma (see, e.g.,~\cite[Lemma 4.1]{scarani2012device}), the measurements operators $\Pi_{a|x}^{A_i}$, $\Pi_{b|y}^{B_i}$ of each party, Alice and Bob, in each round $i\in[n]$ can be written in a suitable local basis as
				\begin{equation*}\label{eq:settings}
				\begin{split}
					\Pi^{A_i}_{0|0} - \Pi^{A_i}_{1|0} &= \underset{c_i}\oplus A_0^{c_i} = \underset{c_i}\oplus \sigma_z^{c_i}\\
					\Pi^{B_i}_{0|0} - \Pi^{B_i}_{1|0} &= \underset{d_i}\oplus B_0^{d_i} = \underset{d_i}\oplus \sigma_z^{d_i}\\
					\Pi^{A_i}_{0|1} - \Pi^{A_i}_{1|1} &= \underset{c_i}\oplus A_1^{c_i}= \underset{c_i}\oplus \cos(\alpha_{c_i}) \sigma_z^{c_i} + \sin(\alpha_{c_i})\sigma_x^{c_i}\\
					\Pi^{B_i}_{0|1} - \Pi^{B_i}_{1|1} &= \underset{d_i}\oplus B_1^{d_i} = \underset{d_i}\oplus \cos(\beta_{d_i}) \sigma_z^{d_i} + \sin(\beta_{d_i})\sigma_x^{d_i}
				\end{split}
				\end{equation*}
				with parameters $\alpha_{c_i}$ and $\beta_{d_i}$ and the different $\sigma$'s denoting the Pauli matrices.

				With this in mind, when $T_i=0$ we let each party perform a measurement on $\rho^i_{\hat{A}_i\hat{B}_i}$ (which is identical to $\phi^i$ in the case $T_i=0$) according to the following operators:
				\begin{equation}\label{eq:jordan_proj}
				\begin{split}
					\Pi^{\hat{A}_i}_{c_i} &= \ketbra{c_i}{c_i}\otimes \mathbb{I}\\
					\Pi^{\hat{B}_i}_{d_i} &= \ketbra{d_i}{d_i}\otimes \mathbb{I}.
				\end{split}
				\end{equation}
				This gives  each party the index of her/his respective Jordan blocks $c_i$ and $d_i$. 
				We denote by $p(c_i,d_i)$ the probability of observing the indices $c_i,d_i$. 
				The global system is projected into the two-qubit state 
				\begin{equation}\label{eq:state_after_proj}
					\bar{\rho}^i_{\hat{A}_i\hat{B}_i C_i D_i} = \frac{\left(\Pi^{\hat{A}_i}_{c_i} \otimes \Pi^{\hat{B}_i}_{d_i}\right)\rho^i_{\hat{A}_i\hat{B}_i} \left(\Pi^{\hat{A}_i}_{c_i} \otimes \Pi^{\hat{B}_i}_{d_i}\right)}{\Tr\left(\Pi^{\hat{A}_i}_{c_i} \otimes \Pi^{\hat{B}_i}_{d_i}\rho^i_{\hat{A}_i\hat{B}_i}\right)}\otimes \ketbra{c_i}{c_i}\otimes\ketbra{d_i}{d_i}\;,
				\end{equation}
				where we added a classical register for the measurement outcomes. At this stage, the parties exchange their indices $c_i$ and $d_i$ with each other. It is easy to see that the considered projection step can be described via LOCC.

				Note the following:
				\begin{enumerate}
					\item Due to the projection, the registers $\hat{A}_i\hat{B}_i$ are now qubit registers. 
					\item Since the projectors given in Equation~\eqref{eq:jordan_proj} are defined via the measurements used by the device when $T_i=1$, they could also be applied when $T_i=1$ without changing the resulting state. This is made formal in Lemma~\ref{lem:proj_idn_subspace} below (for the proof see Appendix~\ref{sec:proofs_mod_prot}). 
				\end{enumerate}

				\begin{lemma}\label{lem:proj_idn_subspace}
					Consider a scenario in which the projection to the two qubit space is applied on the state $\phi^i$ directly after it is produced by the source in the $i$'th round (i.e., before choosing the value of $T_i$). Denote the resulting state in the end of the $i$'th round in such a case by $\bar{\bar{\rho}}^i$. Then 
					\[
						\bar{\bar{\rho}}^i_{\hat{A}_i\hat{B}_i A_iB_i C_i D_iX_iY_i} = \bar{\rho}^i_{\hat{A}_i\hat{B}_i A_iB_i C_i D_iX_iY_i}  \;,
					\]
					where $\bar{\rho}^i$ is as defined in Equation~\eqref{eq:state_after_proj}.
				\end{lemma}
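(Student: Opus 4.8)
The plan is to split the end-of-round-$i$ state according to the value of the test register $T_i$ and to verify the two branches separately, since the \emph{only} operational difference between Protocol~\ref{pro:mod_ent_test} and the alternative scenario is \emph{when} the block projection of Equation~\eqref{eq:jordan_proj} is applied. In the branch $T_i=0$ there is nothing to prove: Protocol~\ref{pro:mod_ent_test} applies exactly the projectors~\eqref{eq:jordan_proj} to produce $\bar{\rho}^i_{\hat{A}_i\hat{B}_iC_iD_i}$ of Equation~\eqref{eq:state_after_proj}, and since no operation touches $\hat{A}_i\hat{B}_i$ between the emission of $\phi^i$ and this projection, applying the same projection one step earlier (before the coin flip determining $T_i$) produces an identical branch, populating $C_i,D_i$ in the same way. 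Hence it suffices to treat the test branch $T_i=1$, where Protocol~\ref{pro:mod_ent_test} measures $\phi^i$ directly while the alternative first projects onto the Jordan blocks and only then measures.

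The core of the argument is the structural fact supplied by Jordan's lemma: in the local bases in which the measurement operators take the block-diagonal form of the paragraph preceding Equation~\eqref{eq:jordan_proj}, each $\Pi^{A_i}_{a_i|x_i}$ and $\Pi^{B_i}_{b_i|y_i}$ is block diagonal and therefore commutes with the corresponding block projectors, $[\Pi^{A_i}_{a_i|x_i},\Pi^{\hat{A}_i}_{c_i}]=0$ and $[\Pi^{B_i}_{b_i|y_i},\Pi^{\hat{B}_i}_{d_i}]=0$. First I would use this together with the completeness relations $\sum_{c_i}\Pi^{\hat{A}_i}_{c_i}=\mathbb{I}$, $\sum_{d_i}\Pi^{\hat{B}_i}_{d_i}=\mathbb{I}$ and cyclicity of the trace to show that prepending the projection leaves the measurement statistics invariant:
\begin{equation*}
	P(a_i,b_i|x_i,y_i) = \Tr\!\bigl[(\Pi^{A_i}_{a_i|x_i}\otimes\Pi^{B_i}_{b_i|y_i})\,\phi^i\bigr] = \sum_{c_i,d_i}\Tr\!\bigl[(\Pi^{A_i}_{a_i|x_i}\otimes\Pi^{B_i}_{b_i|y_i})\,\Pi^{\hat{A}_i}_{c_i}\Pi^{\hat{B}_i}_{d_i}\,\phi^i\,\Pi^{\hat{B}_i}_{d_i}\Pi^{\hat{A}_i}_{c_i}\bigr]\;.
\end{equation*}
The right-hand side is precisely the joint outcome distribution produced when one first projects onto the blocks $c_i,d_i$ (recording them) and then measures, so summing over $c_i,d_i$ recovers the direct-measurement data. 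The same commutation relations, applied to the post-measurement state rather than its trace, show that the residual quantum content on $\hat{A}_i\hat{B}_i$ retained in the test branch is reorganised consistently by the projection rather than altered by it.

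Finally I would reassemble the full round-$i$ state as the $\gamma$/$(1-\gamma)$ mixture of the two branches and read off the reduced state on $\hat{A}_i\hat{B}_iA_iB_iC_iD_iX_iY_i$: the $T_i=0$ branch agrees by construction and the $T_i=1$ branch agrees by the invariance just established, yielding $\bar{\bar{\rho}}^i=\bar{\rho}^i$ on the stated registers. The step I expect to be the main obstacle is the careful bookkeeping of the block registers $C_i,D_i$ and of the coherences between distinct Jordan blocks in the test branch: prepending the projection decoheres the block label, so one must verify that this decoherence is immaterial for the listed reduced state — either because the post-measurement state on $\hat{A}_i\hat{B}_i$ is not retained coherently across blocks in the test branch, or because the block label is accounted for identically in both scenarios. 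The commutation relations above are exactly what guarantees that recording a block index never disturbs the recorded measurement data, so this last point is a matter of pinning down the register conventions precisely rather than a genuine analytic difficulty.
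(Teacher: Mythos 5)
Your proposal is correct and follows essentially the same route as the paper's proof: split on the value of $T_i$, note the $T_i=0$ branch is identical by construction, and for the $T_i=1$ branch use the block-diagonal structure from Jordan's lemma together with completeness of the projectors $\Pi^{\hat{A}_i}_{c_i},\Pi^{\hat{B}_i}_{d_i}$ to show the measurement statistics are unchanged, finally recombining since $T_i$ is chosen independently of when the projection occurs. Your version merely makes explicit the commutation relations (and the tracing out of the block labels $C_i,D_i$ in the test branch) that the paper leaves implicit in the identity $\Pi^{A_i}_{a|x}=\sum_{c_i}\Pi^{A_i}_{a|x}\Pi^{\hat{A}_i}_{c_i}$.
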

	
				It is clear that for the state $\bar{\bar{\rho}}^i$ the measurements in the case of $T_i=1$ act on the subspace of the same Hilbert space which is otherwise kept in $\hat{A}_i\hat{B}_i$ when $T_i=0$, since the projection is made before choosing the value of $T_i$. Lemma~\ref{lem:proj_idn_subspace} implies that this is also the case when the projection is applied only when $T_i=0$ as in our modification.
				Without this property one could not argue that the test rounds, i.e., those for which $T_i=1$, represent also the rounds with $T_i=0$.

			\subsubsection{Second modification: reduction to Bell diagonal states}
			
				After the projection, we apply another step that can be seen as a symmetrisation step, also called twirling in the literature.	
				In each round for which $T_i=0$, a one-qubit unitary $U$ is chosen uniformly within the set $\{\mathbb{I},\sigma_x,\sigma_y,\sigma_z\}$ and applied on both systems $\hat{A}$ and $\hat{B}$. That is,
				\begin{equation}\label{eq:symm_maps}
					\tilde{\rho}^i_{\hat{A}_i\hat{B}_i|c_i,d_i} = \frac{1}{4} \sum_{U} \left(U\otimes U\right) \bar{\rho}^i_{\hat{A}_i\hat{B}_i|c_i,d_i} \left(U\otimes U\right)^{\dagger} \;,
				\end{equation}
				where $\bar{\rho}^i_{\hat{A}_i\hat{B}_i}$ is defined in Equation~\eqref{eq:state_after_proj} and the sum is over $U\in\{\mathbb{I},\sigma_x,\sigma_y,\sigma_z\}$.
				
				It was shown in~\cite{bennett1996mixed} that the resulting state is diagonal in the Bell basis $\{\ket{\Phi^+},\ket{\Phi^-},\ket{\Psi^+},\ket{\Psi^-}\}$. In our notation, we get the following corollary:
				
				\begin{cor}\label{cor:bell_diagonal}
					For all $i\in[n]$ with $T_i=0$,  
					\begin{equation}\label{eq:mixt_of_bell_mix}
						\tilde{\rho}^i_{\hat{A}_i\hat{B}_iC_iD_i}=\sum_{c_i,d_i} p(c_i,d_i)\;\tilde{\rho}^{i}_{\hat{A}_i\hat{B}_i|c_i,d_i}\otimes \ketbra{c_i}{c_i}\otimes\ketbra{d_i}{d_i}
					\end{equation} 
					where each $\tilde{\rho}^{i}_{\hat{A}_i\hat{B}_i|c_i,d_i}$ is diagonal in the Bell basis. That is, it can be written as 
					\begin{equation}\label{eq:bell_mixt}
						\tilde{\rho}^{i}_{\hat{A}_i\hat{B}_i|c_i,d_i} = \begin{pmatrix} \lambda_{\Phi^+} & & & \\  & \lambda_{\Phi^-} & &  \\  &  & \lambda_{\Psi^+} & \\  &  & & \lambda_{\Psi^-}   \end{pmatrix}
					\end{equation}
					in the basis of the Bell states ordered as $\{\ket{\Phi^+},\ket{\Phi^-},\ket{\Psi^+},\ket{\Psi^-}\}$ for some eigenvalues. 			
				\end{cor}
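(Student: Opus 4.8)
The plan is to reduce the statement to a single, standard fact about Pauli twirling of two-qubit states. First I note that Equation~\eqref{eq:mixt_of_bell_mix} is immediate from the construction: the projection in Equation~\eqref{eq:state_after_proj} records the Jordan-block indices in the classical registers $\ketbra{c_i}{c_i}\otimes\ketbra{d_i}{d_i}$, and the twirl in Equation~\eqref{eq:symm_maps} acts separately inside each $(c_i,d_i)$ block while leaving those classical registers untouched. Hence it suffices to fix a block and show that the twirled conditional state is diagonal in the Bell basis. Abbreviating $\bar{\rho} := \bar{\rho}^i_{\hat{A}_i\hat{B}_i|c_i,d_i}$ and $\tilde{\rho} := \tilde{\rho}^i_{\hat{A}_i\hat{B}_i|c_i,d_i} = \tfrac{1}{4}\sum_{U}(U\otimes U)\,\bar{\rho}\,(U\otimes U)^\dagger$, with $U$ ranging over $\{\mathbb{I},\sigma_x,\sigma_y,\sigma_z\}$, the whole content is to establish Equation~\eqref{eq:bell_mixt} for this single two-qubit state.

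The core step is the observation that each of the four Bell states is a simultaneous eigenvector of all four operators $U\otimes U$, with \emph{real} eigenvalue $\pm 1$. I would verify this directly, e.g.\ using the identity $(\mathbb{I}\otimes U)\ket{\Phi^+}=(U^T\otimes\mathbb{I})\ket{\Phi^+}$ together with the fact that $\ket{\Phi^-},\ket{\Psi^+},\ket{\Psi^-}$ are obtained from $\ket{\Phi^+}$ by local Pauli operators. This produces the eigenvalue table in which $\ket{\Phi^+},\ket{\Phi^-},\ket{\Psi^+},\ket{\Psi^-}$ carry the \emph{pairwise distinct} sign triples $(+,-,+)$, $(-,+,+)$, $(+,+,-)$, $(-,-,-)$ under $(\sigma_x\otimes\sigma_x,\;\sigma_y\otimes\sigma_y,\;\sigma_z\otimes\sigma_z)$. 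Writing $\epsilon_j(U)\in\{\pm 1\}$ for the eigenvalue of $U\otimes U$ on the $j$-th Bell state, conjugation by $U\otimes U$ multiplies the $(j,k)$ matrix element of $\bar{\rho}$ expressed in the Bell basis by $\epsilon_j(U)\,\epsilon_k(U)$; because the eigenvalues are real, no complex conjugate intervenes.

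It then remains to evaluate the averaged coefficient $\tfrac{1}{4}\sum_{U}\epsilon_j(U)\,\epsilon_k(U)$. For $j=k$ every summand equals $1$, so the diagonal Bell-basis entries are preserved; for $j\neq k$ the distinctness of the sign patterns guarantees that the four summands split into two $+1$'s and two $-1$'s, so the sum vanishes. Equivalently, the maps $\rho\mapsto(U\otimes U)\rho(U\otimes U)^\dagger$ furnish a representation of the Klein four-group $\mathbb{Z}_2\times\mathbb{Z}_2$ (the phases appearing in products such as $\sigma_x\sigma_y=i\sigma_z$ cancel under conjugation, so the four maps are closed and each is its own inverse), the functions $\epsilon_j$ are then distinct characters of this group, and their orthogonality forces the cancellation. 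Consequently every off-diagonal Bell-basis entry is annihilated while the diagonal entries survive, so $\tilde{\rho}$ is diagonal in the Bell basis, which is exactly Equation~\eqref{eq:bell_mixt}.

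I expect the only genuine obstacle to be the bookkeeping in the core step: establishing that the four Bell states are precisely the common eigenvectors of the $U\otimes U$ and that their eigenvalue triples are pairwise distinct, since it is this distinctness (character orthogonality) that makes every off-diagonal sum cancel. Everything else---the block decomposition and the persistence of the diagonal terms---is routine, and the argument simply recovers the result of~\cite{bennett1996mixed} in the present notation.
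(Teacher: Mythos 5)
Your proof is correct, but note that the paper does not actually prove this corollary: it invokes the twirling result of~\cite{bennett1996mixed} in the sentence immediately preceding the statement and simply records the corollary in its own notation, with no argument in the appendix. What you have done is reconstruct, from first principles, the standard proof of that imported fact. Your reduction to a single $(c_i,d_i)$ block matches the paper's construction exactly (Equation~\eqref{eq:state_after_proj} already carries the classical registers, and the twirl in Equation~\eqref{eq:symm_maps} acts blockwise), and your core computation is sound: the Bell states are joint eigenvectors of $U\otimes U$ for $U\in\{\mathbb{I},\sigma_x,\sigma_y,\sigma_z\}$ with precisely the sign patterns you list, conjugation multiplies the Bell-basis $(j,k)$ entry by $\epsilon_j(U)\epsilon_k(U)$, and the average of these factors vanishes for $j\neq k$ while equalling one for $j=k$. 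The trade-off is the obvious one: the paper's citation is shorter; your argument is self-contained and exposes why the cancellation happens. One imprecision you should repair: the eigenvalue functions $\epsilon_j$ are \emph{not} themselves characters of the Klein four-group of conjugation maps, since $\sigma_x\sigma_y=i\sigma_z$ gives $(\sigma_x\otimes\sigma_x)(\sigma_y\otimes\sigma_y)=-\,\sigma_z\otimes\sigma_z$ and hence $\epsilon_j(\sigma_x)\,\epsilon_j(\sigma_y)=-\,\epsilon_j(\sigma_z)$ for every $j$. It is the pairwise products $\epsilon_j\epsilon_k$ with $j\neq k$ that are genuine characters (the anomalous sign cancels in the product), and they are nontrivial exactly because the sign patterns are distinct; this is what forces $\sum_U \epsilon_j(U)\,\epsilon_k(U)=0$. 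Relatedly, ``distinctness of the sign patterns'' alone does not in general make two $\{\pm1\}$-vectors orthogonal, so that sentence should be justified either by this character-product argument or by the direct six-pair check, which you do effectively carry out and which I have verified.
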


				As in the projection step, the effect of applying a random unitary is restricted to the case $T_i=0$ and can be done with only LOCC. The importance of this step lies in the following:
				\begin{enumerate}
					\item After the twirling step, the resulting two qubit state is a convex combination of Bell diagonal states (Corollary~\ref{cor:bell_diagonal}). This simplifies the analysis in the rest of the proof.
					\item This step decouples $\tilde{\rho}^i_{\hat{A}_i\hat{B}_iC_iD_i}$ from other registers while maintaining the entanglement of $\tilde{\rho}^i_{\hat{A}_i\hat{B}_i}$; see Lemma~\ref{lem:symm_dec} below for the formal statement. This property is crucial later on, in Section~\ref{sec:prereq_eat}. 
				\end{enumerate}

			\subsubsection{Properties of the modified protocol}
			
			Protocol~\ref{pro:mod_ent_test} can be described mathematically by an application of a sequence of maps one after the other. The maps describe both the actions of the verifier defined by the protocol as well as the actions of the measurement devices.  We denote these maps by
			\begin{equation}\label{eq:maps_mod_prot_def}
				\mathcal{N}_i:R_{i-1}\rightarrow R_i\hat{A}_i\hat{B}_iA_iB_iC_iD_iX_iY_iW_i \;.
			\end{equation}
			The register $R_{i-1}$ holds here the state of the uncharacterised devices in the beginning of the $i$'th round. That is, $R_{i-1}$ includes the state $\phi^i$ produced by the source as well as any other information kept by the measurement devices. $R_i$ describes a register which can be used as internal memory for the devices.  
			
			The final state in the end of Protocol~\ref{pro:mod_ent_test} is denoted by 
			\[
				\tau = \tau_{\hat{A}\hat{B}ABCDXYTW} 
			\]
			and the state conditioned on not aborting by $\tau_{|\Omega}$.
			
			We now present 3 statements regarding the modified protocol. All the proofs are  rather simple and are given in Appendix~\ref{sec:proofs_mod_prot}.
		
			For start, it follows from the definition of Protocol~\ref{pro:mod_ent_test} that the observed statistic of $\rho$ (the state in the end of Protocol~\ref{pro:ent_test}) and $\tau$ (the state in the end of Protocol~\ref{pro:mod_ent_test}) are the same. This, in particular, implies that the probabilities of aborting Protocol~\ref{pro:ent_test} and Protocol~\ref{pro:mod_ent_test} are identical. 
			\begin{lemma}\label{lem:mod_same_stat}
				The observed statistics and, hence, the probabilities of aborting Protocol~\ref{pro:ent_test} and Protocol~\ref{pro:mod_ent_test} are the same. That is,
				$\rho_{ABXYTW} = \tau_{ABXYTW}$ and
				$\Pr[\neg\Omega]_{\rho} = \Pr[\neg\Omega]_{\tau}$.
			\end{lemma}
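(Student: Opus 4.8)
The plan is to show that Protocol~\ref{pro:mod_ent_test} differs from Protocol~\ref{pro:ent_test} \emph{only} in operations that act on the registers $\hat{A}_i\hat{B}_i$ in the rounds where $T_i=0$, and that these extra operations leave the classical statistics completely untouched. The statistics recorded by the protocol live entirely in the registers $A_iB_iX_iY_iT_iW_i$, so it suffices to argue that the marginal state on $ABXYTW$ evolves identically under the two protocols, round by round.

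First I would compare the two protocols step by step. In a round with $T_i=1$ the instructions are verbatim the same: the inputs $X_i,Y_i$ are chosen, the device $D$ is applied to produce $A_i,B_i$, and $W_i$ is computed from the CHSH winning condition. In a round with $T_i=0$, Protocol~\ref{pro:ent_test} simply stores $\phi^i$ in $\hat{A}_i\hat{B}_i$, whereas Protocol~\ref{pro:mod_ent_test} additionally applies the Jordan projection of Equation~\eqref{eq:state_after_proj} and the random twirl of Equation~\eqref{eq:symm_maps}. The crucial point is that when $T_i=0$ the classical registers $A_i,B_i,X_i,Y_i,W_i$ are all set to $\perp$ and are never written to, while the extra operations act solely on $\hat{A}_i\hat{B}_iC_iD_i$. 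Hence the values stored in $ABXYTW$ in that round are $\perp$ under both protocols.

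Next I would make the cross-round dependence explicit. The only channel by which an operation in round $i$ could influence the statistics of a later round $j>i$ is through the classical side-information $(TABXYW)_{1,\dots,i}$ that the source and devices are allowed to condition on in producing $\phi^{j}$. Since the projection and twirl of Steps~\ref{prostep:state_after_proj}--\ref{prostep:symm_maps} write nothing into $ABXYTW$ and the source is given access only to these classical registers (not to the quantum memory $\hat{A}_i\hat{B}_i$), the conditioning available in every subsequent round is identical in the two protocols. Therefore the joint distribution of $(TABXYW)_{1,\dots,n}$ is produced by exactly the same sequence of classically-conditioned channels in both cases, which gives $\rho_{ABXYTW}=\tau_{ABXYTW}$ by an inductive argument over the rounds (the twirl being a trace-preserving map on the $T_i=0$ rounds, so it does not even alter the normalisation).

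Finally, since the abort event $\Omega$ is defined purely as a function of $(T_iW_i)_i$ via the condition $\sum_i \chi(T_i=1)\,W_i \geq (\omega_{\mathrm{exp}}\gamma-\delta_{\mathrm{est}})n$, the equality of the marginals on $ABXYTW$ immediately yields $\Pr[\neg\Omega]_\rho=\Pr[\neg\Omega]_\tau$. I do not expect a serious obstacle here; the statement is essentially bookkeeping. The one point deserving care is the formal justification that the added quantum operations in the $T_i=0$ rounds do not leak into the conditioning information of later rounds — this is precisely the content of the setting described in Section~\ref{sec:setting_theo}, where the devices retain access only to the classical transcript and not to the kept quantum states, so invoking that assumption cleanly is the key to making the inductive step rigorous.
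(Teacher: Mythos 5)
Your proof is correct and takes essentially the same route as the paper's: the two protocols differ only in the $T_i=0$ rounds, the recorded statistics over $ABXYTW$ are generated entirely in the $T_i=1$ rounds, and $\Omega$ is a function of $(T,W)$ alone, so equality of the classical marginals gives equality of the abort probabilities. Your inductive treatment of the cross-round conditioning --- observing that the projection and twirl are trace-preserving, write nothing into the classical transcript, and that the source may condition only on that transcript --- merely spells out a point the paper's four-line proof leaves implicit.
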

			
			Second, as mentioned in the previous section, the final state of Protocol~\ref{pro:mod_ent_test}, $\tau$, has the property that for any $i\in[n]$ the registers $\hat{A}_i \hat{B}_i$ are decoupled from the other registers. This can be shown using the structure of the state given in Equation~\eqref{eq:mixt_of_bell_mix}. 
			The formal statement needed in the next sections is given in the next lemma.
		
			\begin{lemma}\label{lem:symm_dec}
				Let $\tau$ denote the state after all rounds of Protocol~\ref{pro:mod_ent_test} (before conditioning on $\Omega$).
				For every~$i\in[n]$ let $K$ include all registers different than $\hat{A}_i \hat{B}_iC_iD_iT_i$ such that we can write 
				\[
					\tau_{\hat{A}\hat{B}ABCDXYTW} = \tau_{\hat{A}_i \hat{B}_iC_iD_iT_iK} \;.
				\] 
				Then,
				\begin{equation*}
					H(\hat{A}_i|\hat{B}_iC_iD_iT_iK)_{\tau} = H(\hat{A}_i|\hat{B}_iC_iD_iT_i)_{\tau} \;.
				\end{equation*}
			\end{lemma}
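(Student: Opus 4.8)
The plan is to recognise that the claimed identity is really a \emph{conditional independence} (quantum Markov) statement and to derive it from the decoupling produced by the twirling step. Since conditioning never increases the conditional von Neumann entropy, one always has $H(\hat A_i|\hat B_iC_iD_iT_iK)_\tau \le H(\hat A_i|\hat B_iC_iD_iT_i)_\tau$, so the entire content of the lemma is the reverse inequality; equivalently, I would show that $I(\hat A_i:K\,|\,\hat B_iC_iD_iT_i)_\tau=0$, i.e.\ that $\hat A_i\leftrightarrow \hat B_iC_iD_iT_i\leftrightarrow K$ is a Markov chain. First I would split the statement along the classical registers $T_i,C_i,D_i$: because these can be read out without disturbing $\tau$, both conditional entropies decompose into convex combinations over the values $t_i\in\{0,1\}$ and $(c_i,d_i)$, so it suffices to treat each branch. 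Every branch with $t_i=1$ is trivial, since there $\hat A_i=\hat B_i=\perp$ are deterministic and both sides contribute $0$. The argument thus reduces to the branches $t_i=0$, where by Corollary~\ref{cor:bell_diagonal} the state of $\hat A_i\hat B_i$ conditioned on $(c_i,d_i)$ is the Bell-diagonal state $\tilde\rho^i_{\hat A_i\hat B_i|c_i,d_i}$.

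The core computation is to analyse how Step~\ref{prostep:symm_maps} acts on the \emph{joint} state with $K$, using that $\frac14\sum_U (U\otimes U)(\cdot)(U\otimes U)^\dagger$ touches $\hat A_i\hat B_i$ only and leaves $K$ fixed. Expanding the pre-twirl joint state in the Bell basis as $\bar\rho=\sum_{a,b}\ketbra{\Psi_a}{\Psi_b}\otimes G_{ab}^K$, with $G_{ab}^K=\bra{\Psi_a}\bar\rho\ket{\Psi_b}$, I would use that each Bell state is an eigenvector of $U\otimes U$ with a sign character $\chi_a(U)\in\{\pm1\}$, and that the four characters obtained from $U\in\{\mathbb{I},\sigma_x,\sigma_y,\sigma_z\}$ are pairwise distinct. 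Orthogonality of characters then forces $\frac14\sum_U\chi_a(U)\chi_b(U)=\delta_{ab}$, so the twirl annihilates every off-diagonal term and leaves $\tilde\rho=\sum_a\ketbra{\Psi_a}{\Psi_a}\otimes G_{aa}^K$. In particular, tracing out $\hat B_i$ gives $\tilde\rho_{\hat A_iK}=\tfrac{\mathbb{I}}{2}\otimes\rho_K$, so Alice's qubit is already in product with all the remaining registers. From here the chain rule $H(\hat A_i|\hat B_iC_iD_iT_iK)=H(\hat A_i\hat B_iC_iD_iT_iK)-H(\hat B_iC_iD_iT_iK)$ lets me compare the two sides directly and conclude the entropy is unaffected by $K$.

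The step I expect to be the main obstacle is closing the gap between ``$\hat A_i$ is in product with $K$'' and the full decoupling needed once we also condition on Bob's qubit $\hat B_i$. Killing the Bell-basis coherences removes all \emph{phase} correlations, but it does not by itself prevent the surviving diagonal weights $G_{aa}^K$ from depending on the Bell index $a$; such a dependence is exactly the correlation that conditioning on $\hat B_i$ (which is perfectly correlated or anticorrelated with $\hat A_i$ inside each Bell eigenstate) could expose. To rule this out I would have to use the detailed register/causal structure of Protocol~\ref{pro:mod_ent_test} rather than treating the twirl as a black box, namely that the correlations $\hat A_i\hat B_i$ may share with $K$ originate solely from the source and that the symmetrisation leaves $\tilde\rho^i_{\hat A_i\hat B_iC_iD_i}$ genuinely in tensor product with $K$, as asserted by the decoupling statement accompanying Corollary~\ref{cor:bell_diagonal}. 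Establishing that factorisation, i.e.\ that $G_{aa}^K$ is independent of $a$, is where I would concentrate the effort; once it is in place the Markov condition $\hat A_i\leftrightarrow \hat B_iC_iD_iT_i\leftrightarrow K$ and hence the claimed entropy equality follow immediately.
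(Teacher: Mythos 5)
Your opening matches the paper's proof: both decompose the two entropies over the classical values of $C_i,D_i,T_i$ (legitimate, since these can be read out without disturbing $\tau$), both dispose of the $T_i=1$ branch because $\hat{A}_i$ is deterministic there, and both reduce to the $T_i=0$, Bell-diagonal branch. Your character computation for the twirl is also correct, and gives exactly $\tilde\rho_{\hat A_i\hat B_iK}=\sum_a\proj{\Psi_a}\otimes G_{aa}^K$ (in fact a more explicit derivation of Corollary~\ref{cor:bell_diagonal} than the paper's citation of~\cite{bennett1996mixed}). But, as you yourself flag, the lemma \emph{is} precisely the assertion that the blocks $G_{aa}^K$ carry no dependence on the Bell index $a$, and your proposal never proves this; it ends by saying this is where the effort would go. The gap is genuine, and it cannot be closed by any further analysis of Step~\ref{prostep:symm_maps}: the twirl acts round by round and leaves classical Bell-basis correlations between rounds completely untouched. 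For instance, a source emitting two untested rounds in the joint state $\frac{1}{2}\left(\proj{\Phi^+}\otimes\proj{\Phi^+}+\proj{\Psi^-}\otimes\proj{\Psi^-}\right)$ yields a post-twirl state exactly of your cq form with $a$-dependent blocks, for which $H(\hat A_i|\hat B_i)_\tau=0$ while $H(\hat A_i|\hat B_i\hat A_j\hat B_j)_\tau=-1$. So the factorisation you need is not a consequence of the twirl together with any black-box appeal to the protocol's causal structure; it requires an input of a different kind.

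That input, in the paper, is a purification argument that your proposal does not contain (and which does not examine the twirl's joint action on $K$ at all). For $T_i=0$ the paper writes a purification $\ket{\tau}_{\hat A_i\hat B_iF_i}^{c_i,d_i}=\sum_a\sqrt{\lambda_a}\,\ket{\Psi_a}_{\hat A_i\hat B_i}\ket{a}_{F_i}$ of the conditional Bell-diagonal state, asserts that $K$ does not include the information encoded in $F_i$ ``by the definition of $K$'', concludes from purity of $\hat A_i\hat B_iF_i$ that $\tau_{\hat A_i\hat B_iF_iK|c_i,d_i,T_i=0}=\tau_{\hat A_i\hat B_iF_i|T_i=0}\otimes\tau_{K|c_i,d_i,T_i=0}$ (a pure marginal decouples from everything else), and then traces out $F_i$. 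The $a$-independence of $G_{aa}^K$ is thus exactly what this purification step \emph{asserts}: it is an appeal to where the purifying degrees of freedom of the kept state are taken to reside, i.e.\ to the modelling of the source and registers, rather than something derived from Steps~\ref{prostep:state_after_proj}--\ref{prostep:symm_maps}. Your instinct that this factorisation is the load-bearing point of the lemma --- and, downstream, of the Markov conditions of Lemma~\ref{lem:markov_cond} needed for the EAT --- is correct; but since you neither prove it nor import the paper's purification assertion, the proposal falls short of a proof of the statement.
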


			The last thing which will be of use later is that $\rho_{|\Omega}$ is at least as entangled as $\tau_{|\Omega}$. That is, if one can distill $\ket{\Phi^L}$ from $\tau_{|\Omega}$ then one can also distill $\ket{\Phi^L}$, for the same $L$, from $\rho_{|\Omega}$ 
			This follows from the fact that both modifications described above can be implemented using LOCC, and hence they cannot increase the entanglement. The formal statement is given in the last lemma of this section. 
	
			\begin{lemma}\label{lem:locc_tau_to_rho}
				Let $\Gamma$ denote the LOCC protocol used to distill 
				$\ket{\Phi^L}= \frac{1}{\sqrt{L}}\sum_{i=1}^L \ket{i}\ket{i}$
				from $\tau_{|\Omega}$ with error probability $\varepsilon$. 
				Then, there exists another LOCC protocol $\Delta$ which can be used to distill $\ket{\Phi^L}$ from $\rho_{|\Omega}$ with the same error probability.
			\end{lemma}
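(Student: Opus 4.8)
The plan is to exhibit $\Delta$ explicitly as the composition of $\Gamma$ with the LOCC channel that turns $\rho_{|\Omega}$ into $\tau_{|\Omega}$, and to argue that this channel is itself LOCC. The starting observation is that the only difference between Protocol~\ref{pro:ent_test} and Protocol~\ref{pro:mod_ent_test} lies in Steps~\ref{prostep:state_after_proj} and~\ref{prostep:symm_maps}, which act exclusively in the rounds with $T_i=0$: for each such round one appends fresh registers $C_iD_i$, applies the local projection of Equation~\eqref{eq:state_after_proj}, and then applies the random-unitary twirl of Equation~\eqref{eq:symm_maps}. I would denote by $\mathcal{M}$ the channel performing exactly these operations in every $T_i=0$ round and leaving all other rounds untouched. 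Since the source's behaviour depends only on the classical history $(TABXYW)_{1,\dots,i-1}$, which is identical in the two protocols, and since the states kept for $T_i=0$ are never revisited by the devices, the source produces the same conditional sequence of states in both protocols; hence $\tau = \mathcal{M}(\rho)$.

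Next I would verify that $\mathcal{M}$ is LOCC with respect to the bipartition $Q^A=\hat{A}$ vs.\@ $Q^B=\hat{B}ABCDXYTW$ and that it commutes with conditioning on $\Omega$. The projection step consists of the local measurements~\eqref{eq:jordan_proj} performed independently by the two parties followed by the exchange of the classical indices $c_i,d_i$, hence is LOCC; the twirl consists of drawing shared classical randomness $U\in\{\mathbb{I},\sigma_x,\sigma_y,\sigma_z\}$ and applying $U$ locally on each side, hence is also LOCC. Composing these over all $T_i=0$ rounds keeps $\mathcal{M}$ LOCC. Because $\mathcal{M}$ acts as the identity on the test-round registers $T_iW_i$, and the event $\Omega$ is a function of those registers alone, $\mathcal{M}$ commutes with the projector $P_\Omega$ defining $\Omega$. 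Combining this with $\Pr[\Omega]_\rho=\Pr[\Omega]_\tau$ from Lemma~\ref{lem:mod_same_stat}, one obtains
\[
    \mathcal{M}(\rho_{|\Omega}) = \frac{\mathcal{M}(P_\Omega\,\rho\,P_\Omega)}{\Pr[\Omega]_\rho} = \frac{P_\Omega\,\mathcal{M}(\rho)\,P_\Omega}{\Pr[\Omega]_\tau} = \frac{P_\Omega\,\tau\,P_\Omega}{\Pr[\Omega]_\tau} = \tau_{|\Omega} \;.
\]

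Finally I would set $\Delta = \Gamma\circ\mathcal{M}$. As a composition of two LOCC maps, $\Delta$ is again LOCC, and $\Delta(\rho_{|\Omega}) = \Gamma\bigl(\mathcal{M}(\rho_{|\Omega})\bigr) = \Gamma(\tau_{|\Omega})$, so $F\bigl(\Delta(\rho_{|\Omega}),\Phi^L\bigr) = F\bigl(\Gamma(\tau_{|\Omega}),\Phi^L\bigr)$; thus $\Delta$ distills $\ket{\Phi^L}$ from $\rho_{|\Omega}$ with exactly the error probability $\varepsilon$ of $\Gamma$. I expect the only genuinely delicate point to be the identity $\tau_{|\Omega}=\mathcal{M}(\rho_{|\Omega})$: one must check that $\mathcal{M}$ is a \emph{single fixed} channel acting on the unconditioned $\rho$, and that, because it never touches the registers defining $\Omega$, it commutes with the (non-trace-preserving) conditioning. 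Everything else is a routine appeal to the closure of LOCC under composition and to the fact that the output state, and hence the fidelity, is unchanged.
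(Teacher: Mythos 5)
Your proposal is correct and follows essentially the same route as the paper's proof: the paper likewise defines the map (there called $\Lambda$) consisting of the projection and twirling steps deferred past the end of the protocol and past the conditioning on $\Omega$, shows $\Lambda(\rho_{|\Omega}) = \tau_{|\Omega}$ using Lemma~\ref{lem:mod_same_stat}, and sets $\Delta = \Gamma \circ \Lambda$. Your explicit verification that the channel commutes with the projector onto $\Omega$ is just a more detailed writing-out of the same deferral argument.
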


			Lemma~\ref{lem:locc_tau_to_rho} implies that instead of proving that there exists an entanglement distillation protocol starting from $\rho_{|\Omega}$ one can prove that there exists an entanglement distillation protocol starting from~$\tau_{|\Omega}$. The advantage of doing this is that $\tau_{|\Omega}$ has some nice properties as was shown above. These will be of use in our proof given in the next sections. 
			We emphasise again that Protocol~\ref{pro:mod_ent_test} acts just as a step in the proof. This is not a real protocol which we expect the verifier to implement in order to verify that the state produced by the source is highly entangled. 
	
	
	\subsection{Single-round bound on the von Neumann entropy}\label{sec:single_round_ent}
	
		The goal of the current section is to upper bound the conditional von Neumann entropy $H(\hat{A}_i|\hat{B}_i)$ of the states produced by the source and kept in the memory in Protocol~\ref{pro:mod_ent_test}. As explained in Section~\ref{sec:von_neumann_intro}, a negative value of the conditional von Neumann entropy can be used to quantify the amount of entanglement present in the considered state. While there are many ways to quantify  entanglement, the conditional von Neumann entropy is the one relevant for our proof technique (as will become clear below). 
			
		The first lemma of this section was already presented as Lemma~\ref{lem:Bell_diag_entropy_intro} in Section~\ref{sec:von_neumann_intro}.
		To simplify the equations in the lemmas below we use the notation of a Bell violation $\beta\in\left[2,2\sqrt{2}\right]$ instead of the winning probability $\omega\in\left[\frac{3}{4},\frac{2+\sqrt{2}}{4}\right]$ (used in Lemma~\ref{lem:Bell_diag_entropy_intro}). The two are related via $\omega=\frac{1}{2}+\frac{\beta}{8}$; the transformation back to $\omega$ will be done in the end.
			
		\begin{lemma}\label{lem:single_ent_diagonal_Bell}
			For any Bell diagonal state $\sigma_{\hat{A}_i\hat{B}_i}$ as in Equation~\eqref{eq:bell_mixt}  that can be used to violate the CHSH inequality with violation $\beta\in\left[2,2\sqrt{2}\right]$,
			\begin{equation*}
				H(\hat{A}_i|\hat{B}_i)_{\sigma} \leq \mathscr{H}(\beta) - 1 \;,
			\end{equation*}
			where
			\begin{equation}\label{eq:H_of_beta}
				\mathscr{H}(\beta) = 2h\left(\frac{1}{2}-\frac{\beta}{4\sqrt{2}}\right) 
			\end{equation}
			and $h$ is the binary entropy function.
		\end{lemma}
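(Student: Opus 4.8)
The plan is to reduce the conditional entropy to a classical Shannon entropy of the Bell coefficients and then solve a constrained optimisation. Since every Bell state has maximally mixed marginals, the reduced state $\sigma_{\hat{B}_i}$ equals $\mathbb{I}/2$, so $H(\hat{B}_i)_\sigma = 1$ and hence $H(\hat{A}_i|\hat{B}_i)_\sigma = H(\sigma) - 1$. Because $\sigma$ is diagonal in the Bell basis, the von Neumann entropy $H(\sigma)$ is simply the Shannon entropy $H(\{\lambda_{\Phi^+},\lambda_{\Phi^-},\lambda_{\Psi^+},\lambda_{\Psi^-}\})$ of its eigenvalues. Thus the lemma is equivalent to the purely classical statement $H(\{\lambda\}) \leq \mathscr{H}(\beta) = 2h\!\left(\tfrac12 - \tfrac{\beta}{4\sqrt2}\right)$.

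Next I would translate the CHSH constraint into a constraint on the eigenvalues. Writing the three correlators $c_j = \Tr(\sigma\,\sigma_j\otimes\sigma_j)$ as signed sums of the $\lambda$'s, e.g.\ $c_3 = \lambda_{\Phi^+}+\lambda_{\Phi^-}-\lambda_{\Psi^+}-\lambda_{\Psi^-}$, and invoking the Horodecki criterion for the maximal CHSH value of a two-qubit state, the largest violation achievable by $\sigma$ is $\beta_{\max} = 2\sqrt{c_{(1)}^2 + c_{(2)}^2}$, involving the two largest squared correlators. Since $\mathscr{H}$ is monotonically decreasing on $[2,2\sqrt2]$ and any achievable $\beta$ satisfies $\beta \leq \beta_{\max}$, it suffices to prove $H(\{\lambda\}) \leq \mathscr{H}(\beta_{\max})$; the bound for the observed $\beta$ then follows by monotonicity. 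The heart of the matter is therefore: maximise $H(\{\lambda\})$ over Bell-diagonal states subject to $c_{(1)}^2 + c_{(2)}^2$ being fixed.

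To solve this I would first use the invariance of both the entropy and the violation under the local Clifford operations that permute Bell states and flip correlator signs in pairs, to assume without loss of generality that the two dominant correlators are $c_1,c_3 \geq 0$ with $|c_2|$ the smallest. A Lagrange/KKT analysis then proceeds from $\partial H/\partial c_j = -\tfrac14 \log(\cdot)$, a logarithm of a ratio of the $\lambda$'s (the additive constant drops since each $c_j$ enters two eigenvalues with a $+$ and two with a $-$). Stationarity in the free variable $c_2$ forces $\lambda_{\Phi^+}\lambda_{\Psi^-} = \lambda_{\Phi^-}\lambda_{\Psi^+}$, and combining the stationarity conditions for $c_1$ and $c_3$ with the exchange symmetry $c_1 \leftrightarrow c_3$ forces $c_1 = c_3$. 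Solving these relations yields eigenvalues $\{q^2, pq, pq, p^2\}$ with $q = 1-p$ — the product distribution of two $\mathrm{Bernoulli}(p)$ variables, whose entropy is exactly $2h(p)$. Reading off its correlators gives $c_1 = c_3 = 1-2p$ and $c_2 = -(1-2p)^2$, so $\beta_{\max} = 2\sqrt2\,(1-2p)$ and hence $p = \tfrac12 - \tfrac{\beta_{\max}}{4\sqrt2}$, matching $\mathscr{H}$ precisely. (Operationally this candidate state is $\ket{\Phi^+}$ passed through independent bit- and phase-flip channels, a natural noise model.)

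The main obstacle is upgrading this stationary point to a genuine global maximum. The objective is non-smooth, because the identity of ``the two largest correlators'' can switch, and the domain is cut out by positivity of the four eigenvalues; I must therefore rule out maxima on the boundary (where an eigenvalue vanishes or the ordering of the $|c_j|$ changes) and show the symmetric interior critical point dominates any asymmetric ($c_1 \neq c_3$) solutions of the stationarity equations. I expect this to follow from concavity of the Shannon entropy restricted to the constraint surface together with the $c_1 \leftrightarrow c_3$ symmetry, but making the boundary analysis airtight — and checking that $|c_2| = (1-2p)^2 \leq 1-2p = |c_1|$ so that $c_2$ is indeed the smallest correlator at the optimum, validating the use of the Horodecki value — is the delicate part. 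Finally I would substitute $\omega = \tfrac12 + \tfrac{\beta}{8}$ to recover the form stated in Lemma~\ref{lem:Bell_diag_entropy_intro}.
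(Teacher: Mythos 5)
Your proposal follows essentially the same route as the paper's proof: reduce $H(\hat{A}_i|\hat{B}_i)_\sigma$ to the Shannon entropy of the four Bell weights minus one, impose the Horodecki/CHSH constraint (your correlator form $2\sqrt{c_{(1)}^2+c_{(2)}^2}$ is equivalent to the paper's eigenvalue-difference form in Equation~\eqref{eq:bell_val_const}), exploit the exchange symmetries, and land on exactly the optimizer of Equation~\eqref{eq:opt_sol}, i.e.\ the product-Bernoulli spectrum $\{q^2,pq,pq,p^2\}$ with $p=\tfrac12-\tfrac{\beta}{4\sqrt2}$ and entropy $2h(p)=\mathscr{H}(\beta)$. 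The ``delicate part'' you flag --- promoting the symmetric stationary point to a global maximum over the boundary and correlator-ordering cases --- is precisely the step the paper itself settles only by solving the two-variable problem~\eqref{eq:optimisation_prob2} numerically and appealing to the plot in Figure~\ref{fig:H_lambda}, so your attempt is at the same level of rigor as the published argument.
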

	
		The function $\mathscr{H}(\beta) - 1$ is plotted in Figure~\ref{fig:single_round_entropy} as a function of the CHSH violation $\beta$ and the winning probability $\omega$.
	
		The proof is inspired by the work of~\cite{pironio2009device}, though we derive a bound on a different quantity. Neither result follows from the other.
		We give the proof sketch here; the full details are given in Appendix~\ref{sec:proofs_sing_ent}.
		
		\begin{proof}[Proof sketch]
			First note that 
			\begin{equation*}\label{eq:cond_ent_to_comb}
				H(\hat{A}_i|\hat{B}_i)  = H(\hat{A}_i\hat{B}_i) - H(\hat{B}_i) = H(\hat{A}_i\hat{B}_i) -1 
			\end{equation*}
			since the marginal $\sigma_{\hat{B}_i}$ is a completely mixed qubit state (for a Bell diagonal state $\sigma_{\hat{A}_i\hat{B}_i}$). 
			
			We are left to upper bound $H(\hat{A}_i\hat{B}_i)$ which, given that the state is Bell diagonal, is simply the Shannon entropy $H(\vec{\lambda})$ of the probability distribution $\vec{\lambda}=(\lambda_{\Phi^+},\lambda_{\Phi^-},\lambda_{\Psi^+},\lambda_{\Psi^-})$ defined by the four eigenvalues of $\sigma$.\footnote{This observation was already made in~\cite{bennett1996concentrating}. In the context of~\cite{bennett1996concentrating}, the eigenvalues $\vec{\lambda}$ are known. What we do here can be seen as an extension to the case where only the Bell violation is known and not the eigenvalues.} Hence, our goal is to maximise $H(\vec{\lambda})$ under the constraint of having the correct Bell violation $\beta$.  
			
			Following~\cite[Lemma 7]{pironio2009device} and~\cite[Equation (21)]{horodecki1995violating} we translate the constraint on the violation of the state to the following constraint on the eigenvalues:
			\begin{equation*}
			\begin{split}
				\beta = \max \Big\{ 
							& 2\sqrt{2} \sqrt{(\lambda_{\Phi^+}-\lambda_{\Psi^+})^2+(\lambda_{\Phi^-}-\lambda_{\Psi^-})^2} \;, \\
							& 2\sqrt{2} \sqrt{(\lambda_{\Phi^+}-\lambda_{\Psi^-})^2+(\lambda_{\Phi^-}-\lambda_{\Psi^+})^2} \;, \\ 
							& 2\sqrt{2} \sqrt{(\lambda_{\Phi^+}-\lambda_{\Phi^-})^2+(\lambda_{\Psi^+}-\lambda_{\Psi^-})^2} 
						\Big\} \;.
			\end{split}
			\end{equation*}
	
			Using the symmetry of our problem (e.g., under the exchange $\lambda_{\Psi^+} \leftrightarrow \lambda_{\Psi^-}$) one can simplify the above constraint and write the optimisation problem of interest as
			\begin{equation}\label{eq:optimisation_prob_main_txt}
			\begin{split}
				\max \quad &H(\vec{\lambda}) \\
				\text{s.t.} \quad 
				&\lambda_{\Phi^+} = \frac{1}{2}\left(1- \lambda_{\Psi^-} - \lambda_{\Phi^-} + \sqrt{\frac{\beta^2}{8} - \left(\lambda_{\Psi^-} - \lambda_{\Phi^-}\right)^2}\right) \\
				& \lambda_{\Phi^+},\lambda_{\Psi^+},\lambda_{\Phi^-},\lambda_{\Psi^-}\geq 0 \\
				& \lambda_{\Phi^+}+\lambda_{\Psi^+}+\lambda_{\Phi^-}+\lambda_{\Psi^-} =1
			\end{split}
			\end{equation}
			
			The constraints of the optimisation problem imply that we can write $H(\vec{\lambda})$ as a function of only two variables, $\lambda_{\Psi^-}$ and $\lambda_{\Phi^-}$, for any value of~$\beta$. As an example, $H(\vec{\lambda})$ is presented in Figure~\ref{fig:H_lambda} for $\beta=2.5$.

			One can solve this optimisation problem numerically; the solution is given by  
			\begin{equation}\label{eq:opt_sol_main_txt}
			\begin{split}
				&\lambda_{\Phi^+}^*=  \left(\frac{1}{2}-\frac{\beta}{4\sqrt{2}}\right)^2 \;; \quad
				\lambda_{\Psi^+}^* = \left(\frac{1}{2}+\frac{\beta}{4\sqrt{2}}\right)^2 \;; \\
				&\lambda_{\Phi^-}^* = \lambda_{\Psi^-}^* = \left(\frac{1}{2}-\frac{\beta}{4\sqrt{2}}\right)\left(\frac{1}{2}+\frac{\beta}{4\sqrt{2}}\right)  \;.
			\end{split}
			\end{equation}
			
			\begin{figure}
				\centering
				\includegraphics[width=0.5\textwidth]{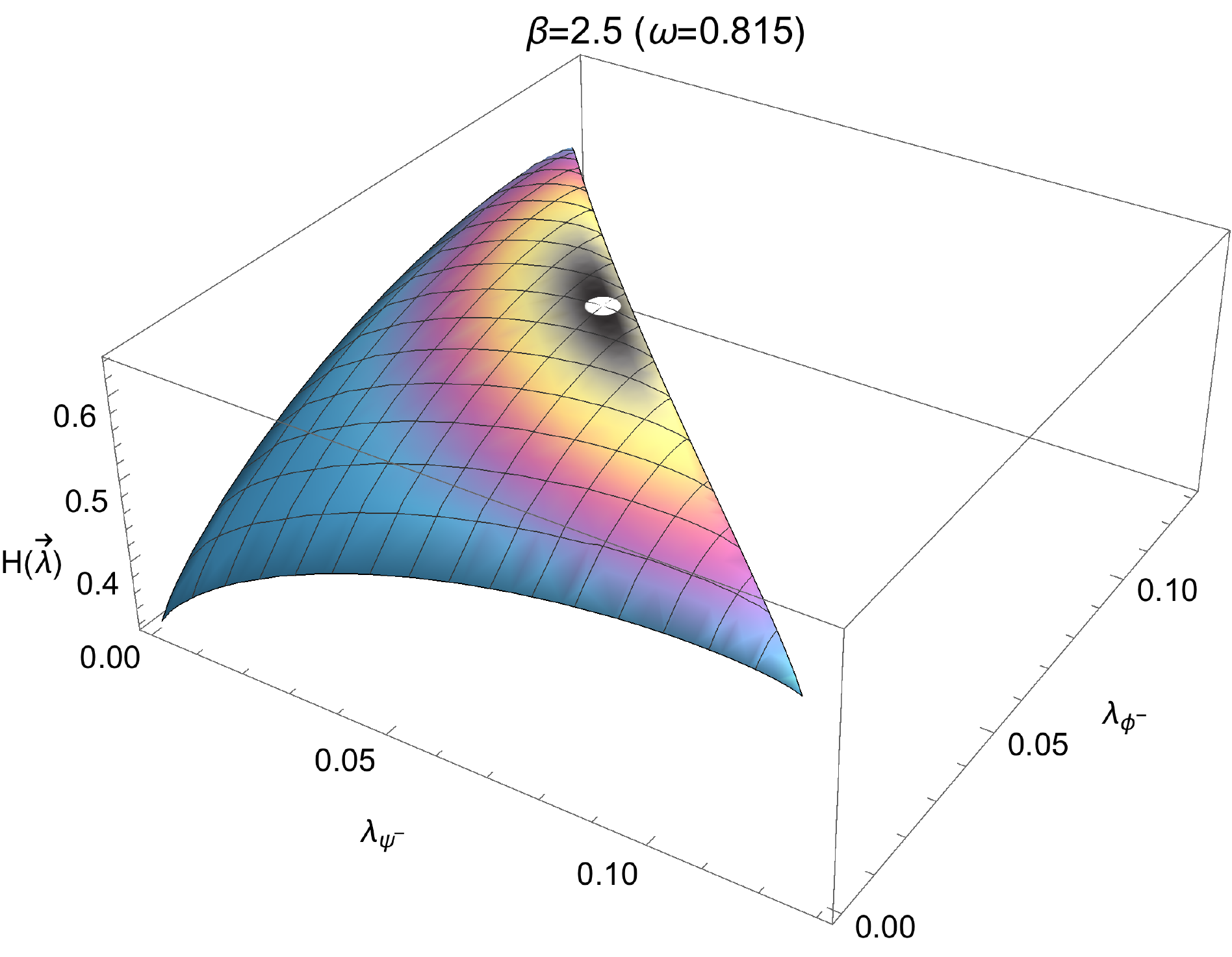}
				\caption{$H(\vec{\lambda})$ as a function of $\lambda_{\Psi^-}$ and $\lambda_{\Phi^-}$, for $\beta=2.5$, in the region defined by the constraints of the optimisation problem stated in Equation~\eqref{eq:optimisation_prob2}. As clearly seen in the plot, there is only a single maxima within the considered region.
				The white point denotes the solution given in Equation~\eqref{eq:opt_sol_main_txt}.}\label{fig:H_lambda}
			\end{figure}
			
			Now that we have a solution in hand, we can verify that it is indeed the correct local optimal solution. This can be done by taking the derivatives in the relevant directions and verifying that the point given in Equation~\eqref{eq:opt_sol_main_txt} is indeed a maxima.
		\end{proof}
			
		Next, we extend the claim of Lemma~\ref{lem:single_ent_diagonal_Bell} also to convex combinations of Bell diagonal states. This can be done easily using the definition of the conditional entropy. The proof is given in Appendix~\ref{sec:proofs_sing_ent}.
			
		\begin{lemma}\label{lem:single_ent_convex_comb_Bell}
			Let $\mathscr{H}(\beta)$ be as in Equation~\eqref{eq:H_of_beta}. 
			Then, for any state $\sigma_{\hat{A}_i\hat{B}_iC_iD_i}$ as in Equation~\eqref{eq:mixt_of_bell_mix} that can be used to violate the CHSH inequality with violation $\beta\in\left[2,2\sqrt{2}\right]$,
			\begin{equation*}
				H(\hat{A}_i|\hat{B}_iC_iD_i)_{\sigma} \leq \mathscr{H}(\beta) - 1 \;.
			\end{equation*}
		\end{lemma}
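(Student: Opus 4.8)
The plan is to reduce the statement to the single-state bound of Lemma~\ref{lem:single_ent_diagonal_Bell} by exploiting that $C_i$ and $D_i$ are \emph{classical} registers. Writing $\sigma_{\hat{A}_i\hat{B}_iC_iD_i}$ in the block form of Equation~\eqref{eq:mixt_of_bell_mix}, the first step is the standard identity for conditioning a von Neumann entropy on a classical system,
\begin{equation*}
	H(\hat{A}_i|\hat{B}_iC_iD_i)_{\sigma} = \sum_{c_i,d_i} p(c_i,d_i)\, H(\hat{A}_i|\hat{B}_i)_{\sigma_{|c_i,d_i}} \;.
\end{equation*}
This follows by expanding $H(\hat{A}_i\hat{B}_iC_iD_i) - H(\hat{B}_iC_iD_i)$ and observing that both the full state and its $\hat{B}_iC_iD_i$ marginal are block-diagonal in $C_iD_i$, so the Shannon entropy $H(\{p(c_i,d_i)\})$ of the block distribution appears in both terms and cancels.

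Each conditional state $\sigma_{\hat{A}_i\hat{B}_i|c_i,d_i}$ is Bell diagonal by Corollary~\ref{cor:bell_diagonal}, so I would apply Lemma~\ref{lem:single_ent_diagonal_Bell} block by block. Let $\beta_{c_i,d_i}$ denote the CHSH value achieved by the conditional state under the fixed, block-diagonal measurements. Since $\sigma_{\hat{A}_i\hat{B}_i|c_i,d_i}$ can be used to attain violation $\beta_{c_i,d_i}$, the lemma gives $H(\hat{A}_i|\hat{B}_i)_{\sigma_{|c_i,d_i}} \leq \mathscr{H}(\beta_{c_i,d_i}) - 1$, and hence
\begin{equation*}
	H(\hat{A}_i|\hat{B}_iC_iD_i)_{\sigma} \leq \sum_{c_i,d_i} p(c_i,d_i)\,\mathscr{H}(\beta_{c_i,d_i}) - 1 \;.
\end{equation*}

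The two facts I would then combine are: (i)~the global CHSH value decomposes as a convex combination over the blocks, $\beta = \sum_{c_i,d_i} p(c_i,d_i)\,\beta_{c_i,d_i}$, which holds because the Jordan-block projectors of Equation~\eqref{eq:jordan_proj} commute with the measurement operators and render both the state and the observables block-diagonal, so the CHSH correlator of $\sigma$ is precisely the $p(c_i,d_i)$-weighted average of the per-block correlators; and (ii)~$\mathscr{H}$ is concave on $\left[2,2\sqrt{2}\right]$, being the composition of the concave binary entropy $h$ with the affine map $\beta \mapsto \frac{1}{2}-\frac{\beta}{4\sqrt{2}}$ (see Equation~\eqref{eq:H_of_beta}), multiplied by a positive constant. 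Jensen's inequality then yields $\sum_{c_i,d_i} p(c_i,d_i)\,\mathscr{H}(\beta_{c_i,d_i}) \leq \mathscr{H}\left(\sum_{c_i,d_i} p(c_i,d_i)\,\beta_{c_i,d_i}\right) = \mathscr{H}(\beta)$, which gives the claim.

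The main point to nail down carefully is fact~(i): that the single number $\beta$ attributed to $\sigma$ is genuinely the block-average of the $\beta_{c_i,d_i}$, rather than, say, the maximal violation of $\sigma$ optimised over all measurements. This is exactly what makes concavity act in the correct direction. If one instead works with the block-optimal violations, the same conclusion follows by additionally using that $\mathscr{H}$ is monotonically \emph{decreasing} on $\left[2,2\sqrt{2}\right]$, which lets one pass between achieved and maximal per-block violations without weakening the bound. The remaining ingredients---the classical-conditioning identity and the concavity of $\mathscr{H}$---are routine.
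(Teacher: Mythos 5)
Your proof is correct and takes essentially the same route as the paper's: decompose the conditional entropy over the classical registers $C_iD_i$, apply Lemma~\ref{lem:single_ent_diagonal_Bell} block by block, and combine the fact that $\beta$ is the $p(c_i,d_i)$-weighted average of the per-block violations with the concavity of $\mathscr{H}$ via Jensen's inequality. The additional justifications you supply (block-diagonality behind the conditioning identity, linearity of the CHSH correlator across Jordan blocks, and the monotonicity remark for block-optimal violations) are correct elaborations of steps the paper states tersely.
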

	
		Two remarks are in order:
		\begin{enumerate}
			\item The derived upper bound on $H(\hat{A}_i|\hat{B}_iC_iD_i)$ is tight; it is saturated by the Bell diagonal state defined via the eigenvalues given in Equation~\eqref{eq:opt_sol_main_txt}. 
			\item As can be seen from Figure~\ref{fig:single_round_entropy}, there is a regime of parameters  ($\beta\lessapprox 2.2$) in which the conditional entropy is positive even though the state violates the CHSH inequality and, hence, is entangled. Indeed, it is known that some states, e.g., the Werner state, can be used to violate the CHSH inequality while presenting positive conditional entropy~\cite{friis2017geometry}. This implies that the conditional entropy is not the optimal quantity to use when certifying entanglement in a DI manner. Yet, it is the relevant quantity when bounding the operationally distillable entanglement (using the known techniques) as we do below.
		\end{enumerate}
	
		For the coming steps of our proof, we need an upper bound on
		\begin{equation*}
			\sup_{\sigma \in \Sigma} H(\hat{A}_i|\hat{B}_iA_iB_iC_iD_iX_iY_i)_{\mathcal{N}_i(\sigma)} \;,
		\end{equation*}
		where $\mathcal{N}_i$ describes the maps defining the rounds of Protocol~\ref{pro:mod_ent_test}, as in Equation~\eqref{eq:maps_mod_prot_def}, and
		$\Sigma$ describes the set of possible states for which Protocol~\ref{pro:mod_ent_test} \emph{does not abort}. 
		On the conceptual level, the protocol does not abort when the observed frequency defined by the registers $W$ imply that the average Bell violation is sufficiently high.\footnote{On the technical level we will need to consider some further details; see Section~\ref{sec:max_tradeoff_func}.} Thus, we shall now consider the set $\Sigma$ defined as $\Sigma = \{ \sigma | w\left(\sigma\right) \geq \omega_{\mathrm{th}}\}$ where $w\left(\sigma\right)$ is the winning probability in the CHSH game of the state $\sigma$ and $\omega_{\mathrm{th}}$ is some threshold winning probability. 
		
		The following lemma can be proven using the above lemmas together with the definition of the conditional entropy and the transformation between $\beta$ and $\omega$; see Appendix~\ref{sec:proofs_sing_ent} for the proof. 
		\begin{lemma}\label{lem:final_single_round_ent}
			For any $\omega_{\mathrm{th}}\in\left[\frac{3}{4},\frac{2+\sqrt{2}}{4}\right]$, let $\Sigma = \{ \sigma | w\left(\sigma\right) \geq \omega_{\mathrm{th}}  \}$ . Then,
			\begin{equation*}
				\sup_{\sigma \in \Sigma} H(\hat{A}_i|\hat{B}_iA_iB_iC_iD_iX_iY_i)_{\mathcal{N}_i(\sigma)} \leq \left(1-\gamma \right) \cdot g(\omega_{\mathrm{th}}) \;,
			\end{equation*}
			where 
			\[
				g(\omega_{\mathrm{th}})=2h\left(\frac{1}{2}-\frac{2\omega_{\mathrm{th}} -1}{\sqrt{2}}\right) -1
			\]
			 and	$h$ is the binary entropy function.
		\end{lemma}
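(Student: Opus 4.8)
The goal is to upper-bound the supremum over the non-aborting set $\Sigma$ of the conditional von Neumann entropy $H(\hat{A}_i|\hat{B}_iA_iB_iC_iD_iX_iY_i)_{\mathcal{N}_i(\sigma)}$, where $\mathcal{N}_i$ is the round map of Protocol~\ref{pro:mod_ent_test}. My plan is to split the entropy according to whether a test is performed in round $i$, i.e., according to the value of the classical flag $T_i$, and then invoke the single-round Bell-diagonal bound (Lemma~\ref{lem:single_ent_convex_comb_Bell}) in the one branch where the register $\hat{A}_i\hat{B}_i$ actually carries a state.

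First I would observe that $T_i$ is a classical register that is included in the conditioning side (it is part of $B_iX_iY_i$-type side information, or can be adjoined for free since conditioning on a classical register can only decrease entropy and is in any case determined by the protocol). Conditioning on $T_i$ lets me write the conditional entropy as the convex combination
\begin{equation*}
  H(\hat{A}_i|\hat{B}_iA_iB_iC_iD_iX_iY_iT_i) = \gamma\, H(\hat{A}_i|\cdots, T_i=1) + (1-\gamma)\, H(\hat{A}_i|\cdots, T_i=0)\;.
\end{equation*}
In the branch $T_i=1$ a measurement is performed and $\hat{A}_i$ is never populated (it stays at its reset value $\perp$), so the conditional entropy of $\hat{A}_i$ is zero. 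This kills the $\gamma$ term entirely and leaves only the $(1-\gamma)$ factor in front of the $T_i=0$ contribution — which is exactly where the prefactor $(1-\gamma)$ in the statement comes from.

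In the branch $T_i=0$ the state kept in $\hat{A}_i\hat{B}_i$ is, after the two modifications of Protocol~\ref{pro:mod_ent_test}, a convex combination of Bell-diagonal states of the form~\eqref{eq:mixt_of_bell_mix}, and the extra registers $A_i,B_i,X_i,Y_i$ are all at their reset value $\perp$ (no measurement took place), so they carry no information and can be dropped from the conditioning. Thus the $T_i=0$ term reduces to $H(\hat{A}_i|\hat{B}_iC_iD_i)_{\tilde\rho^i}$, and I apply Lemma~\ref{lem:single_ent_convex_comb_Bell} to bound this by $\mathscr{H}(\beta)-1$ for any state with violation $\beta$. Finally I take the supremum over $\sigma\in\Sigma$: the constraint $w(\sigma)\ge\omega_{\mathrm{th}}$ together with monotonicity of $\mathscr{H}(\beta)-1$ (decreasing in $\beta$ on the relevant range, as visible in Figure~\ref{fig:single_round_entropy}) means the supremum is attained at the smallest admissible violation $\beta = 8\omega_{\mathrm{th}}-4$, i.e.\ at $\omega_{\mathrm{th}}$ itself. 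Substituting $\beta/(4\sqrt{2}) = (2\omega_{\mathrm{th}}-1)/\sqrt{2}$ via $\omega=\tfrac12+\tfrac{\beta}{8}$ turns $\mathscr{H}(\beta)-1$ into $g(\omega_{\mathrm{th}})=2h\!\left(\tfrac12-\tfrac{2\omega_{\mathrm{th}}-1}{\sqrt2}\right)-1$, giving the claimed bound $(1-\gamma)\,g(\omega_{\mathrm{th}})$.

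The main obstacle I anticipate is not any single calculation but the bookkeeping that justifies discarding registers: I must argue carefully that in the $T_i=0$ branch the measurement registers are genuinely trivial (constant $\perp$) so that $H(\hat{A}_i|\hat{B}_iA_iB_iC_iD_iX_iY_i) = H(\hat{A}_i|\hat{B}_iC_iD_i)$, and that the decomposition over $T_i$ is legitimate even though $\sigma$ ranges over the non-IID admissible set $\Sigma$. The supremum must commute with this branch decomposition, which works precisely because the $T_i=0$ contribution depends on $\sigma$ only through its CHSH statistics, letting the monotonicity argument pin down the extremiser. The rest — substituting the $\beta$–$\omega$ relation and matching $g$ — is routine.
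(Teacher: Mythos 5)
Your proposal is correct and follows essentially the same route as the paper's own proof: decompose the conditional entropy over the classical flag $T_i$ (the paper does this via the equivalent conditioning on whether $A_i,B_i=\perp$), observe that the $T_i=1$ branch contributes zero because $\hat{A}_i\hat{B}_i$ hold a deterministic reset value, drop the trivial $\perp$-valued registers $A_iB_iX_iY_i$ in the $T_i=0$ branch, and then invoke Lemma~\ref{lem:single_ent_convex_comb_Bell} with the substitution $\beta = 8\omega_{\mathrm{th}}-4$. Your explicit monotonicity argument for handling the inequality constraint $w(\sigma)\geq\omega_{\mathrm{th}}$ in the supremum is a point the paper leaves implicit, but it does not change the structure of the argument.
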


	
	\subsection{Upper bound on the total smooth max-entropy}\label{sec:upper_bound_max_ent}
	
		As mentioned in Section~\ref{sec:diec_result}, a lower bound on the one-shot distillable entanglement can be calculated using the conditional smooth max-entropy $H_{\max}^{\varepsilon_{\mathrm{smo}}}$. 
		The following quantitive relation was derived in~\cite[Proposition 21]{wilde2016converse} :
		\begin{lemma}[\cite{wilde2016converse}]\label{lem:ent_dist}
			For any $\theta_{\hat{A}\hat{B}}$, $\varepsilon_{\mathrm{dist}}\in\left[ 0,1\right]  $, and $\varepsilon_{\mathrm{smo}}\in \lbrack 0,\sqrt{\varepsilon_{\mathrm{dist}}})$ there exists a one-way entanglement distillation protocol $\Gamma$, utilising classical communication from Alice to Bob, such that%
			\[
				F(\Phi^L,\Gamma(\theta_{\hat{A}\hat{B}}))\geq 1-\varepsilon_{\mathrm{dist}},
			\]
			where $\Phi^L$ is a maximally entangled state of Schmidt rank $L$ for
			\[
				\log L=-H_{\max}^{\varepsilon_{\mathrm{smo}}}(\hat{A}|\hat{B})_{\theta}-4\log\left(  \frac{1}{\sqrt{\varepsilon_{\mathrm{dist}}}-\varepsilon_{\mathrm{smo}}}\right)  \;.
			\]
		\end{lemma}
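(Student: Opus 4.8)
The plan is to establish this one-shot distillation bound through a decoupling argument combined with entropic duality, which is the standard route for one-shot entanglement distillation achievability. First I would purify $\theta_{\hat A\hat B}$ to a pure state $\psi_{\hat A\hat B E}$, with $E$ held by the environment, and pass to the smoothing state: let $\tilde\theta_{\hat A\hat B}$ be the (sub-normalised) state within purified distance $\varepsilon_{\mathrm{smo}}$ of $\theta_{\hat A\hat B}$ that achieves $H_{\max}(\hat A|\hat B)_{\tilde\theta} = H_{\max}^{\varepsilon_{\mathrm{smo}}}(\hat A|\hat B)_\theta$, together with a purification $\tilde\psi_{\hat A\hat B E}$ that is $\varepsilon_{\mathrm{smo}}$-close to $\psi$ by Uhlmann's theorem. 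The structural input is the entropic duality relation $H_{\max}(\hat A|\hat B)_{\tilde\theta} = -H_{\min}(\hat A|E)_{\tilde\psi}$, which converts the max-entropy we control into a min-entropy lower bound on the $\hat A E$ marginal, and it is precisely this min-entropy that the decoupling theorem consumes.

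Second, I would build the protocol around the decoupling theorem. Alice splits her register as $\hat A = A_1 A_2$ with $\log|A_2| = \log L$, applies a Haar-random unitary $U$ to $\hat A$, measures $A_1$ in a fixed basis, and sends the outcome to Bob over the one-way classical channel. The decoupling inequality bounds the average (over $U$) trace distance between the post-measurement $A_2E$ marginal and the product $\frac{\mathbb{I}_{A_2}}{|A_2|}\otimes\tilde\psi_E$ by roughly $2^{-\frac{1}{2}(H_{\min}(\hat A|E)_{\tilde\psi}-\log|A_2|)}$. Choosing $\log|A_2|=\log L$ slightly below $-H_{\max}(\hat A|\hat B)=H_{\min}(\hat A|E)$ makes this quantity small, and averaging then guarantees a \emph{single} good unitary $U$, which Alice may fix in advance.

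Third, I would turn the decoupling statement into the distilled state. Since the global state on $A_1 A_2 \hat B E$ is pure and $A_2$ is approximately in product with $E$, Uhlmann's theorem yields a local decoding isometry for Bob, depending only on Alice's classical message, such that the resulting state on $A_2$ and a Bob-side register is close to $\Phi^L$. The entire procedure is manifestly one-way LOCC: a local unitary and measurement on Alice's side, one classical message, and a local isometry on Bob's side, which matches the claim that $\Gamma$ uses only classical communication from Alice to Bob.

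Finally, the error accounting is where I expect the real work to lie. One must chain the smoothing error $\varepsilon_{\mathrm{smo}}$ (measured in purified distance), the decoupling error (in trace distance), and the Uhlmann step, and then convert the resulting trace-distance bound into the fidelity guarantee $F(\Phi^L,\Gamma(\theta_{\hat A\hat B}))\ge 1-\varepsilon_{\mathrm{dist}}$. Balancing these contributions and solving for the largest admissible $\log L$ is exactly what produces the explicit penalty $-4\log\!\big(1/(\sqrt{\varepsilon_{\mathrm{dist}}}-\varepsilon_{\mathrm{smo}})\big)$ together with the constraint $\varepsilon_{\mathrm{smo}}<\sqrt{\varepsilon_{\mathrm{dist}}}$; obtaining the sharp constant $4$ and the precise form of the threshold, rather than an unspecified $O(\log(1/\varepsilon))$ correction, is the main obstacle and demands that each inequality be tracked tightly.
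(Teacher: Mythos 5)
You should first note that the paper contains \emph{no proof} of this statement: Lemma~\ref{lem:ent_dist} is imported verbatim from the cited reference (Proposition~21 of~\cite{wilde2016converse}) and used as a black box, so there is no internal argument to compare yours against. Your sketch is essentially a reconstruction of the standard proof of that external result, and its skeleton is the right one: purify $\theta_{\hat{A}\hat{B}}$, pass to the smoothing optimiser and a nearby purification via Uhlmann, invoke the duality $H_{\max}(\hat{A}|\hat{B})_{\tilde\theta}=-H_{\min}(\hat{A}|E)_{\tilde\psi}$, decouple $\hat{A}$ from $E$ by a random unitary followed by a measurement of $A_1$ whose outcome is the one-way classical message, and let Bob decode with an Uhlmann isometry; this is indeed one-way LOCC as the lemma requires. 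Moreover, the error accounting you defer does close, and along exactly the lines you anticipate: the decoupling error in trace distance is of order $2^{-\frac{1}{2}\left(H_{\min}(\hat{A}|E)_{\tilde\psi}-\log L\right)}$, converting it to purified distance costs a square root, so the output's purified distance to $\Phi^L$ is at most $\varepsilon_{\mathrm{smo}}+c\,2^{-\frac{1}{4}\left(H_{\min}(\hat{A}|E)_{\tilde\psi}-\log L\right)}$ for a benign constant $c$; since $F\geq 1-\varepsilon_{\mathrm{dist}}$ is equivalent to purified distance at most $\sqrt{\varepsilon_{\mathrm{dist}}}$, one needs $2^{-\frac{1}{4}\left(H_{\min}(\hat{A}|E)_{\tilde\psi}-\log L\right)}\lesssim\sqrt{\varepsilon_{\mathrm{dist}}}-\varepsilon_{\mathrm{smo}}$, and solving for $\log L$ produces precisely the additive penalty $-4\log\left(1/\bigl(\sqrt{\varepsilon_{\mathrm{dist}}}-\varepsilon_{\mathrm{smo}}\bigr)\right)$ together with the standing hypothesis $\varepsilon_{\mathrm{smo}}<\sqrt{\varepsilon_{\mathrm{dist}}}$. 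In other words, the fourth-root structure (one half from the decoupling exponent, one half from the trace-to-purified-distance conversion) is exactly where the constant $4$ comes from, so your plan is not merely plausible but would yield the stated bound; what remains is careful bookkeeping of the kind done in~\cite{wilde2016converse} --- or, as the paper itself does, one may simply cite that reference.
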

	
		The objective of this section is, thus, to supply a negative upper bound on 
		\[
			H^{\varepsilon_{\mathrm{smo}}}_{\max}(\hat{A}|\hat{B}ABCDXY)_{\tau_{|\Omega}} 	\;.
		\]

		We do this using the entropy accumulation theorem (EAT)~\cite{dupuis2016entropy}. 
		The EAT gives a way of bounding conditional smooth min- and max-entropies in sequential processes in which certain systems of interest are being produced one after the other in overall $n$ steps (not necessarily in an IID way). It, roughly, states that the total amount of smooth entropy accumulated during the \emph{entire} process is $n$ times the von Neumann entropy produced in a \emph{single} step of the process. In our context, this translates to saying that the total amount of smooth max-entropy can be related to the von Neumann entropy considered in Section~\ref{sec:single_round_ent}.
		All the definitions and statements of the EAT  which are necessary for our work are presented in Section~\ref{sec:eat_def_app}.
	
		\subsubsection{Prerequisites of the EAT}\label{sec:prereq_eat}
	
			Before applying the EAT, we show that the prerequisites of the EAT hold.

			Specifically, $\mathcal{N}_i:R_{i-1}\rightarrow R_i\hat{A}_i\hat{B}_iA_iB_iC_iD_iX_iY_iW_i$ are the channels defined by Protocol~\ref{pro:mod_ent_test} and will act as our ``EAT channels''.
			We need to show that these are indeed EAT channels, i.e., that they fulfil Definition~\ref{def:eat_channels} (see Section~\ref{sec:eat_def_app}). To verify that this is the case note the following:
			\begin{enumerate}
				\item $W_i$ are classical finite-dimensional systems. $ d_{\hat{A}_i} = 2$ (due to the projection step of Protocol~\ref{pro:mod_ent_test}).
				\item The classical value $W_i$ is a function of $A_iB_iX_iY_i$. Hence, it can be measured from the output of the EAT channels (for any input state) without modifying the state.
				\item The necessary Markov-chain conditions hold, as stated in the next lemma.
			\end{enumerate}
	
			\begin{lemma}\label{lem:markov_cond}
				For all $i\in[n]$ and any initial state,
				\[
					\hat{A}_{1}^{i-1} \leftrightarrow  \hat{B}_{1}^{i-1}A_{1}^{i-1}B_{1}^{i-1}C_{1}^{i-1}D_{1}^{i-1}X_{1}^{i-1}Y_{1}^{i-1} E  \leftrightarrow \hat{B}_iA_iB_iC_iD_iX_iY_i 
				\]
				holds for the final state  $\tau$ of Protocol~\ref{pro:mod_ent_test}.
			\end{lemma}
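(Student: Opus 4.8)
The plan is to verify the Markov condition in its equivalent information-theoretic form $I\big(\hat{A}_{1}^{i-1} : \hat{B}_iA_iB_iC_iD_iX_iY_i \,\big|\, \hat{B}_{1}^{i-1}A_{1}^{i-1}B_{1}^{i-1}C_{1}^{i-1}D_{1}^{i-1}X_{1}^{i-1}Y_{1}^{i-1}E\big) = 0$, writing $A = \hat{A}_{1}^{i-1}$ for the past kept Alice qubits, $B$ for the full conditioning system, and $C = S_i := \hat{B}_iA_iB_iC_iD_iX_iY_i$ for the round-$i$ side information. The central observation is that $S_i$ is produced by the single map $\mathcal{N}_i$ of Equation~\eqref{eq:maps_mod_prot_def}, which acts only on the live register $R_{i-1}$ together with the verifier's fresh local randomness (the coins for $T_i$, $X_i$, $Y_i$ and the twirling unitary $U$), and never touches the trusted registers $\hat{A}_{1}^{i-1}$ holding the previously kept qubits.

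First I would reduce the claim to a statement not involving round $i$ at all. Since $\mathcal{N}_i$ produces the round-$i$ registers from $R_{i-1}$ (possibly reading the recorded classical registers, which live inside $B$) while leaving $A$ untouched, the data-processing inequality for the conditional mutual information gives $I(A : S_i \mid B) \le I(A : R_{i-1} \mid B)$, the latter evaluated in the state before round $i$. Hence it suffices to prove the round-independent Markov chain $\hat{A}_{1}^{i-1} \leftrightarrow B \leftrightarrow R_{i-1}$, i.e.\ that, conditioned on the past side information and the extension $E$, the kept Alice qubits are uncorrelated with the device's live register.

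To establish this I would invoke the two modifications of Protocol~\ref{pro:mod_ent_test} together with the causal assumption of Section~\ref{sec:setting_theo}. By that assumption, once a state $\phi^j$ (with $j<i$, $T_j=0$) has been moved into the trusted registers $\hat{A}_j\hat{B}_j$ the devices have no further access to it, so $R_{i-1}$ is built without ever acting on $\hat{A}_{1}^{i-1}\hat{B}_{1}^{i-1}$, and moreover the devices hold no entanglement of their own. Combined with the reduction-to-qubits projection and the twirling step, this is exactly the decoupling underlying Corollary~\ref{cor:bell_diagonal} and Lemma~\ref{lem:symm_dec}: conditioned on the recorded classical registers (denote them $\mathrm{cl}$, which include the Jordan indices $C_{1}^{i-1}D_{1}^{i-1}$) the joint kept state on $\hat{A}_{1}^{i-1}\hat{B}_{1}^{i-1}$ is in tensor product with $R_{i-1}$ and $E$. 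Thus $I(\hat{A}_{1}^{i-1}\hat{B}_{1}^{i-1} : R_{i-1}E \mid \mathrm{cl}) = 0$, and by the chain rule for the conditional mutual information both non-negative summands $I(\hat{B}_{1}^{i-1} : R_{i-1}E \mid \mathrm{cl})$ and $I(\hat{A}_{1}^{i-1} : R_{i-1}E \mid \hat{B}_{1}^{i-1}\mathrm{cl})$ must vanish; the latter is precisely $\hat{A}_{1}^{i-1} \leftrightarrow B \leftrightarrow R_{i-1}$, which closes the argument.

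The step I expect to be the main obstacle is the last one, because the conditioning system contains Bob's kept qubits $\hat{B}_{1}^{i-1}$, which are \emph{entangled} with $\hat{A}_{1}^{i-1}$; one must ensure that conditioning on them does not let $S_i$ acquire information about $\hat{A}_{1}^{i-1}$ through this entanglement. This is exactly where the symmetrisation step is essential: the twirl dephases each kept pair in the Bell basis and leaves $\hat{A}_{1}^{i-1}\hat{B}_{1}^{i-1}$ decoupled from the device memory, so that $R_{i-1}$ (hence $S_i$) carries no which-Bell-state information that could be combined with $\hat{B}_{1}^{i-1}$ to recover Alice's qubits. Making this decoupling precise at the level of the joint state of \emph{all} past rounds, rather than a single round as in Lemma~\ref{lem:symm_dec}, is the only delicate point; once the product form conditioned on $\mathrm{cl}$ is in hand, the Markov condition is immediate.
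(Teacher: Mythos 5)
There is a genuine gap, and it sits exactly at the step you flag as ``the only delicate point'': the claim that, conditioned on the recorded classical registers $\mathrm{cl}$, the kept pairs $\hat{A}_{1}^{i-1}\hat{B}_{1}^{i-1}$ are in tensor product with the device's live register $R_{i-1}$ (and $E$). This claim is false in the considered setting, and it is also strictly stronger than anything Lemma~\ref{lem:symm_dec} provides: the register $K$ in that lemma contains only the \emph{other recorded registers of the final state} $\tau$ (the other rounds' kept qubits and classical data), explicitly \emph{not} the devices' internal memory $R$ or $E$. The twirl is a pinching in the Bell basis --- exactly as you yourself describe it (``dephases each kept pair in the Bell basis'') --- and a pinching removes coherences, not classical correlations between the Bell index and the device's memory. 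Concretely, let the source emit $\ket{\Phi^+}$ or $\ket{\Psi^+}$ with probability $1/2$ each in some round $j$ with $T_j=0$, and keep a classical record $M$ of its choice in its internal memory. Nothing in Section~\ref{sec:setting_theo} forbids this: the causal assumption only prevents the devices from \emph{acting} on $\hat{A}_j\hat{B}_j$ after round $j$; it does not prevent correlations created at emission time, and the source is explicitly allowed to have memory. The post-twirl joint state is $\frac{1}{2}\left(\proj{\Phi^+}\otimes\proj{0}_{R} + \proj{\Psi^+}\otimes\proj{1}_{R}\right)$, for which $I(\hat{A}_j\hat{B}_j : R \mid \mathrm{cl}) = 1 \neq 0$, and likewise $I(\hat{A}_j : R \mid \hat{B}_j\,\mathrm{cl}) = 1$. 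So the quantity your data-processing step reduces everything to, $I\big(\hat{A}_{1}^{i-1} : R_{i-1} \mid \hat{B}_{1}^{i-1}\mathrm{cl}\,E\big)$, simply need not vanish: the device is allowed to ``know'' which Bell states it handed over. The reduction itself is valid, but it is a dead end --- it discards precisely the feature that makes the lemma plausible, namely that the round-$i$ side information $S_i$ contains only Bob's half $\hat{B}_i$ and classical outcomes, never the full device memory.

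The paper's proof never mentions $R_{i-1}$ at all. It writes the Markov condition as a conditional mutual information, splits it by the chain rule into one term per past round $j\in[i-1]$, and kills each term with a sandwich argument carried out entirely on the final state $\tau$: by strong subadditivity, $H(\hat{A}_j\mid\hat{B}_jA_jB_jC_jD_jX_jY_j) \le H(\hat{A}_j\mid\hat{B}_jC_jD_jT_j)$; by Lemma~\ref{lem:symm_dec} the latter equals $H(\hat{A}_j\mid\hat{B}_jC_jD_jT_jK)$ with $K$ all remaining \emph{recorded} registers, including the round-$i$ registers $\hat{B}_iA_iB_iC_iD_iX_iY_i$; and dropping the superfluous parts of $K$ (strong subadditivity again) brings this back down to a conditioning that contains exactly the round-$j$ and round-$i$ registers. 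The two ends of the sandwich then force the relevant conditional mutual information to zero. If you wanted to salvage your route, you would have to prove your product claim with $R_{i-1}$ replaced by the recorded registers of $\tau$ only --- at which point you have essentially reproduced Lemma~\ref{lem:symm_dec} and the paper's argument, rather than found an alternative to it.
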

			The crucial ingredient in the proof is Lemma~\ref{lem:symm_dec}, which asserts that the twirling step (Step~\ref{prostep:symm_maps} of Protocol~\ref{pro:mod_ent_test}) decouples the states kept in each round from the other registers; see Appendix~\ref{sec:proofs_smo_ent} for the details.
			It now becomes clear why the twirling step is necessary -- without it the required Markov-chain conditions do not hold. 
			For example, one can imagine a source that creates two bipartite states~$\phi^1$ and~$\phi^2$  entangled \emph{with one another}. In such a case the above Markov-chain conditions do not hold. 
			Step~\ref{prostep:symm_maps} therefore \emph{enforces} the necessary conditions while not destroying the entanglement between $\hat{A}_i$ and $\hat{B}_i$.

		\subsubsection{Max-tradeoff function}\label{sec:max_tradeoff_func}
	
			To apply the EAT we need to define a concave max-tradeoff function, as defined in Definition~\ref{def:max_tradeoff_func}. We construct one by following similar steps used to define min-tradeoff functions in~\cite{arnon2016simple}.
				
			Let $p$ be a probability distribution over $\{0,1,\perp\}$ resulting from the observed data $w$, i.e., $p(\tilde{w})=| \left\{ i | w_i = \tilde{w} \right\} |/n$ for $\tilde{w}\in\{0,1,\perp\}$,
			and define 
			\begin{equation*}
				\Sigma^p = \{ \sigma | \mathcal{N}_i(\sigma)_{W_i} = p \} \;.
			\end{equation*}
			
			Below we focus on probability distributions for which $p(0)+p(1)=\gamma$. The reason is that for $p(0)+p(1)\neq \gamma$ the set $\Sigma^p$ is empty and the condition on the max-tradeoff function becomes trivial. 
			For such $p$ we can write $\omega=\frac{p(1)}{p(0)+p(1)}=\frac{p(1)}{\gamma}$.
			
			Lemma~\ref{lem:final_single_round_ent} can now be used to define a max-tradeoff function for any $p$ with $\frac{p(1)}{\gamma} \in\left[\frac{3}{4},\frac{2+\sqrt{2}}{4}\right]$. 
			Define a function $f$ by  
			\begin{equation*}\label{eq:def_f}
			f(p) \,=\,  \begin{cases}
					\left(1-\gamma \right) \cdot g\left(\frac{p(1)}{\gamma} \right) &  \frac{p(1)}{\gamma}\in\left[0,\frac{2+\sqrt{2}}{4}\right] \\
					\gamma-1& \frac{p(1)}{\gamma}\in\left[\frac{2+\sqrt{2}}{4},1\right] \;,
					\end{cases}
			\end{equation*}
			where $g$ is as in Lemma~\ref{lem:final_single_round_ent}:
			\[
				g(\omega)=2h\left(\frac{1}{2}-\frac{2\omega -1}{\sqrt{2}}\right) -1
			\]
			for $h$ the binary entropy function.

			From Lemma~\ref{lem:final_single_round_ent} we have 
			\begin{equation*}\label{eq:one_box_entropy_final}
				\sup_{\sigma \in \Sigma^p} H(\hat{A}_i|\hat{B}_iA_iB_iC_iD_iX_iY_i)_{\mathcal{N}_i(\sigma)} \leq \left(1-\gamma \right) \cdot g\left(\frac{p(1)}{\gamma} \right) \;,
			\end{equation*}
			and, hence, it follows that any choice of $f_{\max}(p)$ that is differentiable and satisfies $f_{\max}(p) \geq f(p)$ for all $p$ will be a valid max-tradeoff function for our EAT channels.
			
			For the final bound derived using the EAT to be meaningful we choose $f_{\max}$ for which $\| \triangledown f_{\max} \|_{\infty}$ is finite (for $f$ given above the derivative at $\frac{p(1)}{\gamma}=\frac{2+\sqrt{2}}{4}$ is infinite).
			Let 
			\begin{equation*}\label{eq:f_max_choice}
				f_{\max}\left(p,p_t\right) = \begin{cases}
					f\left(p\right) &  p(1) \leq p_t(1) \\
					a(p_t) \cdot p(1) + b(p_t) & p(1)> p_t(1)\;,
				\end{cases}
			\end{equation*}
			where $p_t$ is a probability distribution over $\{0,1,\perp\}$ and
			\begin{equation*}\label{eq:derivative} 
				a(p_t)= \frac{\mathrm{d}}{\mathrm{d}p(1)} f(p)\big|_{p_t}  \qquad\text{and} \qquad b(p_t) =  f(p_t)-a(p_t)\cdot p_t(1). 
			\end{equation*}
			
			It follows from the definition of $a$ and $b$ given in the above equation that $f_{\max}$ is differentiable and, for any~$p_t$, $\| \triangledown f_{\max}(\cdot,p_t) \|_{\infty} \leq a(p_t)$. 
			Furthermore, as $f$ is a concave function, we also have that $f_{\max}(p) \geq f(p)$ for all $p$. Thus, $f_{\max}(p)$ is indeed a max-tradeoff function.
	
		\subsubsection{Applying the EAT}
		
			We are finally ready to apply the EAT, stated as Theorem~\ref{thm:eat}, to derive an upper-bound on the smooth max-entropy. 
			The smooth max-entropy rate is governed by the following functions:
			\begin{align}
				&f(p) =  
					\begin{cases}
						 \left(1-\gamma \right) \cdot g\left(\frac{p(1)}{\gamma} \right) &  \frac{p(1)}{\gamma}\in\left[0,\frac{2+\sqrt{2}}{4}\right] \\
						\gamma-1& \frac{p(1)}{\gamma}\in\left[\frac{2+\sqrt{2}}{4},1\right] \;,
					\end{cases}\notag\\
				&f_{\max}\left(p,p_t\right) = 
					\begin{cases}
						f\left(p\right)&  p(1) \leq p_t(1) \;  \\
						\frac{\mathrm{d}}{\mathrm{d}p(1)} f(p)\big|_{p_t}  \cdot p(1)+ \Big( f(p_t) -	\frac{\mathrm{d}}{\mathrm{d}p(1)} f(p)\big|_{p_t} \cdot p_t(1) \Big)& p(1)> p_t(1)\;,
					\end{cases} \nonumber \\
				&\eta(p,p_t,\varepsilon_{\mathrm{smo}},\varepsilon_{\mathrm{snd}}) =  
					f_{\max}\left(p, p_t\right) + \frac{1}{\sqrt{n}}2\left( \log 5 + \Big| \frac{\mathrm{d}}{\mathrm{d}p(1)} g(p)\big|_{p_t}  \Big| \right)\sqrt{1-2 \log (\varepsilon_{\mathrm{smo}} \cdot 	\varepsilon_{\mathrm{snd}})}\;, \nonumber\\
				&\eta_{\mathrm{opt}}(\varepsilon_{\mathrm{smo}}, \varepsilon_{\mathrm{snd}}) =
					 \min_{p_t: \frac{3}{4} < \frac{p_t(1)}{\gamma} < \frac{2+\sqrt{2}}{4}} \; \eta(\omega_{\mathrm{exp}}\gamma - \delta_{\mathrm{est}},p_t,\varepsilon_{\mathrm{smo}},\varepsilon_{\mathrm{snd}})\;. \label{eq:eta_opt}
			\end{align}

			\begin{lemma}\label{lem:ent_rate}
				For any source and measurement device in the considered setting, let $\tau$ the state generated using Protocol~\ref{pro:mod_ent_test}, $\Omega$ the event that Protocol~\ref{pro:mod_ent_test} does not abort, and $\tau_{|\Omega}$ the state conditioned on $\Omega$.
				Then, for any $\varepsilon_{\mathrm{smo}},\varepsilon_{\mathrm{snd}}\in (0,1)$, either Protocol~\ref{pro:mod_ent_test} aborts with probability greater than $1-\varepsilon_{\mathrm{snd}}$ or
				\begin{equation*}
					 H^{\varepsilon_{\mathrm{smo}}}_{\max} \left( \hat{A}|\hat{B}ABCDXY \right)_{\tau_{|\Omega}} < n\cdot \eta_{\mathrm{opt}}(\varepsilon_{\mathrm{smo}},\varepsilon_{\mathrm{snd}}) \;,
				\end{equation*}
				where  $\eta_{\mathrm{opt}}$ is defined in Equation~\eqref{eq:eta_opt}. 
			\end{lemma}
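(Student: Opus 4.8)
The plan is to obtain the bound by a direct application of the entropy accumulation theorem (Theorem~\ref{thm:eat}) to the channels $\mathcal{N}_i$ of Protocol~\ref{pro:mod_ent_test}. The prerequisites have already been established: Section~\ref{sec:prereq_eat} verifies that the $\mathcal{N}_i$ are EAT channels in the sense of Definition~\ref{def:eat_channels}, and Section~\ref{sec:max_tradeoff_func} constructs a differentiable max-tradeoff function $f_{\max}(\cdot,p_t)$ with $\|\triangledown f_{\max}(\cdot,p_t)\|_\infty \leq a(p_t)$ in the sense of Definition~\ref{def:max_tradeoff_func}. I would make the identification $O_i=\hat{A}_i$, $S_i=\hat{B}_iA_iB_iC_iD_iX_iY_i$, and let $W_i\in\{0,1,\perp\}$ be the register recording a win/loss in a test round and $\perp$ in a generation round. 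With this identification the quantity produced by the EAT is exactly $H_{\max}^{\varepsilon_{\mathrm{smo}}}(O|S)_{\tau_{|\Omega}}=H_{\max}^{\varepsilon_{\mathrm{smo}}}(\hat{A}|\hat{B}ABCDXY)_{\tau_{|\Omega}}$, the left-hand side of the lemma.

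The central step is to exhibit a threshold $t$ with $f_{\max}(\mathrm{freq}_w)\leq t$ for every $w$ with $\Pr[w]_{\tau_{|\Omega}}>0$. Here I would use the definition of the non-aborting event $\Omega$: for any such $w$ the protocol did not abort, so $\sum_i\chi(T_i=1)\,w_i\geq(\omega_{\mathrm{exp}}\gamma-\delta_{\mathrm{est}})n$, i.e.\ $\mathrm{freq}_w(1)\geq\omega_{\mathrm{exp}}\gamma-\delta_{\mathrm{est}}$. Two features of $f_{\max}$ then do the work. First, $f_{\max}$ depends on $p$ only through $p(1)$; this is essential because the number of test rounds is random, so $\mathrm{freq}_w(\perp)$ fluctuates around $1-\gamma$ and a tradeoff function depending on $p(\perp)$ could not be controlled. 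Second, on the interval $[\omega_{\mathrm{exp}}\gamma-\delta_{\mathrm{est}},1]$ of attainable values of $p(1)$ the map $p(1)\mapsto f_{\max}(p,p_t)$ is monotonically decreasing: for $p(1)\le p_t(1)$ it equals $f(p)=(1-\gamma)g(p(1)/\gamma)$, and $g$ is decreasing for winning probabilities above $1/2$, while for $p(1)>p_t(1)$ it is the linear extension with negative slope $a(p_t)$. Hence the supremum over admissible $w$ is attained at the abort threshold and $t=f_{\max}(p^*,p_t)$ with $p^*(1)=\omega_{\mathrm{exp}}\gamma-\delta_{\mathrm{est}}$, which is precisely the first term of $\eta$ in Equation~\eqref{eq:eta_opt}.

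With $t$ in hand I would substitute into the EAT bound $H_{\max}^{\varepsilon_{\mathrm{smo}}}(O|S)_{\tau_{|\Omega}}\leq nt+v\sqrt{n}$, using $d_{O_i}=2$ so that $\log(1+2d_{O_i})=\log 5$, and the gradient bound $\|\triangledown f_{\max}(\cdot,p_t)\|_\infty\leq a(p_t)$ to control $v$. This leaves a factor $\sqrt{1-2\log(\varepsilon_{\mathrm{smo}}\cdot\Pr[\Omega]_\tau)}$ still involving the unknown $\Pr[\Omega]_\tau$, and here the disjunction in the statement is produced by a case split: either $\Pr[\Omega]_\tau\leq\varepsilon_{\mathrm{snd}}$, in which case the protocol aborts with probability greater than $1-\varepsilon_{\mathrm{snd}}$ and the first alternative holds, or $\Pr[\Omega]_\tau>\varepsilon_{\mathrm{snd}}$, in which case $\log(\varepsilon_{\mathrm{smo}}\Pr[\Omega]_\tau)>\log(\varepsilon_{\mathrm{smo}}\varepsilon_{\mathrm{snd}})$ lets me replace $\Pr[\Omega]_\tau$ by $\varepsilon_{\mathrm{snd}}$ at the cost of making the inequality strict. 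Since the EAT applies for each fixed choice of $p_t$ (each choice fixing one valid $f_{\max}$), the resulting strict bound holds for every admissible $p_t$, and taking the infimum over $p_t$ with $\tfrac{3}{4}<p_t(1)/\gamma<\tfrac{2+\sqrt{2}}{4}$ turns the right-hand side into $n\,\eta_{\mathrm{opt}}(\varepsilon_{\mathrm{smo}},\varepsilon_{\mathrm{snd}})$.

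The main obstacle I anticipate is the second paragraph rather than the mechanical substitution: one must check carefully that $f_{\max}(\cdot,p_t)$ is genuinely nonincreasing throughout the whole range of frequencies compatible with $\Omega$, which uses that $\omega_{\mathrm{exp}}\gamma-\delta_{\mathrm{est}}$ lies comfortably above $\gamma/2$ so that only the decreasing branch of $g$ is relevant, and that the design choice of making $f_{\max}$ a function of $p(1)$ alone is exactly what decouples the conclusion from the random test count. The remaining bookkeeping --- evaluating $a(p_t)$, and the passage from $v$ (with its ceiling) to the clean gradient term appearing in $\eta$ --- is routine, and is where the slack from the strict inequality $\Pr[\Omega]_\tau>\varepsilon_{\mathrm{snd}}$ is absorbed.
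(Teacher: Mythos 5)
Your proposal is correct and takes essentially the same route as the paper, whose proof of this lemma is simply a direct application of Theorem~\ref{thm:eat} using the prerequisites established in Sections~\ref{sec:prereq_eat} and~\ref{sec:max_tradeoff_func}. Your write-up additionally makes explicit two steps the paper leaves implicit --- the monotonicity of $f_{\max}(\cdot,p_t)$ over the frequencies compatible with $\Omega$, which justifies taking the threshold $t$ at $p(1)=\omega_{\mathrm{exp}}\gamma-\delta_{\mathrm{est}}$, and the case split on $\Pr[\Omega]_{\tau}$ versus $\varepsilon_{\mathrm{snd}}$ that produces the disjunction in the statement.
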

			
			\begin{proof}
				In Sections~\ref{sec:prereq_eat} and~\ref{sec:max_tradeoff_func} we showed that the prerequisites of the EAT hold and that $f_{\max}$ defined above is a max-tradeoff function for the EAT channels defined via Protocol~\ref{pro:mod_ent_test}. The statement of the lemma then follows by applying the EAT, stated as Theorem~\ref{thm:eat}, with the above choices.
			\end{proof}

	
	\subsection{Final statement}

		We are ready to prove our main theorem, Theorem~\ref{thm:main_thm}. For convenience, we restate the theorem here. 
		$\mathcal{S}^{\mathrm{honest}}$ and $\mathcal{D}^{\mathrm{honest}}$ are as discussed in Section~\ref{sec:diec_result} (or see Section~\ref{sec:comp_pro} below).
		
		\begin{customthm}{\ref{thm:main_thm}}[Main theorem]
			For any $n\in \mathbb{N_+} $, $\varepsilon_{\mathrm{dist}},\varepsilon_{\mathrm{snd}}\in [0,1]$, and $\varepsilon_{\mathrm{smo}}\in[0,\sqrt{\varepsilon_{\mathrm{dist}}})$, Protocol~\ref{pro:ent_test} is a DIEC protocol with:
			\begin{enumerate}
				\item Noise-tolerance (completeness): The probability that $\mathrm{P}$ aborts when applied on any $\phi^{\mathrm{honest}}\in\mathcal{S}^{\mathrm{honest}}$ using honest measurement devices from $\mathcal{D}^{\mathrm{honest}}$  is at most $\varepsilon_{\mathrm{cmp}} \leq \exp(- 2 n\delta_{\mathrm{est}}^2 )$.
				\item Entanglement certification (soundness): For \emph{any} source and measurement devices in the considered setting, either Protocol~\ref{pro:ent_test} aborts with probability greater than $1-\varepsilon_{\mathrm{snd}}$ when applied on $\phi$ or 
				$E_D^{n,\varepsilon_{\mathrm{dist}}}(\rho_{|\Omega}) \geq r$ for $r=\log(L)/ n$ and
				\[
					\log L= - n\cdot \eta_{\mathrm{opt}}(\varepsilon_{\mathrm{smo}},\varepsilon_{\mathrm{snd}}) -4\log\left(  \frac{1}{\sqrt{\varepsilon_{\mathrm{dist}}}-\varepsilon_{\mathrm{smo}}}\right)\;,
				\]
				where $\eta_{\mathrm{opt}}$ is defined in Equation~\eqref{eq:eta_opt}. 
			\end{enumerate}
		\end{customthm}
		
		The theorem follows from the combination of Lemmas~\ref{lem:completeness} and~\ref{lem:soundness} given below.
		
		\subsubsection{Noise-tolerance (completeness)}\label{sec:comp_pro}
			
			As the honest source and devices we choose to consider a source that produces identical and independent copies of a state $\phi^i=\sigma$ and  measurement devices that apply the same measurements in each round when they are used. The state $\sigma$ and the measurements are such that the winning probability achieved in the CHSH game is at least $\omega_{\mathrm{exp}}$. 
			For example, one can choose $\mathcal{D}^{\mathrm{honest}}$ to include the measurement devices that apply the optimal measurements performed in the CHSH game and the
			 set $\mathcal{S}^{\mathrm{honest}}$ to include all states $\phi^{\mathrm{honest}}=\sigma^{\otimes n}$ for~$\sigma$ any noisy maximally entangled state that will result in winning probability $\geq \omega_{\mathrm{exp}}$.
			The following lemma bounds the probability of Protocol~\ref{pro:ent_test} aborting when using an honest device as above.
	
			\begin{lemma}\label{lem:completeness}
				The probability that Protocol~\ref{pro:ent_test} aborts for an honest implementation discussed above is at most $\varepsilon_{\mathrm{cmp}} \leq \exp(- 2 n\delta_{\mathrm{est}}^2 ) $.		
			\end{lemma}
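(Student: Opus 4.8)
The plan is to recognise that, for the honest implementation, the random variable $\sum_i \chi(T_i=1)\,W_i$ appearing in the abort condition (Step~\ref{prostep:abort_ET_1}) is a sum of independent and identically distributed bounded random variables, so that the abort probability can be controlled by a single application of Hoeffding's inequality. First I would define, for each round $i\in[n]$, the indicator $Z_i = \chi(T_i=1)\,W_i \in \{0,1\}$, which equals $1$ precisely when round $i$ is a test round \emph{and} the CHSH game is won in it. Because the honest source emits $\sigma$ independently in every round and the honest devices apply the same measurements whenever they are used, and because the test flag $T_i$ is drawn independently with $\Pr[T_i=1]=\gamma$ in each round, the variables $Z_1,\dots,Z_n$ are i.i.d.

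Next I would fix their common expectation. Conditioned on $T_i=1$, the honest devices win the CHSH game on $\sigma$ with probability at least $\omega_{\mathrm{exp}}$, by the defining property of $\mathcal{S}^{\mathrm{honest}}$ and $\mathcal{D}^{\mathrm{honest}}$, so that
\[
	\mathbb{E}[Z_i] = \gamma \cdot \Pr[W_i=1 \mid T_i=1] \geq \gamma\,\omega_{\mathrm{exp}} =: \mu \;.
\]
The protocol aborts exactly on the event $\sum_i Z_i < (\omega_{\mathrm{exp}}\gamma - \delta_{\mathrm{est}})\,n$. Since $\omega_{\mathrm{exp}}\gamma - \delta_{\mathrm{est}} \leq \mu - \delta_{\mathrm{est}}$, this event is contained in $\{\frac{1}{n}\sum_i Z_i \leq \mu - \delta_{\mathrm{est}}\}$, i.e.\ in a lower-tail deviation of the empirical mean below its expectation by at least $\delta_{\mathrm{est}}$.

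Finally I would invoke Hoeffding's inequality for i.i.d.\ variables valued in $[0,1]$, which yields
\[
	\Pr\!\left[\tfrac{1}{n}\textstyle\sum_i Z_i \leq \mu - \delta_{\mathrm{est}}\right] \leq \exp\!\left(-2n\delta_{\mathrm{est}}^2\right) \;,
\]
and hence $\varepsilon_{\mathrm{cmp}} \leq \exp(-2n\delta_{\mathrm{est}}^2)$, as claimed. There is essentially no hard step here; the only point requiring care is making the i.i.d.\ claim rigorous, which rests squarely on the IID structure built into the honest source and devices so that both the randomness of the test flags and the randomness of winning each played round factorise across rounds. Everything else is a routine concentration estimate.
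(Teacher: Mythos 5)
Your proposal is correct and matches the paper's proof: both identify $\chi(T_i=1)\,W_i$ as i.i.d.\ bounded random variables with mean (at least) $\omega_{\mathrm{exp}}\gamma$ in the honest implementation and conclude via a single application of Hoeffding's inequality. Your slightly more careful handling of the inequality $\mathbb{E}[Z_i]\geq\gamma\,\omega_{\mathrm{exp}}$ (rather than asserting equality) is a harmless refinement that accommodates honest states winning with probability strictly above $\omega_{\mathrm{exp}}$, but does not change the argument.
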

			
			\begin{proof}
				The protocol aborts in Step~\ref{prostep:abort_ET_1} when the sum of the $W_i$ obtained during the test rounds is not sufficiently high (this happens when the estimated Bell violation is too low or when not enough test rounds were chosen). In the honest implementation the products $\chi(T_i=1)\,W_i$ are IID\@ RVs with $\mathbb{E}\left[ \chi(T_i=1)\,W_i\right] = \omega_{\mathrm{exp}}\gamma$. Therefore, we can use Hoeffding's inequality:
				\[
					\varepsilon_{\mathrm{cmp}} =  \Pr \left[ \sum_i \chi(T_i=1)\, W_i \leq \left(\omega_{\mathrm{exp}}\gamma - \delta_{\mathrm{est}}\right) \cdot n\ \right] \leq \exp(- 2 n \delta_{\mathrm{est}}^2 ) \;. \qedhere
				\]
			\end{proof}

		\subsubsection{Entanglement certification (soundness)}
		
			\begin{lemma}\label{lem:soundness}
				For any source and measurement devices in the considered setting, denote by $\rho_{|\Omega}$ the state in the end of the DIEC protocol, Protocol~\ref{pro:ent_test}, conditioned on not aborting.
				 Let $\varepsilon_{\mathrm{dist}},\varepsilon_{\mathrm{snd}}\in[0,1]$ and $\varepsilon_{\mathrm{smo}}\in[0,\sqrt{\varepsilon_{\mathrm{dist}}})$.
				
				Then, either Protocol~\ref{pro:ent_test} aborts with probability greater than $1-\varepsilon_{\mathrm{snd}}$ 
				or $E_D^{n,\varepsilon_{\mathrm{dist}}}(\rho_{|\Omega}) \geq r$ for $r=\log(L)/ n$ and
				\begin{equation}\label{eq:fin_ent_rank}
					\log L= - n\cdot \eta_{\mathrm{opt}}(\varepsilon_{\mathrm{smo}},\varepsilon_{\mathrm{snd}}) -4\log\left(  \frac{1}{\sqrt{\varepsilon_{\mathrm{dist}}}-\varepsilon_{\mathrm{smo}}}\right)\;,
				\end{equation}
				where $\eta_{\mathrm{opt}}$ is defined in Equation~\eqref{eq:eta_opt}. 
			\end{lemma}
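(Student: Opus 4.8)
The plan is to prove the soundness statement by assembling the lemmas established in the preceding subsections, since all of the genuinely analytic content (the single-round entropy bound of Section~\ref{sec:single_round_ent} and the EAT estimate of Section~\ref{sec:upper_bound_max_ent}) has already been carried out. The key idea is to route the entire argument through the modified protocol, Protocol~\ref{pro:mod_ent_test}, and only transfer the conclusion back to the real protocol at the very end via Lemma~\ref{lem:locc_tau_to_rho}.

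Concretely, I would first invoke Lemma~\ref{lem:ent_rate}, which yields the dichotomy: either Protocol~\ref{pro:mod_ent_test} aborts with probability greater than $1-\varepsilon_{\mathrm{snd}}$, or
\[
	H^{\varepsilon_{\mathrm{smo}}}_{\max}(\hat{A}|\hat{B}ABCDXY)_{\tau_{|\Omega}} < n\cdot\eta_{\mathrm{opt}}(\varepsilon_{\mathrm{smo}},\varepsilon_{\mathrm{snd}})\;.
\]
By Lemma~\ref{lem:mod_same_stat} the abort probabilities of the two protocols coincide, $\Pr[\neg\Omega]_{\rho}=\Pr[\neg\Omega]_{\tau}$, so the first alternative is exactly the statement that Protocol~\ref{pro:ent_test} aborts with probability greater than $1-\varepsilon_{\mathrm{snd}}$, which is the first horn of the claimed dichotomy. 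In the second alternative I would apply Lemma~\ref{lem:ent_dist} of~\cite{wilde2016converse} to the bipartite state $\theta=\tau_{|\Omega}$ with Alice's register $\hat{A}$ and Bob's composite register $\hat{B}ABCDXY$. This produces a one-way LOCC protocol $\Gamma$ with $F(\Phi^L,\Gamma(\tau_{|\Omega}))\geq 1-\varepsilon_{\mathrm{dist}}$ for
\[
	\log L = -H^{\varepsilon_{\mathrm{smo}}}_{\max}(\hat{A}|\hat{B}ABCDXY)_{\tau_{|\Omega}} - 4\log\!\left(\frac{1}{\sqrt{\varepsilon_{\mathrm{dist}}}-\varepsilon_{\mathrm{smo}}}\right)\;,
\]
and substituting the entropy bound turns this equality into the desired lower bound on $\log L$. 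Finally, Lemma~\ref{lem:locc_tau_to_rho} converts $\Gamma$ into an LOCC protocol $\Delta$ distilling the same $\Phi^L$ from $\rho_{|\Omega}$ with error $\varepsilon_{\mathrm{dist}}$, and by the definition of $E_D^{n,\varepsilon_{\mathrm{dist}}}$ in Equation~\eqref{eq:one_shot_E_D} this gives $E_D^{n,\varepsilon_{\mathrm{dist}}}(\rho_{|\Omega})\geq \log L/n = r$.

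I expect the main difficulty to be bookkeeping rather than conceptual: one must check that the register partitions line up across the three lemmas being chained. In particular, Lemma~\ref{lem:ent_rate} conditions on $\hat{B}ABCDXY$ but not on the test registers $T,W$; since appending registers to the conditioning system can only decrease the conditional smooth max-entropy, the same upper bound holds for $H^{\varepsilon_{\mathrm{smo}}}_{\max}(\hat{A}|\hat{B}ABCDXYTW)_{\tau_{|\Omega}}$, so it is harmless to place $TW$ on Bob's side (matching the bipartition $Q^A=\hat{A}$ vs.\@ $Q^B=\hat{B}ABXYTW$ used for $\rho_{|\Omega}$) --- extra side information at Bob can only help the distillation. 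One should also note the minor point that Equation~\eqref{eq:one_shot_E_D} pins the error at equality $1-\varepsilon_{\mathrm{dist}}$ whereas we obtain fidelity at least $1-\varepsilon_{\mathrm{dist}}$; a protocol achieving fidelity at least $1-\varepsilon_{\mathrm{dist}}$ for Schmidt rank $L$ certifies $E_D^{n,\varepsilon_{\mathrm{dist}}}\geq \log L/n$ by monotonicity, so this causes no loss. The boundary values of the error parameters excluded from the open intervals of Lemma~\ref{lem:ent_rate} (e.g.\ $\varepsilon_{\mathrm{smo}}=0$ or $\varepsilon_{\mathrm{snd}}\in\{0,1\}$) are degenerate and can be treated separately; the substantive regime is covered by the argument above.
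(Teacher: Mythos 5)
Your proposal is correct and follows essentially the same route as the paper's own proof: combining Lemma~\ref{lem:ent_rate} with Lemma~\ref{lem:ent_dist} to distill from $\tau_{|\Omega}$, transferring the distillation protocol to $\rho_{|\Omega}$ via Lemma~\ref{lem:locc_tau_to_rho}, and identifying the abort probabilities of the two protocols via Lemma~\ref{lem:mod_same_stat}. Your extra bookkeeping remarks (conditioning additionally on $TW$ only decreases the smooth max-entropy, and fidelity at least $1-\varepsilon_{\mathrm{dist}}$ suffices for the definition of $E_D^{n,\varepsilon_{\mathrm{dist}}}$) are correct refinements that the paper glosses over.
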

			
			\begin{proof}
				Given a source and measurement devices in the considered setting, we first consider the hypothetical scenario in which \emph{Protocol~\ref{pro:mod_ent_test}} is being ran using the given devices.  
				Putting Lemmas~\ref{lem:ent_dist} and~\ref{lem:ent_rate} together we learn that
				either Protocol~\ref{pro:mod_ent_test} aborts with probability greater than $1-\varepsilon_{\mathrm{snd}}$ or
				there exists an entanglement distillation protocol $\Gamma$ such that  
				\begin{equation}\label{eq:fin_dist_tau}
					F\left( \Gamma\left(\tau_{|\Omega}\right), \ket{\Phi^L}\bra{\Phi^L}\right) \leq \varepsilon_{\mathrm{dist}} \;,
				\end{equation}
				for $L$ as in Equation~\eqref{eq:fin_ent_rank}.
				
				According to Lemma~\ref{lem:locc_tau_to_rho}, Equation~\eqref{eq:fin_dist_tau} implies that there exists an entanglement distillation protocol~$\Delta$ such that  
				\[
					F\left( \Delta\left(\rho_{|\Omega}\right), \ket{\Phi^L}\bra{\Phi^L}\right) \leq \varepsilon_{\mathrm{dist}} \;,
				\]
				with the same choice of parameters, i.e., $L$ is as in Equation~\eqref{eq:fin_ent_rank}.
				By the definition of the one-shot distillable entanglement, given in Equation~\eqref{eq:one_shot_E_D}, we get that $E_D^{n,\varepsilon_{\mathrm{dist}}}(\rho_{|\Omega}) \geq r$ for $r=\log(L)/ n$.
		
				Finally, Lemma~\ref{lem:mod_same_stat} tells us that if  Protocol~\ref{pro:mod_ent_test} aborts with probability greater than $1-\varepsilon_{\mathrm{snd}}$ then Protocol~\ref{pro:ent_test} aborts with probability greater than $1-\varepsilon_{\mathrm{snd}}$ as well.
							
				Combining the above observations, the lemma follows.
			\end{proof}
	
			The resulting distillable entanglement rates, $(\log L)/n$, are plotted in Figure~\ref{fig:dist_ent_rate} as a function of the expected winning probability in the CHSH game $\omega_{\mathrm{exp}}$ for different values of $n$. As seen from the figure, as the number of rounds of the protocol $n$ increases our rate approach the optimal rate, for our proof technique, given by the IID asymptotic rate; see Section~\ref{sec:open_quest} for further details.

			We remark that one can also derive improved rates for finite number of rounds $n$ by considering a slightly modified version of our DIEC protocol, similarly to what was done in~\cite[Appendix~B]{arnon2016simple}. As the modification of the protocol, the analysis, and the resulting rates follow directly by combining the analysis done here and that of~\cite[Appendix~B]{arnon2016simple} we do not present the details here.

\section{Open questions}\label{sec:open_quest}
	
	\subsubsection*{Tightness of our result}
	
		As mention in the previous sections, if one chooses to take the path of bounding the one-shot distillable entanglement using the smooth max-entropy, as done in the current work, then our quantitive results are tight to first order of $n$. 
		However, this may not be the only way to go. Considering other proof techniques is crucial in order to achieve a result in which there is positive certified distillable entanglement regime whenever a violation of the CHSH inequality is being detected.
		
		In general, it is known that there are Bell inequalities which can be violated by bound entangled states, i.e., entangled state which cannot be distilled~\cite{vertesi2014disproving} and, thus, for some Bell inequalities a zero rate regime, similar to the one observed here, is of a fundamental nature. 
		However, for the CHSH inequality this is not the case~\cite{masanes2006asymptotic}: bound entangled states cannot violate the CHSH inequality. Hence, asymptotically, one should be able to certify distillable entanglement for any violation.

		One way of assessing how far our results are from the optimal results, achievable using any proof technique, is to find \emph{upper bounds} on the asymptotic distillable entanglement of the states achieving the highest conditional entropy given their Bell violation (see Section~\ref{sec:single_round_ent}).
		One possible starting point is to consider the sets of states described in~\cite{leditzky2017useful}. 
		
	\subsubsection*{Possible extensions of our result}
	
		There are many possible ways of extending our work.
	
		\begin{enumerate}
			\item Our protocol and proof technique can also be modified to work with other Bell inequalities instead of the CHSH. This can potentially increase the rates when considering different types of honest sources of entanglement. For example, if one is interested in a source that emits partially entangled states then it probably makes more sense to consider the tilted CHSH inequalities~\cite{acin2012randomness} rather than the CHSH. The only part of the proof which requiers a modification is the upper bound on the von Neumann entropy for a single round given in Section~\ref{sec:single_round_ent}. We remark that one can achieve such a bound for any Bell inequality for which a robust self-testing result is known, e.g.,~\cite{bamps2015sum}, combined with the continuity of the von Neumann entropy~\cite{winter2016tight}. However, it is likely that such an approach will lead to relatively weak quantitive results. Thus, considering the von Neumann entropy directly for other Bell inequalities is a more promising direction. 
		
			\item An important direction to consider is the extension of our work to entanglement shared between more than two parties. This can then be used, for example, to consider scenarios and results as those derived in~\cite{bancal2014device,mccutcheon2016experimental} and extend them beyond the IID setting.
			
			\item Another possible extension of the analysis done here is to consider DIEC protocols that employ the more general (but less fundamental) separability preserving operations~\cite{rains1997entanglement,cirac2001entangling} rather than LOCC. To do so one should first consider one-shot distillation protocols which use separability preserving operations~\cite{brandao2011one}.

			\item One can also try to bound other operational measures of  entanglement. 
			For example, it will be interesting to lower bound the \emph{one-shot entanglement cost}.\footnote{In a related work~\cite{arnon2017noise}, a lower bound on the \emph{entanglement of formation} of a quantum state (closely related to its entanglement cost) as a function of its Bell violation is derived. The considered setting and type of statement are different than the ones presented here and are not operational in our sense. For further details see~\cite{arnon2017noise}.} Such a bound can be viewed as a ``dual'' of the one achieved in this work.  
			Given the results of~\cite{buscemi2011entanglement}, it is plausible that a similar proof technique as presented here can be used to achieve such a bound.
		
			\item In a different direction, it can also be of interest to consider other settings than the one considered in the current work (as described in Section~\ref{sec:setting}). As different experiments may require different sets of assumptions, formulating other interesting scenarios and modifying the proof  accordingly can be relevant. 
			
			\item Similarly, one may consider device-\emph{dependent} and semi-DI versions of our work. For example, it is possible to study a one-sided DI scenario in which one of the measurement devices is completely characterised. 
			The only part of our proof that needs be to modified in such a case is that given in Section~\ref{sec:single_round_ent} while replacing Bell inequalities with Steering inequalities~\cite{cavalcanti2016quantum}. The rest of the proof will follow as is.
			The additional assumptions can potentially result in certification rates higher than the ones presented in the current work. 
		\end{enumerate}

\section*{Acknowledgments}
We thank Renato Renner, David Sutter, and Thomas Vidick for helpful discussions.
We also thank Valerio Scarani for inviting RAF to visit his group at CQT Singapore, where the work on this project was initiated. 
RAF is supported by the Swiss National Science Foundation (grant No. 200020-135048) via the National Centre of Competence in Research ``Quantum Science and Technology'' and by the US Air Force Office of Scientific Research (grant No. FA9550-16-1-0245).	 
JDB  acknowledges  support  from the Swiss National Science Foundation (SNSF), through the NCCR QSIT and the Grant number PP00P2-150579.	

\appendix


\section{Proofs of the lemmas in Section~\ref{sec:modified protocol}}\label{sec:proofs_mod_prot}

\begin{customlemma}{\ref{lem:proj_idn_subspace}}
	Consider a scenario in which the projection to the two qubit space is applied on the state~$\phi^i$ directly after it is produced by the source in the $i$'th round (i.e., before choosing the value of $T_i$). Denote the resulting state in the end of the $i$'th round in such a case by $\bar{\bar{\rho}}^i$. Then 
	\[
		\bar{\bar{\rho}}^i_{\hat{A}_i\hat{B}_i A_iB_i C_i D_iX_iY_i} = \bar{\rho}^i_{\hat{A}_i\hat{B}_i A_iB_i C_i D_iX_iY_i}  \;,
	\]
	where $\bar{\rho}^i$ is as defined in Equation~\eqref{eq:state_after_proj}.
\end{customlemma}
\begin{proof}
	Firstly, for the rounds in which $T_i=0$ there is clearly no difference between $\bar{\bar{\rho}}^i_{|T_i=0}$ and $\bar{\rho}^i_{|T_i=0}$.
	We show that the same holds also when $T_i=1$. To this end we need to prove that  
	\[
		\bar{\bar{\rho}}^i_{A_iB_iX_iY_i|T_i=1} = \bar{\rho}^i_{A_iB_iX_iY_i|T_i=1}\;.
	\]
	To see that this is indeed the case note that the successive application of the projections given in Equation~\eqref{eq:jordan_proj} and of the measurement as applied to create $\bar{\bar{\rho}}^i$ has the exact same effect as applying the measurement alone.
	Indeed,
	\begin{align*}
		\Pi_{a|x}^{A_i} = \sum_{c_i} \Pi_{a|x}^{A_i} \Pi^{A_i}_{c_i}  
	\end{align*}
	and similarly for Bob. Therefore, after tracing out the block registers $C_i$ and $D_i$, both final states are identical.
	
	As the probability for choosing $T_i=0$ is, obviously, independent of when the projection is made the combination of the above statements implies the lemma.
\end{proof}

\begin{customlemma}{\ref{lem:mod_same_stat}}
	The observed statistics and, hence, the probabilities of aborting Protocol~\ref{pro:ent_test} and Protocol~\ref{pro:mod_ent_test} are the same. That is,
	$\rho_{ABXYTW} = \tau_{ABXYTW}$ and
	$\Pr[\neg\Omega]_{\rho} = \Pr[\neg\Omega]_{\tau}$.
\end{customlemma}
\begin{proof}
	The only difference between Protocol~\ref{pro:ent_test} and Protocol~\ref{pro:mod_ent_test} is in the rounds in which $T_i=0$. The observed statistics over $ABXYT$ and, hence, also $W$, depend however only on the rounds in which $T_i=1$. Thus, $\rho_{ABXYTW} = \tau_{ABXYTW}$ . The event $\Omega$ is defined according to the registers $W$ and therefore the lemma follows.
\end{proof}

\begin{customlemma}{\ref{lem:symm_dec}}
	Let $\tau$ denote the state after all rounds of Protocol~\ref{pro:mod_ent_test} (before conditioning on $\Omega$).
	For every~$i\in[n]$ let $K$ include all registers different than $\hat{A}_i \hat{B}_iC_iD_iT_i$ such that we can write 
	\[
		\tau_{\hat{A}\hat{B}ABCDXYTW} = \tau_{\hat{A}_i \hat{B}_iC_iD_iT_iK} \;.
	\] 
	Then,
	\begin{equation*}
		H(\hat{A}_i|\hat{B}_iC_iD_iT_iK) = H(\hat{A}_i|\hat{B}_iC_iD_iT_i) \;.
	\end{equation*}
\end{customlemma}

\begin{proof}
	We prove below that for all $c_i$ and $d_i$ we have
	\begin{equation}\label{eq:symm_lem_cond_cd}
		H(\hat{A}_i|\hat{B}_iT_iK,C_i=c_i,D_i=d_i) = H(\hat{A}_i|\hat{B}_iT_i,C_i=c_i,D_i=d_i) \;.
	\end{equation}
	Then, using the definition of the conditional entropy we write
	\begin{align*}
		H(\hat{A}_i|\hat{B}_iC_iD_iT_iK) &= \sum_{c_i,d_i} p(c_i,d_i) H(\hat{A}_i|\hat{B}_iT_iK,C_i=c_i,D_i=d_i) \\
		&= \sum_{c_i,d_i} p(c_i,d_i)H(\hat{A}_i|\hat{B}_iT_i,C_i=c_i,D_i=d_i) \\
		&= H(\hat{A}_i|\hat{B}_iC_iD_iT_i)
	\end{align*}
	and the lemma follows.
	We are therefore left to prove that Equation~\eqref{eq:symm_lem_cond_cd} holds. 			

	We split the proof to two parts corresponding to the different values of $T_i$.
	When $T_i=1$ the register~$\hat{A}_i$ has a deterministic value. Hence,
	\begin{equation}\label{eq:ent_equiv_test}
		H(\hat{A}_i|\hat{B}_iK,C_i=c_i,D_i=d_i,T_i=1) = H(\hat{A}_i|\hat{B}_i,C_i=c_i,D_i=d_i,T_i=1) = 0 \;.
	\end{equation}
	
	The interesting case is thus $T_i=0$. 
	For $T_i=0$ we can use the Bell diagonal structure of the state as given in Equation~\eqref{eq:bell_mixt}. Consider the purification of the state $\tau_{\hat{A}_i\hat{B}_i|c_i,d_i}$:
	\begin{equation*}
	\begin{split}
		\ket{\tau}_{\hat{A}_i\hat{B}_iF_i}^{c_i,d_i} =& 
		\sqrt{\lambda_{\Phi^+}}\ket{\Phi^+}_{\hat{A}_i\hat{B}_i}\ket{1}_{F_i}	+
		\sqrt{\lambda_{\Phi^-}}\ket{\Phi^-}_{\hat{A}_i\hat{B}_i}\ket{2}_{F_i}\\
		&+ \sqrt{\lambda_{\Psi^+}}\ket{\Psi^+}_{\hat{A}_i\hat{B}_i}\ket{3}_{F_i}	+
		\sqrt{\lambda_{\Psi^-}}\ket{\Psi^-}_{\hat{A}_i\hat{B}_i}\ket{4}_{F_i} \;.
	\end{split}
	\end{equation*}
	
	Note that $K$ does not include the information encoded in $F_i$ by the definition of $K$.
	Thus, including the register $F_i$ we must have 
	\[
		\tau_{\hat{A}_i \hat{B}_iF_iK|C_i=c_i,D_i=d_i,T_i=0} = \tau_{\hat{A}_i \hat{B}_iF_i|T_i=0} \otimes  \tau_{K|C_i=c_i,D_i=d_i,T_i=0} \;.
	\]
	Moreover, we can freely trace $F_i$ out while preserving the tensor product structure
	\[
		\tau_{\hat{A}_i \hat{B}_iK|C_i=c_i,D_i=d_i,T_i=0} = \tau_{\hat{A}_i \hat{B}_i|T_i=0} \otimes  \tau_{K|C_i=c_i,D_i=d_i,T_i=0} \;,
	\]
	from which it follows that 
	\begin{equation}\label{eq:ent_equiv_gen}
		H(\hat{A}_i|\hat{B}_iK,C_i=c_i,D_i=d_i,T_i=0) = H(\hat{A}_i|\hat{B}_i,C_i=c_i,D_i=d_i,T_i=0) \;.
	\end{equation}
	The combination of Equations~\eqref{eq:ent_equiv_test} and~\eqref{eq:ent_equiv_gen}, together with the definition of the conditional entropy, implies the lemma. \qedhere
\end{proof}

\begin{customlemma}{\ref{lem:locc_tau_to_rho}}
	Let $\Gamma$ denote the LOCC protocol used to distill 
	$\ket{\Phi^L}= \frac{1}{\sqrt{L}}\sum_{i=1}^L \ket{i}\ket{i}$
	from $\tau_{|\Omega}$ with error probability $\varepsilon$. 
	Then, there exists another LOCC protocol $\Delta$ which can be used to distill $\ket{\Phi^L}$ from $\rho_{|\Omega}$ with the same error probability.	
\end{customlemma}
\begin{proof}
	The only difference between Protocol~\ref{pro:ent_test} and Protocol~\ref{pro:mod_ent_test} is the addition of Steps~\ref{prostep:state_after_proj} and~\ref{prostep:symm_maps}.
	
	These steps are such that their effect is restricted to the round in which they are applied. That is, whether we apply them or not in round $i$ does not effect all other rounds $j\neq i$. In other words, they commute with the rest of the operations made in the protocol. We can therefore postpone them (for all steps $i$ with $T_i=0$) to the end of the protocol. 
	
	Furthermore, according to Lemma~\ref{lem:mod_same_stat} the additional steps do not change the observed statistics and the probability of the event $\Omega$. Thus, we can also postpone them for after making the projection on $\Omega$. 
	
	Denoting the combination of all the projections and rotations for all relevant rounds by the map~$\Lambda$, the above means that the relation
	\begin{equation}\label{eq:rho_to_tau_LOCC}
		\Lambda\left(\rho_{|\Omega}\right) = \tau_{|\Omega} 
	\end{equation}
	holds. Moreover, $\Lambda$ can be implemented using only LOCC by definition of the Steps~\ref{prostep:state_after_proj} and~\ref{prostep:symm_maps}. 
	
	Let $\Gamma$ denote the LOCC protocol used to distill $\ket{\Phi^L}$ from $\tau_{|\Omega}$ with error probability $\varepsilon$. We define
	\begin{equation}\label{eq:concat_locc}
		\Delta = \Gamma \cdot \Lambda
	\end{equation}
	to be the successive application of $\Lambda$ and $\Gamma$. As both are LOCC protocols, $\Delta$ by itself is an LOCC protocol. 
	From Equations~\eqref{eq:rho_to_tau_LOCC} and~\eqref{eq:concat_locc} it follows that 
	\[
		\Delta\left(\rho_{|\Omega}\right) = \Gamma \left( \Lambda \left(\rho_{|\Omega}\right) \right) = \Gamma \left( \tau_{|\Omega} \right) \;.
	\]
	Additionally, the operation $\Lambda$ alway succeeds (i.e., Equation~\eqref{eq:rho_to_tau_LOCC} is always true). 
	Therefore, $\Delta$ is an LOCC protocol  which can be used to distill $\ket{\Phi^L}$ from $\rho_{|\Omega}$ with the same error probability as~$\Gamma$.
\end{proof}

\section{Proofs of the lemmas in Section~\ref{sec:single_round_ent}}\label{sec:proofs_sing_ent}

\begin{customlemma}{\ref{lem:single_ent_diagonal_Bell}}
	For any Bell diagonal state $\sigma_{\hat{A}_i\hat{B}_i}$ as in Equation~\eqref{eq:bell_mixt}  that can be used to violate the CHSH inequality with violation $\beta\in\left[2,2\sqrt{2}\right]$,
	\begin{equation*}
		H(\hat{A}_i|\hat{B}_i)_{\sigma} \leq \mathscr{H}(\beta) - 1 \;,
	\end{equation*}
	where
	\begin{equation}\label{eq:H_of_beta_apnd}
		\mathscr{H}(\beta) = 2h\left(\frac{1}{2}-\frac{\beta}{4\sqrt{2}}\right) 
	\end{equation}
	and $h$ is the binary entropy function.
\end{customlemma}
\begin{proof}
	First note that 
	\begin{equation}\label{eq:cond_ent_to_comb}
		H(\hat{A}_i|\hat{B}_i)  = H(\hat{A}_i\hat{B}_i) - H(\hat{B}_i) = H(\hat{A}_i\hat{B}_i) -1 
	\end{equation}
	since the marginal $\sigma_{\hat{B}_i}$ is a completely mixed qubit state. 
	
	We are left to upper bound $H(\hat{A}_i\hat{B}_i)$ which, given that the state is Bell diagonal, is simply the Shannon entropy $H(\vec{\lambda})$ of the probability distribution $\vec{\lambda}=(\lambda_{\Phi^+},\lambda_{\Phi^-},\lambda_{\Psi^+},\lambda_{\Psi^-})$ defined by the four eigenvalues of $\sigma$. Hence, our goal is to maximise $H(\vec{\lambda})$ under the constraint of having the correct Bell violation $\beta$.  
	
	Following~\cite[Lemma 7]{pironio2009device} and~\cite[Equation (21)]{horodecki1995violating} we translate the constraint on the violation of the state to the following constraint on the eigenvalues:
	\begin{equation}\label{eq:bell_val_const}
	\begin{split}
		\beta = \max \Big\{ 
					& 2\sqrt{2} \sqrt{(\lambda_{\Phi^+}-\lambda_{\Psi^+})^2+(\lambda_{\Phi^-}-\lambda_{\Psi^-})^2} \;, \\
					& 2\sqrt{2} \sqrt{(\lambda_{\Phi^+}-\lambda_{\Psi^-})^2+(\lambda_{\Phi^-}-\lambda_{\Psi^+})^2} \;, \\ 
					& 2\sqrt{2} \sqrt{(\lambda_{\Phi^+}-\lambda_{\Phi^-})^2+(\lambda_{\Psi^+}-\lambda_{\Psi^-})^2} 
				\Big\} \;.
	\end{split}
	\end{equation}
	
	To simplify the constraint we observe that both the objective function $H(\vec{\lambda})$ and the above constraint are invariant under the exchange of the eigenvalues with one another. Thus, we can assume without loss of generality that the optimal solution is restricted by, say, the first term in Equation~\eqref{eq:bell_val_const}. That is, the eigenvalues of the state that maximise the entropy are such that 
	\begin{align*}
		&\beta = 2\sqrt{2} \sqrt{(\lambda_{\Phi^+}-\lambda_{\Psi^+})^2+(\lambda_{\Phi^-}-\lambda_{\Psi^-})^2} \\
		&\beta \leq 2\sqrt{2} \sqrt{(\lambda_{\Phi^+}-\lambda_{\Psi^-})^2+(\lambda_{\Phi^-}-\lambda_{\Psi^+})^2} \\
		&\beta \leq 2\sqrt{2} \sqrt{(\lambda_{\Phi^+}-\lambda_{\Phi^-})^2+(\lambda_{\Psi^+}-\lambda_{\Psi^-})^2}. 
	\end{align*}
	To see that this is indeed the case one can assume by contradiction that the second term in Equation~\eqref{eq:bell_val_const}, and not the first one, is the one restricting the optimal solution. Then, by exchanging the values of $\lambda_{\Psi^+}$ with $\lambda_{\Psi^-}$ we get a different state, which is restricted by the first term instead of the second one, but attains the same value $H(\vec{\lambda})$. Hence, the solution defined by this exchange must be an optimal solution as well and we can work with it instead. 
	
	We can therefore restrict our attention to the optimisation problem 
	\begin{equation}\label{eq:optimisation_prob}
	\begin{split}
		\underset{\vec{\lambda}}{\max} \quad &H(\vec{\lambda}) \\
		\text{s.t.} \quad &\beta = 2\sqrt{2} \sqrt{(\lambda_{\Phi^+}-\lambda_{\Psi^+})^2+(\lambda_{\Phi^-}-\lambda_{\Psi^-})^2} \\
		& \lambda_{\Phi^+},\lambda_{\Psi^+},\lambda_{\Phi^-},\lambda_{\Psi^-}\geq 0 \\
		& \lambda_{\Phi^+}+\lambda_{\Psi^+}+\lambda_{\Phi^-}+\lambda_{\Psi^-} =1
	\end{split}
	\end{equation}

	The constraints stated in the optimisation problem given as Equation~\eqref{eq:optimisation_prob} imply the constraint:
	\begin{equation}\label{eq:comb_const}
		\lambda_{\Phi^+} = \frac{1}{2}\left(1- \lambda_{\Phi^-} - \lambda_{\Psi^-} + \sqrt{\frac{\beta^2}{8} - \left(\lambda_{\Phi^-} - \lambda_{\Psi^-}\right)^2}\right) \;.
	\end{equation}
	Hence, $H(\vec{\lambda})$ can be written as a function of only two variables, $\lambda_{\Phi^-}$ and $\lambda_{\Psi^-}$, for any value of~$\beta$.

	Let us identify the region of interest in this plane. First, we have the linear conditions
	\begin{eqnarray}
	\lambda_{\Phi^-},\lambda_{\Psi^-} \geq 0 \label{eq:pos}\\
	\lambda_{\Phi^-} + \lambda_{\Psi^-} \leq 1 \label{eq:sum}
	\end{eqnarray}
	Noticing that the optimisation \eqref{eq:optimisation_prob} is invariant under exchange of $(\lambda_{\Phi^+},\lambda_{\Psi^+})$ and $(\lambda_{\Phi^-},\lambda_{\Psi^-})$, we can restrict our attention to solutions for which $(\lambda_{\Phi^-}-\lambda_{\Psi^-})^2 \leq (\lambda_{\Phi^+}-\lambda_{\Psi^+})^2$. Expressed in terms of our two independent variable $\lambda_{\Phi^-}$ and $\lambda_{\Psi^-}$, this condition is equivalent to
	\begin{equation}\label{eq:condition1}
		|\lambda_{\Phi^-} - \lambda_{\Psi^-}|\leq\frac{\beta}{4}.
	\end{equation}
	This condition defines a diagonal band of interest in the space of $\lambda_{\Phi^-}$ and $\lambda_{\Psi^-}$.
	
	Using the condition $\lambda_{\Phi^+}\geq 0$, we further restrict our region of interest:
	\begin{equation}\label{eq:condition2}
		(\lambda_{\Phi^-}-\frac12)^2 + (\lambda_{\Psi^-}-\frac12)^2 \geq \left(\frac{\beta}{4}\right)^2
	\end{equation}
	This condition describes the outside of a disk centred at $\lambda_{\Phi^-}=\lambda_{\Psi^-}=\frac12$. For all $\beta>2$, this identifies a unique region within the set defined by~\eqref{eq:pos}-\eqref{eq:condition1}. Our objective function, the entropy $H(\vec{\lambda})$ is presented in Figure~\ref{fig:H_lambda} on our region of interest for $\beta=2.5$.
	
	The above optimisation can thus be re-expressed in terms of only two variables as
	\begin{equation}\label{eq:optimisation_prob2}
	\begin{split}
		\underset{\lambda_{\Phi^-},\lambda_{\Psi^-}}{\max} \quad &H(\vec{\lambda}) \\
		\text{s.t.} \quad & \lambda_{\Phi^+} = \frac{1}{2}\left(1- \lambda_{\Phi^-} - \lambda_{\Psi^-} + \sqrt{\frac{\beta^2}{8} - \left(\lambda_{\Phi^-} - \lambda_{\Psi^-}\right)^2}\right) \\
		& \lambda_{\Psi^+} = \frac{1}{2}\left(1- \lambda_{\Phi^-} - \lambda_{\Psi^-} - \sqrt{\frac{\beta^2}{8} - \left(\lambda_{\Phi^-} - \lambda_{\Psi^-}\right)^2}\right) \\
		& \lambda_{\Phi^-},\lambda_{\Psi^-}\geq 0 \\
		& (\lambda_{\Phi^-}-\frac12)^2 + (\lambda_{\Psi^-}-\frac12)^2 \geq \left(\frac{\beta}{4}\right)^2.
	\end{split}
	\end{equation}
	
	One can solve this optimisation problem numerically within this region; the solution is given by  
	\begin{equation}\label{eq:opt_sol}
	\begin{split}
		&\lambda_{\Phi^+}^*=  \left(\frac{1}{2}-\frac{\beta}{4\sqrt{2}}\right)^2 \;; \\
		&\lambda_{\Psi^+}^* = \left(\frac{1}{2}+\frac{\beta}{4\sqrt{2}}\right)^2 \;; \\
		&\lambda_{\Phi^-}^* = \lambda_{\Psi^-}^* = \left(\frac{1}{2}-\frac{\beta}{4\sqrt{2}}\right)\left(\frac{1}{2}+\frac{\beta}{4\sqrt{2}}\right)  \;.
	\end{split}
	\end{equation}
	The white point in Figure~\ref{fig:H_lambda} denotes this solution.
	
	Now that we have a solution in hand, we can verify that it is indeed a correct solution (i.e., there are no numerical errors) which is locally optimal. To do so we need to check that the local gradient vanishes at our suggested solution in order to verify that it is indeed the maxima.
	
	The objective function of~\eqref{eq:optimisation_prob2} is invariant under the exchange $\lambda_{\Phi^-}\leftrightarrow\lambda_{\Psi^-}$. Its derivative in the direction $\lambda_{\Psi^-}-\lambda_{\Phi^-}$ must thus vanish at this point. To conclude about the local optimality of this solution, we are left to show that the derivative, on the curve $\lambda_{\Phi^-}=\lambda_{\Psi^-}$, vanishes at the point given in Equation~\eqref{eq:opt_sol}.
	
	Under the constraint $\lambda_{\Phi^-}=\lambda_{\Psi^-}$, $H(\vec{\lambda})$ can be written as a function of the single variable, say, $\lambda_{\Psi^-}$. Explicitly, it reads
	\begin{equation*}
	\begin{split}
		H(\vec{\lambda}) =  - 2 \lambda_{\Psi^-}\log\left(\lambda_{\Psi^-}\right)
		&-\left(\frac{1}{2}-\lambda_{\Psi^-} + \frac{\beta}{4\sqrt{2}}\right)\log\left(\frac{1}{2}-\lambda_{\Psi^-} + \frac{\beta}{4\sqrt{2}}\right) \\
		&- \left(\frac{1}{2}-\lambda_{\Psi^-} - \frac{\beta}{4\sqrt{2}}\right)\log\left(\frac{1}{2}-\lambda_{\Psi^-} - \frac{\beta}{4\sqrt{2}}\right)\;.
	\end{split}
	\end{equation*}
	
	Thus, 
	\[
		\frac{\mathrm{d}H(\vec{\lambda})}{\mathrm{d}\lambda_{\Psi^-}} = \log\left(\left(\frac{1}{2}-\lambda_{\Psi^-}\right)^2-\frac{\beta}{32}\right) - 2\log\left(\lambda_{\Psi^-}\right)
	\]
	and we have $\frac{\mathrm{d}H(\vec{\lambda})}{\mathrm{d}\lambda_{\Psi^-}} \Big|_{\lambda_{\Psi^-}=\lambda_{\Psi^-}^*} =0$ 
	for $\lambda_{\Psi^-}^*$ as in Equation~\eqref{eq:opt_sol}, as we were set to show.
	
	Therefore, we have identified a local optimum for our objective function. According to the plot given in Figure~\ref{fig:H_lambda}, this local optimum is also a global optimum for Equation~\eqref{eq:optimisation_prob2} and hence Equation~\eqref{eq:optimisation_prob}. As the values of the optimal eigenvalues, given in Equation~\eqref{eq:opt_sol}, depend only on the CHSH violation~$\beta$ we can write $H(\hat{A}_i\hat{B}_i) = H(\vec{\lambda}^*)=\mathscr{H}(\beta)$ as in Equation~\eqref{eq:H_of_beta_apnd}.
	Combined with Equation~\eqref{eq:cond_ent_to_comb} the lemma follows. \qedhere

\end{proof}

\begin{customlemma}{\ref{lem:single_ent_convex_comb_Bell}}
	For any $\beta\in\left[2,2\sqrt{2}\right]$ let $\mathscr{H}(\beta)$ be as in Equation~\eqref{eq:H_of_beta_apnd}. 
	Then, for any state $\sigma_{\hat{A}_i\hat{B}_iC_iD_i}$ as in Equation~\eqref{eq:mixt_of_bell_mix} that can be used to violate the CHSH inequality with violation $\beta$,
	\begin{equation*}
		H(\hat{A}_i|\hat{B}_iC_iD_i)_{\sigma} \leq \mathscr{H}(\beta) - 1 \;.
	\end{equation*}
\end{customlemma}

\begin{proof}
	We use the convex combination structure of $\sigma_{\hat{A}_i\hat{B}_iC_iD_i}$ given in Equation~\eqref{eq:mixt_of_bell_mix}. 
	Let $\beta^{c_id_i}$ be the violation of the conditional state $\sigma_{|c_i,d_i}$.  Then, 
	\begin{align*}
		H(\hat{A}_i|\hat{B}_iC_iD_i) &= \sum_{c_i,d_i} p(c_i,d_i)H(\hat{A}_i|\hat{B}_i,C_i=c_i,D_i=d_i) \\
		&\leq \sum_{c_i,d_i} p(c_i,d_i) \left(H(\beta^{c_id_i}) -1 \right)\\
		&\leq H\left(\sum_{c_i,d_i} p(c_i,d_i)  \beta^{c_id_i}\right) -1 \\
		&=  \mathscr{H}(\beta) -1 \;,
	\end{align*}
	where the first equality follows from the definition of the conditional entropy, the second step holds due to  Lemma~\ref{lem:single_ent_diagonal_Bell}, in the third step we used the concavity of $\mathscr{H}(\beta)$, and the last step follows from $\beta = \sum_{c_i,d_i} p(c_i,d_i) \beta^{c_id_i}$.
\end{proof}

\begin{customlemma}{\ref{lem:final_single_round_ent}}
	For any $\omega_{\mathrm{th}}\in\left[\frac{3}{4},\frac{2+\sqrt{2}}{4}\right]$, let $\Sigma = \{ \sigma | w\left(\sigma\right) \geq \omega_{\mathrm{th}}  \}$ . Then,
	\begin{equation*}
		\sup_{\sigma \in \Sigma} H(\hat{A}_i|\hat{B}_iA_iB_iC_iD_iX_iY_i)_{\mathcal{N}_i(\sigma)} \leq \left(1-\gamma \right) \cdot g(\omega_{\mathrm{th}}) \;,
	\end{equation*}
	where 
	\[
		g(\omega_{\mathrm{th}})=2h\left(\frac{1}{2}-\frac{2\omega_{\mathrm{th}} -1}{\sqrt{2}}\right) -1
	\]
	 and	$h$ is the binary entropy function.
\end{customlemma}
\begin{proof}
	We start with the following sequence of equalities that holds for any $\sigma$:
	\begin{align}
		H(\hat{A}_i|\hat{B}_iA_iB_iC_iD_iX_iY_i)_{\mathcal{N}_i(\sigma)} =& \Pr[A_i,B_i=\perp]\cdot H(\hat{A}_i|\hat{B}_iX_iY_iC_iD_i,A_i=\perp,B_i=\perp)_{\mathcal{N}_i(\sigma)} \nonumber\\
		& + \Pr[A_i,B_i\neq\perp]\cdot H(\hat{A}_i|\hat{B}_iX_iY_iC_iD_i,A_i\neq\perp,B_i\neq\perp)_{\mathcal{N}_i(\sigma)} \label{eq:vN_step1} \\
		=& \Pr[T_i=0]\cdot H(\hat{A}_i|\hat{B}_iA_iB_iC_iD_iX_iY_i,T_i=0)_{\mathcal{N}_i(\sigma)} \nonumber\\
		& + \Pr[T_i=1]\cdot H(\hat{A}_i|\hat{B}_iA_iB_iC_iD_iX_iY_i,T_i=1)_{\mathcal{N}_i(\sigma)} \label{eq:vN_step2} \\ 
		=& (1-\gamma)\cdot H(\hat{A}_i|\hat{B}_iA_iB_iC_iD_iX_iY_i,T_i=0)_{\mathcal{N}_i(\sigma)} \nonumber \\
		& +\gamma \cdot H(\hat{A}_i|\hat{B}_iA_iB_iC_iD_iX_iY_i,T_i=1)_{\mathcal{N}_i(\sigma)} \label{eq:vN_step3}\\
		=&(1-\gamma)\cdot H(\hat{A}_i|\hat{B}_iA_iB_iC_iD_iX_iY_i,T_i=0)_{\mathcal{N}_i(\sigma)} \label{eq:vN_step4} \\
		=&(1-\gamma)\cdot H(\hat{A}_i|\hat{B}_iC_iD_i,T_i=0)_{\mathcal{N}_i(\sigma)} \label{eq:vN_step5}
	\end{align}

	To get Equation~\eqref{eq:vN_step1} we used the definition of the conditional entropy. Equations~\eqref{eq:vN_step2} and~\eqref{eq:vN_step3} follow from the definition of Protocol~\ref{pro:mod_ent_test}. 
	Equation~\eqref{eq:vN_step4} holds since, by definition, the registers $\hat{A}_i\hat{B}_i$ have a deterministic initialisation value when a test is performed (nothing is kept in them) and hence $H(\hat{A}_i|\hat{B}_iA_iB_iC_iD_iX_iY_i,T_i=1)_{\mathcal{N}_i(\sigma)}=0$. 
	Finally, Equation~\eqref{eq:vN_step5} follows from the fact that when $T_i=0$ $A_iB_iX_iY_i$ are all $\perp$.
	
	The lemma then follows by combining Equation~\eqref{eq:vN_step5} and  Lemma~\ref{lem:single_ent_convex_comb_Bell} while using the transformation $\beta = 8\omega_{\mathrm{th}}-4$ to replace $\beta$ with $\omega_{\mathrm{th}}$ in Equation~\eqref{eq:H_of_beta}.
\end{proof}

\section{Proofs of the lemmas in Section~\ref{sec:upper_bound_max_ent}}\label{sec:proofs_smo_ent}

\begin{customlemma}{\ref{lem:markov_cond}}
	For all $i\in[n]$ and any initial state,
	\[
		\hat{A}_{1}^{i-1} \leftrightarrow  \hat{B}_{1}^{i-1}A_{1}^{i-1}B_{1}^{i-1}C_{1}^{i-1}D_{1}^{i-1}X_{1}^{i-1}Y_{1}^{i-1} E  \leftrightarrow \hat{B}_iA_iB_iC_iD_iX_iY_i 
	\]
	holds for the final state  $\tau$ of Protocol~\ref{pro:mod_ent_test}.
\end{customlemma}
\begin{proof}
	The Markov-chain conditions can be written using the mutual information as 
	\[
		I\left(\hat{A}_{1}^{i-1} : \hat{B}_iA_iB_iC_iD_iX_iY_i | \hat{B}_{1}^{i-1}A_{1}^{i-1}B_{1}^{i-1}A_{1}^{i-1}B_{1}^{i-1}X_{1}^{i-1}Y_{1}^{i-1} E\right) =0 \;.
	\]
	
	Using the chain rule for the mutual information the above equation can be alternatively written~as 
	\begin{equation*}
		 \sum_{j\in [i-1]} I\left(\hat{A}_j : \hat{B}_iA_iB_iC_iD_iX_iY_i | \hat{B}_{1}^{i-1}A_{1}^{i-1}B_{1}^{i-1}A_{1}^{i-1}B_{1}^{i-1}X_{1}^{i-1}Y_{1}^{i-1} E \hat{A}_{1}^{j-1} \right) = 0 \;.
	\end{equation*}
	
	We now prove that for all $i\in[n]$ and $j\in [i-1]$
	\begin{equation*}\label{eq:mutual_info_j}
		I\left(\hat{A}_j : \hat{B}_iA_iB_iC_iD_iX_iY_i | \hat{B}_{j}A_{j}B_{j}C_{j}D_{j}X_{j}Y_{j}  \right) = 0 \;,
	\end{equation*}
	which, in turn, implies the lemma.
	
	To this end, we use the following sequence of equations
	\begin{align}\label{eq:ent_ind_for_markov}
		 H\left(\hat{A}_j | \hat{B}_{j}A_{j}B_{j}C_{j}D_{j}X_{j}Y_{j}  \right) 	
		&\leq H\left(\hat{A}_j | \hat{B}_{j} C_{j}D_{j}T_{j}  \right) \nonumber \\
		&=H(\hat{A}_j|\hat{B}_jC_{j}D_{j}T_jK) \nonumber\\
		&\leq H(\hat{A}_j|\hat{B}_jC_{j}D_{j}T_j A_{j}B_{j}X_{j}Y_{j} \hat{B}_iA_iB_iC_iD_iX_iY_i) \;,
	\end{align}
	where $K$ includes all registers of $\tau$ apart from~$\hat{A}_j\hat{B}_jC_{j}D_{j}T_j$ such that we can write
	$\tau_{\hat{A}\hat{B}ABCDXYTW} = \tau_{\hat{A}_j \hat{B}_jC_jD_jT_jK}$. In particular, $K$ includes  
	$A_{j}B_{j}X_{j}Y_{j} \hat{B_i}A_iB_iC_iD_iX_iY_i$. 
	The first and last steps above are due to strong subadditivity of the entropy and the middle one follows from Lemma~\ref{lem:symm_dec}.
	
	Using the definition of the mutual information and Equation~\eqref{eq:ent_ind_for_markov} we get
	\begin{align*}
		I\left(\hat{A}_j : \hat{B}_iA_iB_iC_iD_iX_iY_i | \hat{B}_{j}A_{j}B_{j}C_{j}D_{j}X_{j}Y_{j}  \right) =\;& H\left(\hat{A}_j | \hat{B}_{j}A_{j}B_{j}C_{j}D_{j}X_{j}Y_{j}  \right) \\
		&- H\left(\hat{A}_j | \hat{B}_{j}A_{j}B_{j}C_{j}D_{j}X_{j}Y_{j} \hat{B}_iA_iB_iC_iD_iX_iY_i \right) \\
		\leq\;& H(\hat{A}_j|\hat{B}_jT_j A_{j}B_{j}C_{j}D_{j}X_{j}Y_{j} \hat{B}_iA_iB_iC_iD_iX_iY_i)\\
		&- H\left(\hat{A}_j | \hat{B}_{j}A_{j}B_{j}C_{j}D_{j}X_{j}Y_{j} \hat{B}_iA_iB_iC_iD_iX_iY_i \right) \\
		=\;& 0 
	\end{align*}
	and the lemma follows.
\end{proof}


	
\bibliographystyle{alpha}
\bibliography{refs.bib}

\end{document}